\newtheorem{thm}{Theorem}[section]
\newtheorem{defin}[thm]{Definition}
\newtheorem{lem}[thm]{Lemma}
\newtheorem{cor}[thm]{Corollary}
\newtheorem{examp}[thm]{Example}
\newtheorem{propst}[thm]{Proposition}
\def\eqdef{\buildrel{\rm def}\over =}
\def\Q{\mathbb{Q}}
\def\R{\mathbb{R}}
\begin{document}

\everymath{\displaystyle}


\begin{frontmatter}

\title{Computing the intersection of two quadrics \\ through projection and lifting}

\author[label1]{Alexandre Trocado}
\address[label1]{Universidade Aberta, Portugal}
\ead{mail@alexandretrocado.com}

\author[label5]{Laureano Gonzalez-Vega\footnote{Partially supported by the Spanish Ministerio de Ciencia, Innovacci\'on y Universidades under the Project MMTM2017-88796-P.}}
\address[label5]{Universidad de Cantabria, Spain}
\ead{laureano.gonzalez@unican.es}

\begin{abstract} This paper is devoted to presenting a new approach to determine the intersection of two quadrics based on the detailed analysis of its projection in the plane (the so called cutcurve) allowing to perform the corresponding lifting correctly. This approach is based on a new computational characterisation of the singular points of the cutcurve and on how this curve is located with respect to the projection of the considered quadrics (whose boundaries are the so called silhouette curves).
\end{abstract}

\begin{keyword}
Quadrics \sep Intersection curve \sep Cutcurve \sep Subresultants \sep Lifting.
\end{keyword}

\end{frontmatter}


\section{Introduction}\label{sec1}
Quadrics are the simplest curved surfaces used in many areas and computing their intersection is a relevant problem. Algorithms dealing with this problem based on floating point arithmetic techniques are sensitive to rounding errors achieving a low running time to the detriment of their correctness. On the other hand, using symbolic methods guarantees the correctness of the results because they are based on exact arithmetic (if the considered quadrics are defined in exact terms) but their performance is typically and significantly lower than using methods based on numerically techniques  (\cite{Berberich:2005:ECE:1064092.1064110,Sendra1999}). 

Levin (\cite{Levin:1976:PAD:360349.360355},\cite{LEVIN197973}) developed a method to parameterize the intersection curve of two quadrics based on the analysis of the pencil generated by them. Wilf \textit{et al.} (\cite{Wilf1993}) improved Levin's ruled-surface parameterization scheme. However, Levin's method often fails to find the intersection curve when it is singular and generates a parameterization that involves the square root of some polynomial (\cite{DUPONT2008168}). Also, when working with floating point numbers, sometimes Levin's method outputs results that are topologically wrong and even fail to produce any parameterization (\cite{DUPONT2008168}). Wang \textit{et al.} (\cite{Wang2002}) reduced the computation of the intersection curve to the analysis of plane cubic curves. Farouki \textit{et al.} (\cite{Farouki1989}) made a complete study of the degenerated cases of quadric intersection by using factorization of multivariate polynomials and Segre characteristic. This method showed the exact parameterization of the intersection curve in many cases.

Later Wang \textit{et al}. (\cite{WANG2003401}) improved Levin's method making it capable of computing geometric and structural information - irreducibility, singularity and the number of connected components. Dupont \textit{et al.} (\cite{DUPONT2008168}, \cite{Dupont2008c}, \cite{Dupont2008c3}) presented a near-optimal algorithm for computing the explicit representation of the intersection of two arbitrary quadrics whose coefficients are rational numbers in the projective space by using the reduction of quadratic forms and producing new results characterising the intersection curve of two quadrics. The performance of this algorithm was analysed in \cite{LAZARD200674}. 

Others have restricted the kind of quadrics to be considered and defined specific routines to each case (\cite{Miller1987}, \cite{Goldman:1991:CAR:112515.112545}, \cite{MILLER199555}, \cite{JOHNSTONE1992179}, \cite{Shene:1994:LDI:195826.197316}) taking advantage of the fact that a geometric approach is typically more stable than the algebraic ones (\cite{DUPONT2008168}). However these approaches are limited to planar intersections and natural quadrics. Mourrain et al. (\cite{Mourrain2005}) proposed an algorithm that reduces the intersection of two quadrics to a dynamic two--dimensional problem.

An alternative way to compute the intersection of two quadric surfaces in the three-dimensional space is based on analysing its projection onto one plane (\cite{Geismann:2001:CCA:378583.378689}). The idea of this method is to reduce the three-dimensional problem to computing the arrangement of three plane algebraic curves defined implicitely. After determining and analysing the projection of the intersection curve onto a plane, the intersection curve can be recovered by determining the lifting of the projection curve. Implementation and theoretical aspects of this approach are also described in \cite{Berberich:2005:ECE:1064092.1064110} and \cite{Schomer2006}, respectively.

In this paper, a new method is presented to determine the intersection curve of two quadrics through projection onto a plane and lifting. In some cases, it will be possible to determine the exact parameterization of the intersection curve (involving radicals if needed) and, in others, the output (topologically correct) will be presented as the lifting of the discretisation of the branches of the projection curve once  its singular points have been fully determined. The way the lifting will be made is the main criteria followed to analyse the cutcurve.

This paper is organized as follows. In Section 2, we briefly review some preliminaries on conics and quadrics. In Section 3 some mathematical tools as resultants and subresultants are briefly presented for sake of completeness. Resultants are used in Section 4 to characterise the projection of the intersection curve (called, in what follows, the cutcurve) by using a bivariate polynomial of degree four, at most. Our approach is based on the analysis of the arrangement of the cutcurve and the silhouette of both quadrics, as in Figure \ref{admiss}, following \cite{Berberich:2005:ECE:1064092.1064110} and \cite{Schomer2006}. Section 5 is devoted to introduce simpler methods to characterise the singular points of the cutcurve as well as its lifting by using the subresultants. Some examples are given in Section 6 where the results of the implementation in Maple are reported and the conclusions are presented in Section 7.

In order to simplify the description of the algorithm and to avoid a long case-by-case analysis we will deal only with quadrics whose defining equations have the shape $z^2+p_1(x,y)z+p_0(x,y)$ and $z^2+q_1(x,y)z+q_0(x,y)$. The general case is very easy to derive from what we describe here.

\begin{figure}[hbt]
\centering
\fbox{\includegraphics[scale=0.4]{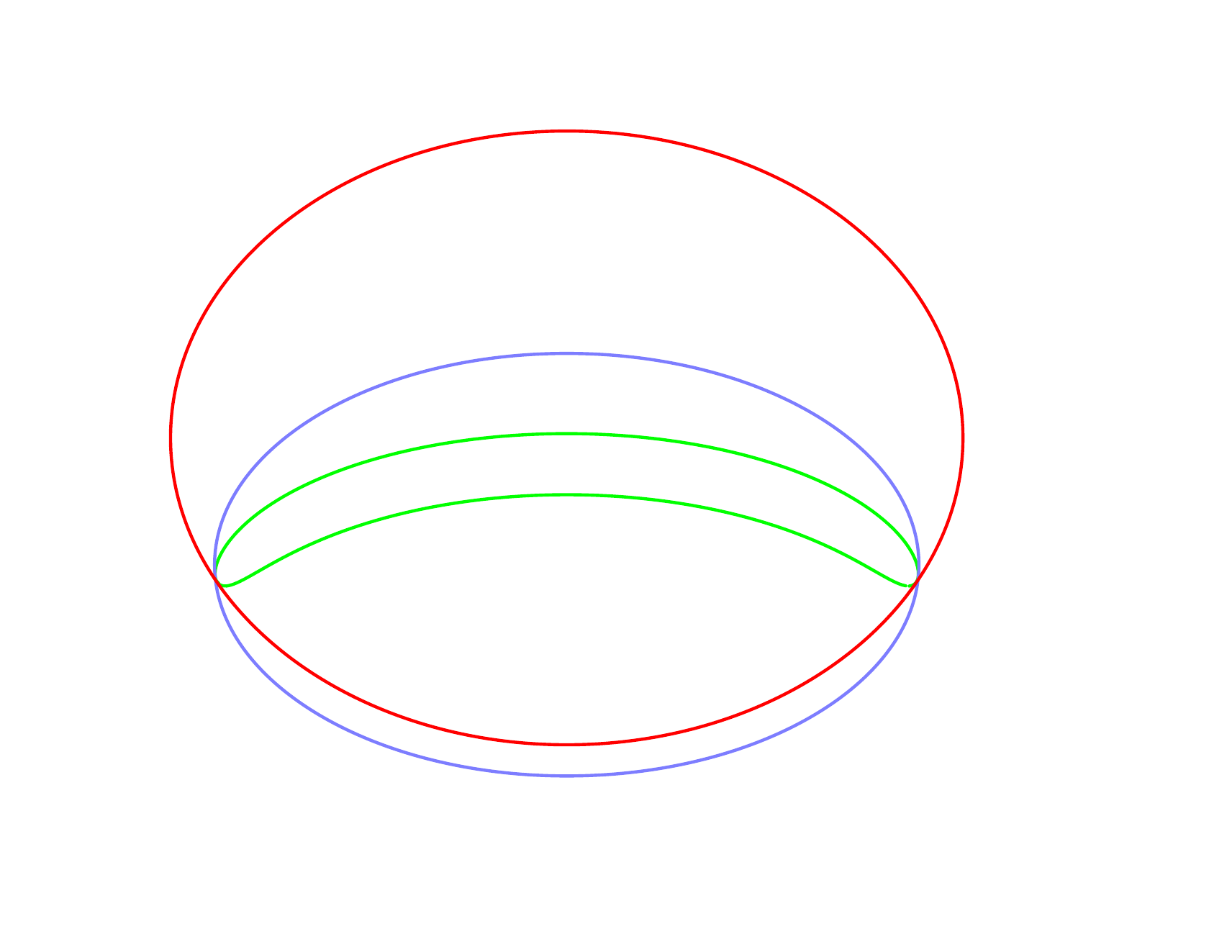}}\hskip 2cm
\fbox{\includegraphics[scale=0.45]{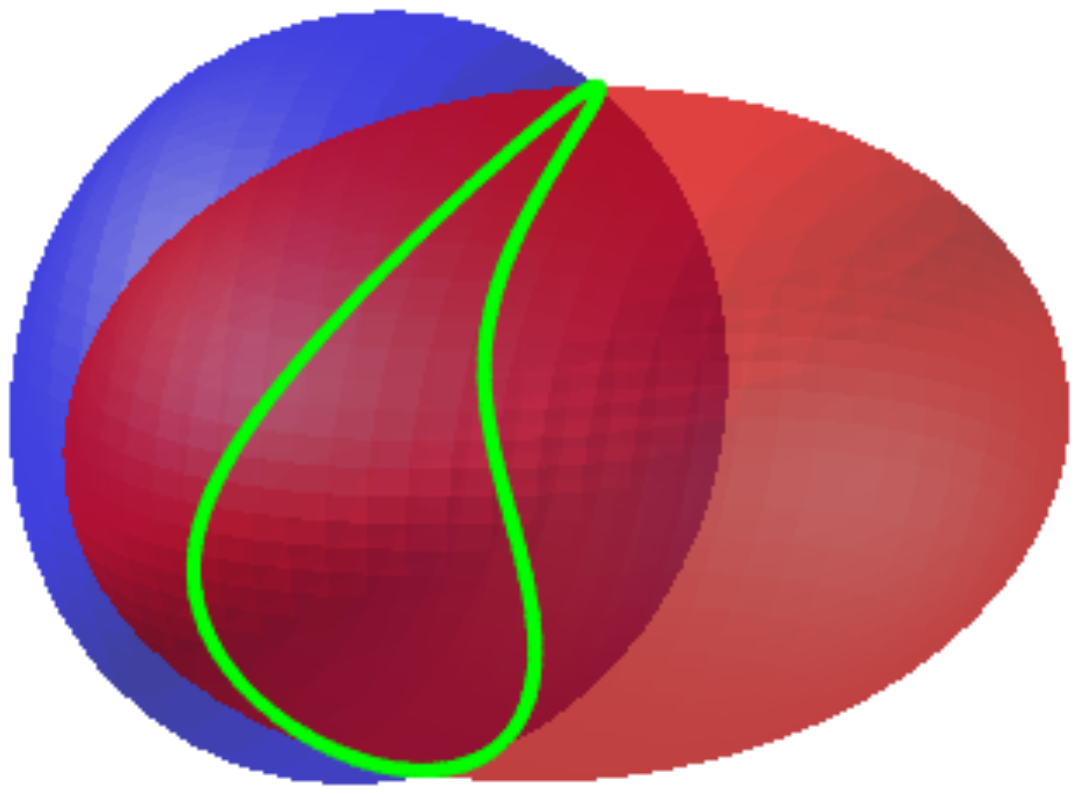}}
\caption{\label{admiss} Left: Silhouette curves of  two quadrics (in red and blue) and the cutcurve (in green) in the plane. Right: the intersection curve and the two quadrics}
\end{figure}

\section{Representing quadrics (and conics)}
This section is devoted to introduce how quadrics will be represented when computing their intersection curve. Since we will project the considered quadrics onto the $xy$ plane and the boundary of this region will be a finite number of conic arcs we introduce here how these regions will be represented and manipulated.

\subsection{Quadrics and conics}
Quadrics are the one of the simplest  surfaces defined by  degree two  polynomials in $x$, $y$ and $z$. The equation of any quadric $\mathcal{A}$ in $\R^3$ can be written as
$$a_{11}x^{2}+a_{22}y^{2}+a_{33}z^{2}
+2a_{12}xy+2a_{13}xz+2a_{23}yz
+2a_{14}x+2a_{24}y+2a_{34}z
+a_{44}=0$$
or in matricial form
$\begin{pmatrix}
   x & y  & z & 1 \\
\end{pmatrix}A\begin{pmatrix}
   x & y & z  & 1 \\
\end{pmatrix}^{T}=0$
where $A$ is the symmetric matrix
$$A=\begin{pmatrix}
   a_{11} & a_{12} & a_{13} & a_{14} \\
   a_{12} & a_{22} & a_{23} & a_{24} \\
   a_{13} & a_{23} & a_{33} & a_{34} \\
   a_{14} & a_{24} & a_{34} & a_{44} \\
\end{pmatrix}.$$

For conics the treatment is analogous (and standard). They will be used later to define the boundary of regions in the plane, typically the region where the projection of the intersection curve of the two considered curves live. In what follows we assume that it is easy to determine the intersection points of two conics and to manipulate the regions of the plane defined by two inequalities involving degree two or one polynomials.

\section{Mathematical tools} \label{sec2}
In this section we will make a brief introduction to resultants and subresultants and how they will be applied later to compute the intersection curve between two quadrics.

\subsection{Resultants}
Resultants and subresultants will be the algebraic tool to use to determine, both, the projection of the intersection curve between the two considered quadrics and its lifting from the plane to the space since they allow a very easy and compact way to characterise the greatest common divisor of two polynomials ($f(x,y,z)$ and $g(x,y,z)$ in our case) when they involve parameters ($x$ and $y$ in our case, since we are going to eliminate $z$).

\begin{defin}\label{subres}
Let  $$P(T)=\sum_{i=0}^m a_{m-i}T^i\qquad \hbox{and} \qquad Q(T)=\sum_{i=0}^n b_{n-i}T^i$$ be two polynomials with coefficients in a field ($\Q$ or $\R$ in our case).  We define the $j$--th subresultant polynomial of $P$ and $Q$ with respect to $T$ in the following way (as in \cite{Li2006a}):
\[\mbox{\bf Sres}_j(P,Q;T)=
\left\vert
\begin{array}{cccccccccc}
a_0 & a_1   &  a_2  &\ldots  &\ldots&a_m    &&&\\
       &\ddots&\ddots &\ddots&         &&\ddots&&\\
       &          &  a_0  & a_1 &a_2 &\ldots  &\ldots&a_m    &\\
       &          &           &       &       &    1     &-T&&\\
       &          &           &       &       &          &\ddots&\ddots&\\
       &          &           &       &       &&&    1     &-T\\
b_0 & b_1   &  b_2  &\ldots  &\ldots&\ldots    &b_n&&\\
       &\ddots&\ddots &\ddots&         &&&\ddots&\\
       &          &  b_0  & b_1 &b_2 &\ldots  &\ldots&\ldots&b_n
\end{array}
\right\vert
\!\!\!\!\!\!\!\
\begin{array}{l}
\left.\begin{array}{c}\\ \\ \\  \\  \end{array}\right\}n-j
\\
\left.\begin{array}{c}\\ \\ \\  \end{array}\right\} j
\\
\left.\begin{array}{c}\\ \\ \\  \\  \end{array}\right\}m-j
\end{array}
\]
and we define the $j$--th subresultant coefficient  of $P$ and $Q$ with respect to $T$,
$\mathrm{\bf sres}_j(P,Q;T)$, as the coefficient of $T^j$ in $\mathrm{\bf Sres}_j(P,Q;T)$.
The resultant of $P$ and $Q$ with respect to $T$ is:
$$\mathrm{\bf Resultant}(P,Q;T)= \mathrm{\bf Sres}_0(P,Q;T)=\mathrm{\bf sres}_0(P,Q;T)\enspace.$$ \end{defin}

There are many different ways of defining and computing subresultants: see \cite{Gonzalez-Vega2009} for a short introduction and for a pointer to several references.

Subresultants allow to characterise easily the degree of the greatest common divisor of two univariate polynomials whose coefficients depend on one or several parameters. Since the resultant of $P$ and $Q$ is equal to
the polynomial ${\bf sres}_{0}(P,Q;T)$,  
${\bf sres}_0(P,Q;T)= 0$ if and only if there exists $T_0$ such that $P(T_0)=0$ and $Q(T_0)=0$. 

More generally, the determinants ${\bf sres}_{j}(P,Q;T)$, which are the
formal leading coefficients of the subresultant sequence for $P$
and $Q$, can be used to compute the greatest common divisor
of $P$ and $Q$ thanks to the following equivalence:
\begin{equation}\label{gcd} {\bf Sres}_i(P,Q;T)=\gcd(P,Q)\Longleftrightarrow
\begin{cases}{\bf sres}_0(P,Q;T)=\ldots={\bf sres}_{i-1}(P,Q;T)=0& \cr
            \hfill{\bf sres}_i(P,Q;T)\neq 0\hfill&
\cr\end{cases}\end{equation}

Let $f$ and $g$ be the two polynomials in $\mathbb{R}[x,y,z]$ 
$$f(x,y,z)= z^2+p_1(x,y)z+p_0(x,y)\qquad 
g(x,y,z)= z^2+q_1(x,y)z+q_0(x,y)$$
($\deg(p_1(x,y)) \le 1$, $\deg(p_0(x,y))\le 2$, $\deg(q_1(x,y)) \le 1$ and $\deg(q_0(x,y))\le 2$) defining the quadrics whose intersection curve is to be computed. Then
the resultant of $f$ and $g$, with respect to $z$, is equal to:
$${\bf S}_0(x,y)\eqdef\mathrm{\bf Resultant}(f,g;z)=\left| {\begin{array}{*{20}{c}}
1&{{p_1(x,y)}}&{{p_0(x,y)}}&0\\
0&1&{{p_1(x,y)}}&{{p_0(x,y)}}\\
1&{{q_1(x,y)}}&{{q_0(x,y)}}&0\\
0&1&{{q_1(x,y)}}&{{q_0(x,y)}}
\end{array}} \right| =$$
$$= {\left( {{p_0(x,y)} - {q_0(x,y)}} \right)^2} - \left( {{p_1(x,y)} - {q_1(x,y)}} \right)\left| {\begin{array}{*{20}{c}}
{{p_0(x,y)}}&{{p_1(x,y)}}\\
{{q_0(x,y)}}&{{q_1(x,y)}}
\end{array}} \right| .$$
The degree of ${\bf S}_0(x,y)$ is at most four.
The first subresultant of $f$ and $g$, with respect to $z$, is equal to:
$${\bf S}_1(x,y;z)\eqdef\mbox{\bf Sres}_1(f,g;z)=(q_1(x,y)-p_1(x,y))z+(q_0(x,y)-p_0(x,y))=g(x,y,z)-f(x,y,z).$$

Computing the intersection of the two quadrics defined by $f$ and $g$ is equivalent to solving in $\R^3$ the polynomial system of equations $$f(x,y,z)=0, \qquad g(x,y,z)=0.$$ The solution set to be computed, when non empty, may include curves and isolated points. We will use that the above polynomial system of equations, under some conditions, is equivalent to 
$${\bf S}_0(x,y)=0,\qquad (q_1(x,y)-p_1(x,y))z+(q_0(x,y)-p_0(x,y))=0.$$
Analyzing ${\bf S}_0(x,y)=0$ in $\R^2$ will be called the  projection step and moving the information obtained in $\R^2$ to $\R^3$ will be called the lifting step. We follow here the terminology used when computing the cylindrical algebraic decomposition of a finite set of multivariate polynomials (see \cite{Basu2006}, for example)

\section{Projecting the intersection curve}
\noindent In this section we will characterise the projection of the intersection curve of  two quadrics onto the $(x,y)$--plane. The usual way of dealing with projections of algebraic sets involves tools coming from Elimination Theory. Since we project from $\R^3$ onto $\R^2$ such a description will involve  polynomial inequalities.

Let $f$ and $g$ be the two polynomials in $\mathbb{R}[x,y,z]$ 
\begin{equation}\label{22quadrics}
f(x,y,z)=z^2+p_1(x,y)z+p_0(x,y) \qquad
g(x,y,z)=z^2+q_1(x,y)z+q_0(x,y)
\end{equation}
with $\deg(p_1) \le 1$, $\deg(p_0)\le 2$, $\deg(q_1) \le 1$ and $\deg(q_0)\le 2$. 

Let 
$\Delta _{{\cal E}_1}(x,y) = p_1(x,y)^2-4p_0(x,y)$ and 
$\Delta _{{\cal E}_2}(x,y) = q_1(x,y)^2-4q_0(x,y)$ be the discriminants of $f(x,y,z)$ and $g(x,y,z)$ (respectively) with respect to $z$.

Let ${\cal E}_1$ and ${\cal E}_2$ be the corresponding quadrics:
$${\cal E}_1:\left\{ {\left( {x,y,z} \right) \in {\mathbb{R}^3}:f\left( {x,y,z} \right)=0} \right\}\qquad
{\cal E}_2:\left\{ {\left( {x,y,z} \right) \in {\mathbb{R}^3}:g\left( {x,y,z} \right)=0} \right\},
$$
and $\Pi$ be the projection :
\[\begin{array}{*{20}{c}}
{\Pi :}&{{\mathbb{R}^3}}& \to &{{\mathbb{R}^2}}\\
{}&{\left( {x,y,z} \right)}& \mapsto &{\left( {x,y} \right)}
\end{array}\]
Next (easy to prove) theorem characterises the set $\pi \left( {\cal E}_1\cap {\cal E}_2 \right)$, the projection of the intersection curve of ${\cal E}_1$ and ${\cal E}_2$.

\begin{thm}\label{projectioncurve}
 \[
 \Pi \left( {\cal E}_1\cap {\cal E}_2 \right)
 =  
 \left\{ (x,y) \in \mathbb{R}^2: 
{\bf S}_0(x,y) = 0,\Delta _{{\cal E}_1}(x,y)\geq 0,\Delta _{{\cal E}_2}(x,y)\geq 0
 \right\}.
 \]
\end{thm}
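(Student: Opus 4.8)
The plan is to reduce the claimed set equality to a pointwise statement: a point $(x,y)$ belongs to $\Pi(\mathcal{E}_1\cap\mathcal{E}_2)$ exactly when the two univariate polynomials $f(x,y,z)$ and $g(x,y,z)$, viewed as monic quadratics in $z$ with real coefficients, share a common \emph{real} root. After fixing $(x,y)$, I would prove the two inclusions separately, the only nontrivial point being the reality of a root furnished by the resultant.

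For the inclusion $\subseteq$, I would start from a point $(x,y)$ in the projection, so that some real $z_0$ satisfies $f(x,y,z_0)=g(x,y,z_0)=0$. Since $z_0$ is then a common root of $f(x,y,\cdot)$ and $g(x,y,\cdot)$, the resultant $\mathbf{S}_0(x,y)$ must vanish, by the characterisation of the resultant recalled after Definition~\ref{subres}. Because $z_0$ is moreover a \emph{real} root of the monic real quadratic $f(x,y,\cdot)$, its discriminant $\Delta_{\mathcal{E}_1}(x,y)=p_1(x,y)^2-4p_0(x,y)$ is nonnegative; running the identical argument for $g(x,y,\cdot)$ yields $\Delta_{\mathcal{E}_2}(x,y)\geq 0$. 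Hence all three conditions on the right-hand side hold.

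For the inclusion $\supseteq$, I would assume $\mathbf{S}_0(x,y)=0$, $\Delta_{\mathcal{E}_1}(x,y)\geq 0$ and $\Delta_{\mathcal{E}_2}(x,y)\geq 0$. Vanishing of the resultant produces a common root $z_0$ of $f(x,y,\cdot)$ and $g(x,y,\cdot)$, but a priori only in $\mathbb{C}$. The decisive step is to promote this complex root to a real one: since $\Delta_{\mathcal{E}_1}(x,y)\geq 0$, both roots of the monic real quadratic $f(x,y,\cdot)$ are real, and $z_0$ is one of them, so $z_0\in\mathbb{R}$. Then $(x,y,z_0)\in\mathcal{E}_1\cap\mathcal{E}_2$, and therefore $(x,y)\in\Pi(\mathcal{E}_1\cap\mathcal{E}_2)$.

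The single genuine obstacle is exactly this passage from a complex common root to a real one: the resultant certifies only a shared root over $\mathbb{C}$, and it is the sign of the discriminants that excludes a non-real common root. I would also remark that, by the argument above, already one of the conditions $\Delta_{\mathcal{E}_1}(x,y)\geq 0$ or $\Delta_{\mathcal{E}_2}(x,y)\geq 0$ suffices for the backward inclusion, since a common root that is real for $f$ is automatically real for $g$; both are retained for symmetry and because both are forced in the forward inclusion.
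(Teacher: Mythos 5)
Your proposal is correct and follows essentially the same route as the paper's proof: the forward inclusion from a real common root forcing $\mathbf{S}_0$ to vanish and both discriminants to be nonnegative, and the backward inclusion by using the resultant characterisation to obtain a common root in $\mathbb{C}$ and the discriminant conditions to promote it to a real root. Your closing remark that a single discriminant condition already suffices for the backward inclusion (the other then following automatically) is a correct minor observation beyond what the paper states, but it does not change the argument's structure.
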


\begin{proof}
If $(a,b)\in\Pi \left( {\cal E}_1 \cap {\cal E}_2 \right)$ then there exists $c\in\mathbb{R}$ such that $(a,b,c)\in {\cal E}_1 \cap {\cal E}_2$. Thus $f(a,b,c)=0$ and $g(a,b,c)=0$, $f(a,b,z)$ and $g(a,b,z)$ have a common root ($c\in\mathbb{R}$) and we conclude that ${\bf S}_0(a,b)=0$. As $c$ is a real root of $f(a,b,z)$ and $g(a,b,z)$ we also have 
$\Delta _{{\cal E}_1}(a,b)\geq 0$ and $\Delta _{{\cal E}_2}(a,b)\geq 0$.

On the other hand, if $(a,b)\in\mathbb{R}^2$ verifies ${\bf S}_0(a,b)=0$ then $f(a,b,z)$ and $g(a,b,z)$ have a common root $c \in \mathbb{C}$: $f(a,b,c)=0$ and $g(a,b,c)=0$ (according to (\ref{gcd})). However, if $\Delta _{{\cal E}_1}(a,b)\ge0$ and $\Delta _{{\cal E}_2}(a,b)\ge0$ then $c$ must be a real solution of $f(a,b,z)=0$ and $g(a,b,z)=0$, concluding that $(a,b)\in \Pi \left({{{\cal E}_1} \cap {{\cal E}_2}}\right)$.
\end{proof}

Previous theorem gives the precise description of the projection of the intersection curve of two quadrics when their defining equations have the structure introduced in (\ref{22quadrics}). It corresponds to the part of the curve 
$$\{(x,y)\in\R^2\colon{\bf S}_0(x,y)=0\}$$ 
inside the region 
$${\cal A}_{{\cal E}_1,{\cal E}_2}\eqdef\{(x,y)\in\R^2\colon\Delta _{{\cal E}_1}(x,y)\geq 0, \Delta _{{\cal E}_2}(x,y)\geq 0\}.$$
As expected this set is a semialgebraic set in $\R^2$ since, according to Tarski Principle (see \cite{Basu2006}), the projection of any semialgebraic set is a semialgebraic set too.
Thus, the region where  the projection of the intersection curve of the two considered quadrics lives is bounded by a finite set of conic arcs since any $\Delta _{{\cal E}_i}(x,y)$ is a polynomial in $\R[x,y]$ of total degree equal to $2$.

In \cite{Berberich:2005:ECE:1064092.1064110}, the curve in $\R^2$ defined by ${\bf S}_0(x,y)=0$ is called the cutcurve of ${\cal E}_1$ and ${\cal E}_2$ and the curve in $\R^2$ defined by {$\Delta _{{\cal E}_i}(x,y)=0$} the silhouette of ${\cal E}_i$. We modify slightly the cutcurve definition to make it more suitable for our purposes.

\begin{defin}
Let ${\cal E}_1$ and ${\cal E}_2$ be two quadrics in $\R^3$ defined by $f(x,y,z)=0$ and $g(x,y,z)=0$ respectively. The cutcurve of ${\cal E}_1$ and ${\cal E}_2$ is the set 
$$\left\{ (x,y) \in \mathbb{R}^2: 
 {\bf S}_0(x,y) = 0, \Delta _{{\cal E}_1}(x,y)\geq 0, \Delta _{{\cal E}_2}(x,y)\geq 0,
 \right\}$$
\end{defin}

According to Theorem \ref{projectioncurve} the cutcurve of ${\cal E}_1$ and ${\cal E}_2$ is equal to the projection of ${\cal E}_1\cap{\cal E}_2$ onto the $xy$ plane, $\Pi({\cal E}_1\cap{\cal E}_2)$. The cutcurve of ${\cal E}_1$ and ${\cal E}_2$ can be a curve, part of a curve (i.e. a semialgebraic set) or even a single point, but always a semialgebraic set.

\begin{examp}\label{curvareal}{\rm
Let $f$ and $g$ be the polynomials 
$$f(x,y,z)={z^2}+{x^2} + {y^2} - 7 \qquad g(x,y,z)={z^2}-{x^2} + xy + 2x - {y^2}$$
defining the two quadrics ${\cal G}_1$ and ${\cal G}_2$ whose intersection curve is to be computed. In this case we have:
\[
{\bf S}_0(x,y)=(-2{x}^{2}+xy-2{y}^{2}+2x+7 ) ^{2},
\]
\[{\Delta _{{{\cal G} _1}}} (x,y)=  -4{x}^{2}-4{y}^{2}+28,\qquad
{\Delta _{{{\cal G}_2}}}(x,y)=  4{x}^{2}-4xy+4{y}^{2}-8x \]
and
$$
{\cal A}_{{\cal G}_1,{\cal G}_2}=\{(x,y)\in\R^2\colon -{x}^{2}-{y}^{2}+7\geq 0,\; {x}^{2}-xy+{y}^{2}-2x\geq 0 \}.$$
In this case the curve in $\R^2$ defined by ${\bf S}_0(x,y)=0$ is not contained completely in 
${\cal A}_{{\cal G}_1,{\cal G}_2}$: the projection of ${\cal G}_1\cap{\cal G}_2$ is equal to the portion of the ellipse $-2{x}^{2}+xy-2{y}^{2}+2x+7=0$ inside the circle ${x}^{2}+{y}^{2}\leq 7$ (see Figure \ref{test1}).}
\end{examp}

\begin{figure}[hbt]
\centering
\includegraphics[scale=0.15]{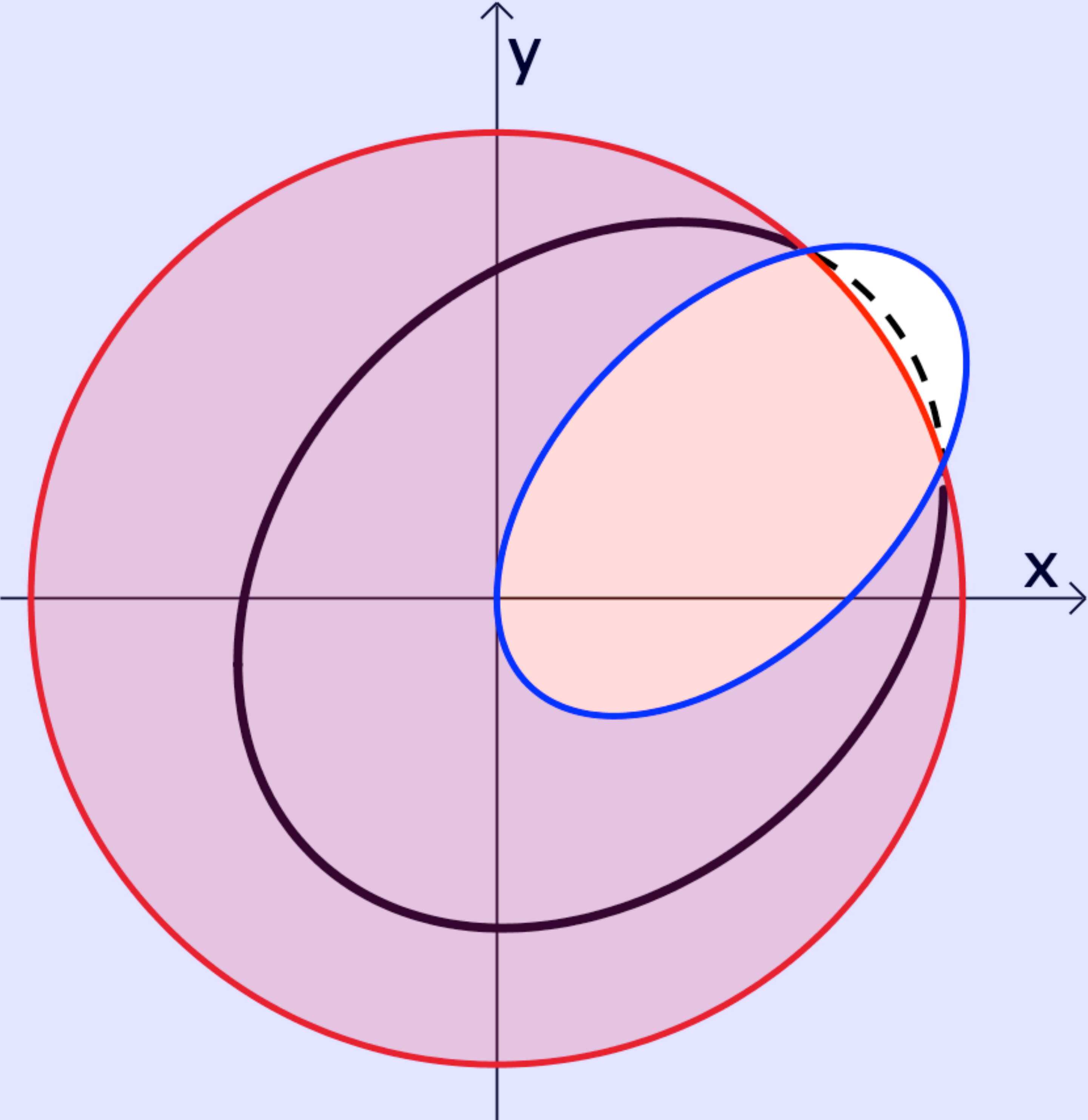}\hskip 2cm
\includegraphics[scale=0.25]{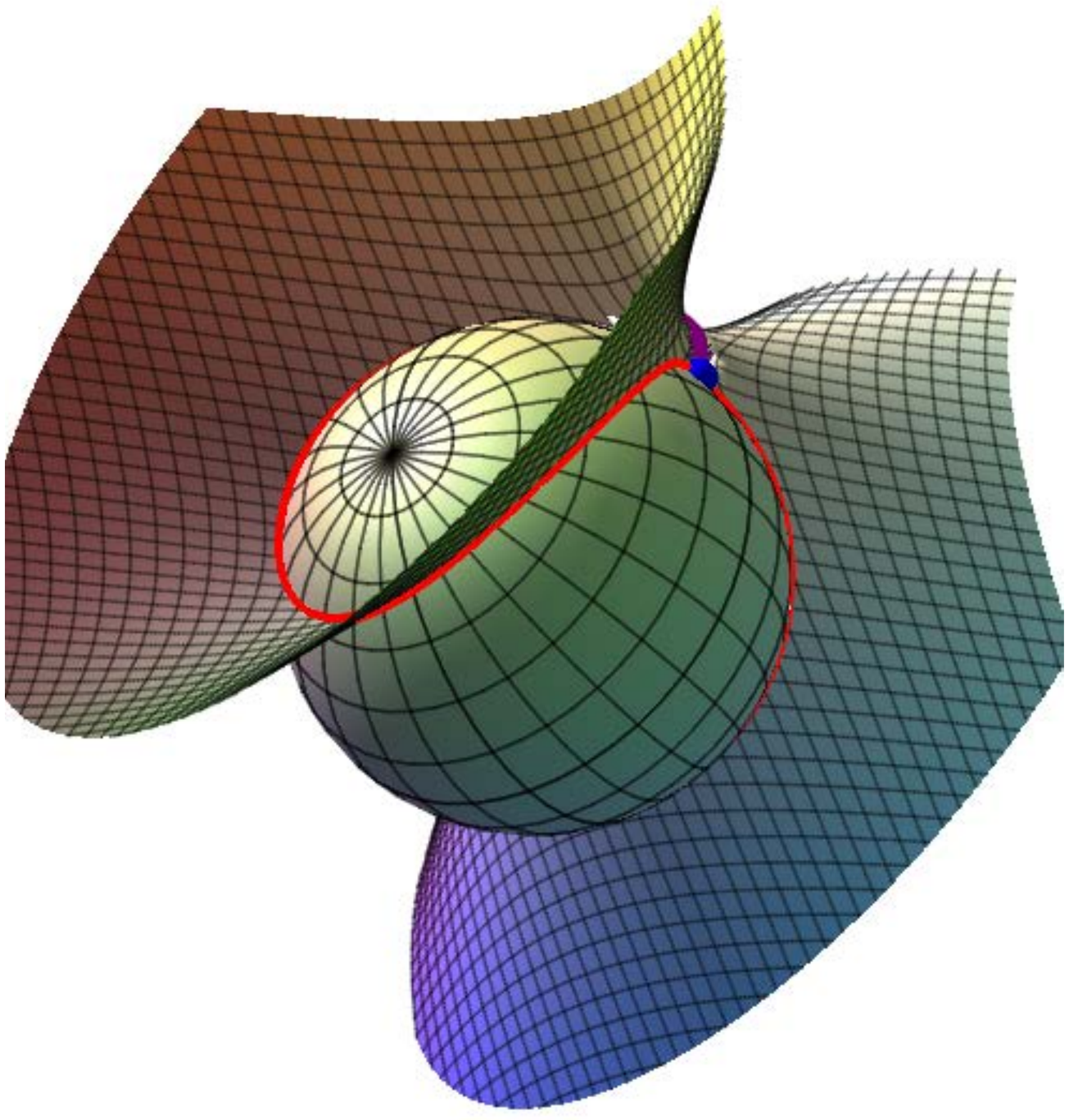}
\caption{\label{test1} Example where the cutcurve (black) is not the whole curve ${\bf S}_0(x,y)=0$.}
\end{figure}

We will pay also special attention to the intersection points between the cutcurve and the silhouette curves. These points will include those points where the cutcurve ``stops" and where the cutcurve ``starts" again (unless both curves are tangent). Figure \ref{test1} shows one concrete example of this situation.

For quadrics whose defining equations have a different structure than the shown in (\ref{22quadrics}), similar formulae can be derived for the cutcurve, i.e. for the projection of their intersection. For example, if
\begin{equation}\label{21quadrics}
f(x,y,z)=z^2+p_1(x,y)z+p_0(x,y) \qquad
g(x,y,z)=q_1(x,y)z+q_0(x,y)
\end{equation}
then $\Pi \left( {\cal E}_1\cap {\cal E}_2 \right)$ is given by
$$
\left\{ (x,y) \in \mathbb{R}^2: 
 \begin{array}{l}
 {\bf S}_0(x,y) = 0\\ p_1(x,y)^2-4p_0(x,y)\geq 0\\ q_1(x,y)\neq 0
\end{array}
 \right\}
 \bigcup
\left\{ (x,y) \in \mathbb{R}^2: 
 \begin{array}{l}
 {\bf S}_0(x,y) = 0\\ p_1(x,y)^2-4p_0(x,y)\geq 0\\ q_1(x,y)=0,\; q_0(x,y)=0
\end{array}
 \right\}.
$$

\section{Lifting to $\R^3$ the cutcurve in $\R^2$}\label{lifting}
In this section we  study the lifting to $\R^3$ of the cutcurve of the two quadrics whose intersection is to be computed. We will pay special attention to the singular points of the cutcurve since they are the points where more complicated situations we must deal with when lifting the cutcurve of ${\cal E}_1$ and ${\cal E}_2$ to ${\cal E}_1\cap{\cal E}_2$. This means that, first, we must be able of isolating them in order to achieve its lifting in the easiest and most efficient possible way. 

\subsection{Determining the singular points of the cutcurve of ${\cal E}_1$ and ${\cal E}_2$}\label{lifting1}
Let $f$ and $g$ be two polynomials in $\mathbb{R}[x,y,z]$ defined by: 
$$
f(x,y,z)= z^2+p_1(x,y)z+p_0(x,y)\qquad
g(x,y,z)=z^2+q_1(x,y)z+q_0(x,y)
$$
with $\deg(p_1) \leq 1$, $\deg(p_0)\leq 2$, $ \deg(q_1) \leq 1$, $\deg(q_0)\leq 2$. We restrict our attention to this case because this is the most complicated situation we must deal with: for those quadrics whose equations have a different (and simpler) structure, the singular points of the cutcurve are easier to compute since their lifting will be given automatically by one of the two equations (being of degree in $z$ smaller than or equal to $1$).

Let ${\cal E}_1$ and ${\cal E}_2$ be the corresponding quadrics defined by $f$ and $g$ 
and ${\bf S}_0(x,y) = 0$ the implicit equation of its cutcurve.
Next two theorems will help to determine easily the singular points of the cutcurve. The first one shows that those points in ${\bf S}_0(x_0,y_0) = 0$ and in the line $p_1(x,y)=q_1(x,y)$ are always singular points of the cutcurve.

In what follows we will use very often the following (easy to prove) lemma. We thank the reviewer for pointing out the importance of this fact that we were using without isolating it properly.

\begin{lem}\label{lemma}
Let $(x_0,y_0)$ be such that ${\bf S}_0(x_0,y_0)=0$, $z_0$ such that:
$$f(x_0,y_0,z_0)=z_0^2+p_1(x_0,y_0)z_0+p_0(x_0,y_0)=0\quad g(x_0,y_0,z_0)=z_0^2+q_1(x_0,y_0)z_0+q_0(x_0,y_0)=0$$
and ${p_1}\left( {{x_0},{y_0}} \right) = {q_1}\left( {{x_0},{y_0}} \right)$. Then ${p_0}\left( {{x_0},{y_0}} \right) = {q_0}\left( {{x_0},{y_0}} \right)$.
\end{lem}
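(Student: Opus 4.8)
The plan is to exploit the difference $g-f$, which the paper has already identified as the first subresultant ${\bf S}_1(x,y;z)=(q_1(x,y)-p_1(x,y))z+(q_0(x,y)-p_0(x,y))$. Since $z_0$ is assumed to be a common root of the univariate polynomials $f(x_0,y_0,z)$ and $g(x_0,y_0,z)$, evaluating this difference at $(x_0,y_0,z_0)$ must yield zero.

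Concretely, I would start from the two hypotheses $f(x_0,y_0,z_0)=0$ and $g(x_0,y_0,z_0)=0$ and subtract them. The key observation is that both quadrics are monic in $z$, so the quadratic terms $z_0^2$ cancel identically. Writing out the difference gives
$$(q_1(x_0,y_0)-p_1(x_0,y_0))\,z_0+(q_0(x_0,y_0)-p_0(x_0,y_0))=0.$$
Invoking the remaining hypothesis $p_1(x_0,y_0)=q_1(x_0,y_0)$ makes the coefficient of $z_0$ vanish, leaving
$$q_0(x_0,y_0)-p_0(x_0,y_0)=0,$$
that is, $p_0(x_0,y_0)=q_0(x_0,y_0)$, as claimed.

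There is no genuine obstacle here: the argument is essentially a one-line subtraction that relies solely on the two defining polynomials sharing the same leading coefficient in $z$. The hypothesis ${\bf S}_0(x_0,y_0)=0$ serves only to guarantee, via the resultant characterisation recalled in (\ref{gcd}), that such a common root $z_0$ exists; once $z_0$ is supplied it plays no further role, and in particular $z_0$ may be real or complex without affecting the conclusion. It is worth remarking that the cancellation of the $z_0^2$ terms is exactly what the monic normal form in (\ref{22quadrics}) buys us, which is why this lemma is so convenient in the subsequent analysis of singular points.
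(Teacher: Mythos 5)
Your proof is correct and is exactly the argument the paper has in mind: the paper states this lemma without proof (calling it ``easy to prove''), and the intended one-line justification is precisely your observation that $g-f={\bf S}_1(x,y;z)=(q_1-p_1)z+(q_0-p_0)$ vanishes at $(x_0,y_0,z_0)$, so $p_1(x_0,y_0)=q_1(x_0,y_0)$ forces $p_0(x_0,y_0)=q_0(x_0,y_0)$. Your side remarks — that the monic normal form of (\ref{22quadrics}) is what makes the $z_0^2$ terms cancel, and that $z_0$ may be complex without harm — are accurate and consistent with how the lemma is used later.
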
 

Next lemma gives a very convenient description of the partial derivatives of  ${\bf S}(x,y)$.

\begin{lem}\label{lemma2}
If ${\bf S}(x_0,y_0)=0$ then there exists $z_0$ such that $f(x_0,y_0,z_0)=0$, $g(x_0,y_0,z_0)=0$ $$\frac{{\partial {{\bf S}_0}}}{{\partial x}}\left( {{x_0},{y_0}} \right)=
\left| {\begin{array}{*{20}{c}}
1&{{p_1}\left( {{x_0},{y_0}} \right)}&{{p_0}\left( {{x_0},{y_0}} \right)}&{{z_0}{f_x}\left( {{x_0},{y_0},{z_0}} \right)}\\
0&1&{{p_1}\left( {{x_0},{y_0}} \right)}&{{f_x}\left( {{x_0},{y_0},{z_0}} \right)}\\
1&{{q_1}\left( {{x_0},{y_0}} \right)}&{{q_0}\left( {{x_0},{y_0}} \right)}&{{z_0}{g_x}\left( {{x_0},{y_0},{z_0}} \right)}\\
0&1&{{q_1}\left( {{x_0},{y_0}} \right)}&{{g_x}\left( {{x_0},{y_0},{z_0}} \right)}
\end{array}} \right|,$$
$$\frac{{\partial {{\bf S}_0}}}{{\partial y}}\left( {{x_0},{y_0}} \right)=
\left| {\begin{array}{*{20}{c}}
1&{{p_1}\left( {{x_0},{y_0}} \right)}&{{p_0}\left( {{x_0},{y_0}} \right)}&{{z_0}{f_y}\left( {{x_0},{y_0},{z_0}} \right)}\\
0&1&{{p_1}\left( {{x_0},{y_0}} \right)}&{{f_y}\left( {{x_0},{y_0},{z_0}} \right)}\\
1&{{q_1}\left( {{x_0},{y_0}} \right)}&{{q_0}\left( {{x_0},{y_0}} \right)}&{{z_0}{g_y}\left( {{x_0},{y_0},{z_0}} \right)}\\
0&1&{{q_1}\left( {{x_0},{y_0}} \right)}&{{g_y}\left( {{x_0},{y_0},{z_0}} \right)}
\end{array}} \right|.$$
\end{lem}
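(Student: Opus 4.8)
The plan is to prove both formulas simultaneously by differentiating, column by column, the $4\times4$ Sylvester determinant that defines ${\bf S}_0=\mathrm{\bf Resultant}(f,g;z)$. Write that determinant as $\det(C_1,C_2,C_3,C_4)$, where the columns of the Sylvester matrix $M$ are $C_1=(1,0,1,0)^{T}$, $C_2=(p_1,1,q_1,1)^{T}$, $C_3=(p_0,p_1,q_0,q_1)^{T}$ and $C_4=(0,p_0,0,q_0)^{T}$, all entries being functions of $(x,y)$. Since ${\bf S}_0(x_0,y_0)=0$ and $f,g$ are monic in $z$, the polynomials $f(x_0,y_0,z)$ and $g(x_0,y_0,z)$ have a common root $z_0\in\C$ by the resultant property recalled in (\ref{gcd}); this is the $z_0$ whose existence the lemma asserts.

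The key step, and the only one that really uses the hypotheses, is to record the linear dependence among the columns of $M$ forced by $z_0$. A direct check gives $M\,(z^3,z^2,z,1)^{T}=(zf,f,zg,g)^{T}$, so evaluating at $(x_0,y_0,z_0)$ and using $f(x_0,y_0,z_0)=g(x_0,y_0,z_0)=0$ yields
\[
z_0^3\,C_1+z_0^2\,C_2+z_0\,C_3+C_4=0\qquad\text{at }(x_0,y_0).
\]
I expect this identity to be the heart of the argument: everything after it is formal determinant bookkeeping.

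Next I would differentiate ${\bf S}_0=\det(C_1,C_2,C_3,C_4)$ with respect to $x$. By multilinearity and $C_1'=0$,
\[
\frac{\partial {\bf S}_0}{\partial x}=\det(C_1,C_2',C_3,C_4)+\det(C_1,C_2,C_3',C_4)+\det(C_1,C_2,C_3,C_4'),
\]
where $C_j'=\partial C_j/\partial x$. In the first two determinants I would replace $C_4$ by $-(z_0^3C_1+z_0^2C_2+z_0C_3)$ using the relation above; each expansion loses the terms carrying a column already present, leaving $-z_0^2\det(C_1,C_2',C_3,C_2)$ and $-z_0\det(C_1,C_2,C_3',C_3)$ respectively, and a single column transposition in each rewrites these as $z_0^2\det(C_1,C_2,C_3,C_2')$ and $z_0\det(C_1,C_2,C_3,C_3')$.

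Finally I would reassemble. At $(x_0,y_0,z_0)$ one has $f_x=z_0\,\partial_x p_1+\partial_x p_0$ and $g_x=z_0\,\partial_x q_1+\partial_x q_0$, from which a direct computation shows that the column $(z_0f_x,f_x,z_0g_x,g_x)^{T}$ equals $z_0^2C_2'+z_0C_3'+C_4'$. Expanding $\det(C_1,C_2,C_3,z_0^2C_2'+z_0C_3'+C_4')$ by linearity in the last column therefore reproduces exactly the three terms computed above, proving the stated expression for $\partial {\bf S}_0/\partial x$. The formula for $\partial {\bf S}_0/\partial y$ follows by the identical argument with every $x$-derivative replaced by the corresponding $y$-derivative. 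The derivation holds for any common root $z_0$, which is consistent since the left-hand sides do not depend on that choice.
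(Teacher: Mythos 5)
Your proof is correct and is essentially the paper's own argument: both rest on the Horner-type relation $M\,(z^{3},z^{2},z,1)^{T}=(zf,f,zg,g)^{T}$ among the Sylvester columns combined with columnwise (Leibniz) differentiation of the determinant. The only difference is the order of operations --- the paper adds $z^{3}C_{1}+z^{2}C_{2}+zC_{3}$ to the last column \emph{before} differentiating, so that upon evaluation at $(x_{0},y_{0},z_{0})$ the three unwanted terms vanish at once because the column $(z_{0}f,f,z_{0}g,g)^{T}$ is zero there, whereas you differentiate first and then invoke the dependence relation pointwise, which costs you the column swaps and the recombination $z_{0}^{2}C_{2}'+z_{0}C_{3}'+C_{4}'=(z_{0}f_{x},f_{x},z_{0}g_{x},g_{x})^{T}$ that the paper obtains for free by differentiating the modified last column directly.
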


\begin{proof}
The existence of $z_0$ comes from the fact that ${\bf S}_0(x,y)$ is the resultant of $f(x,y,z)$ and $g(x,y,z)$ with respect to $z$. By performing elementaty operations on the columns of the matrix giving ${\bf S}_0(x,y)$ we obtain:
$${\bf S}_0(x,y)=\mathrm{\bf Resultant}(f,g;z)=\left| {\begin{array}{*{20}{c}}
1&{{p_1}}&{{p_0}}&0\\
0&1&{{p_1}}&{{p_0}}\\
1&{{q_1}}&{{q_0}}&0\\
0&1&{{q_1}}&{{q_0}}
\end{array}} \right| = \left| {\begin{array}{*{20}{c}}
1&{{p_1}}&{{p_0}}&{zf}\\
0&1&{{p_1}}&f\\
1&{{q_1}}&{{q_0}}&{zg}\\
0&1&{{q_1}}&g
\end{array}} \right|.$$ And applying the rule for calculating the derivative of a determinant we have:
$$
\frac{{\partial {{\bf S}_0}}}{{\partial x}}=\left| {\begin{array}{*{20}{c}}
0&{{p_1}}&{{p_0}}&{zf}\\
0&1&{{p_1}}&f\\
0&{{q_1}}&{{q_0}}&{zg}\\
0&1&{{q_1}}&g
\end{array}} \right| + \left| {\begin{array}{*{20}{c}}
1&{{p_{1x}}}&{{p_0}}&{zf}\\
0&0&{{p_1}}&f\\
0&{{q_{1x}}}&{{q_0}}&{zg}\\
0&0&{{q_1}}&g
\end{array}} \right| + \left| {\begin{array}{*{20}{c}}
1&{{p_1}}&{{p_{0x}}}&{zf}\\
0&1&{{p_{1x}}}&f\\
1&{{q_1}}&{{q_{0x}}}&{zg}\\
0&1&{{q_{1x}}}&g
\end{array}} \right| + \left| {\begin{array}{*{20}{c}}
1&{{p_1}}&{{p_0}}&{z{f_x}}\\
0&1&{{p_1}}&{{f_x}}\\
1&{{q_1}}&{{q_0}}&{z{g_x}}\\
0&1&{{q_1}}&{{g_x}}
\end{array}} \right| .
$$
Replacing $(x,y,z)$ by $(x_0,y_0,z_0)$ in this equation produces:
$$\frac{{\partial {{\bf S}_0}}}{{\partial x}}\left( {{x_0},{y_0}} \right)=
\left| {\begin{array}{*{20}{c}}
1&{{p_1}\left( {{x_0},{y_0}} \right)}&{{p_0}\left( {{x_0},{y_0}} \right)}&{{z_0}{f_x}\left( {{x_0},{y_0},{z_0}} \right)}\\
0&1&{{p_1}\left( {{x_0},{y_0}} \right)}&{{f_x}\left( {{x_0},{y_0},{z_0}} \right)}\\
1&{{q_1}\left( {{x_0},{y_0}} \right)}&{{q_0}\left( {{x_0},{y_0}} \right)}&{{z_0}{g_x}\left( {{x_0},{y_0},{z_0}} \right)}\\
0&1&{{q_1}\left( {{x_0},{y_0}} \right)}&{{g_x}\left( {{x_0},{y_0},{z_0}} \right)}
\end{array}} \right|.$$
Replacing $x$ by $y$ in the previous argument we prove also the remaining equality.
\end{proof}

\begin{thm}\label{singular1}
If  ${\bf S}_0(x_0,y_0) = 0$ and $p_1(x_0,y_0)=q_1(x_0,y_0)$ then $$\frac{{\partial {{\bf S}_0}}}{{\partial x}}\left( {{x_0},{y_0}} \right) =0\qquad  \frac{{\partial {{\bf S}_0}}}{{\partial y}}\left( {{x_0},{y_0}} \right) = 0.$$
\end{thm}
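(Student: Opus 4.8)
The plan is to combine the two preceding lemmas. First I would invoke Lemma \ref{lemma2} to obtain a value $z_0$ with $f(x_0,y_0,z_0)=0$ and $g(x_0,y_0,z_0)=0$, together with the determinantal expressions for the two partial derivatives of ${\bf S}_0$ at $(x_0,y_0)$. Since the hypothesis gives $p_1(x_0,y_0)=q_1(x_0,y_0)$, Lemma \ref{lemma} immediately upgrades this to $p_0(x_0,y_0)=q_0(x_0,y_0)$ as well. Thus at the point $(x_0,y_0)$ the first three entries of rows $1$ and $3$ of the determinant for $\partial{\bf S}_0/\partial x$ coincide, and likewise the first three entries of rows $2$ and $4$ coincide.

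Next I would exploit this coincidence through elementary row operations. Subtracting row $1$ from row $3$ kills the first three columns of row $3$, leaving $(0,0,0,z_0(g_x-f_x))$, while subtracting row $2$ from row $4$ leaves $(0,0,0,g_x-f_x)$, where $f_x$ and $g_x$ abbreviate the partial derivatives evaluated at $(x_0,y_0,z_0)$. The two resulting rows are supported only in the last column and are proportional, the reduced third row being $z_0$ times the reduced fourth row, so the determinant vanishes and $\partial{\bf S}_0/\partial x(x_0,y_0)=0$ follows. Note that this conclusion is robust to degenerate values: if $z_0=0$ the reduced third row is simply zero, and if $f_x=g_x$ the reduced fourth row is zero, so the determinant is zero in every case. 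The identical argument applied to the determinant of Lemma \ref{lemma2} for $\partial{\bf S}_0/\partial y$, with $f_x,g_x$ replaced by $f_y,g_y$, gives $\partial{\bf S}_0/\partial y(x_0,y_0)=0$.

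The argument is short and has no genuine obstacle; the one point that must be got right is recognising that it is Lemma \ref{lemma}, and not merely the hypothesis $p_1(x_0,y_0)=q_1(x_0,y_0)$, that forces the structural collapse. The equality $p_1=q_1$ alone would only match the second entries of the relevant rows and would not yield proportional rows after reduction. It is the additional equality $p_0(x_0,y_0)=q_0(x_0,y_0)$ supplied by Lemma \ref{lemma} that makes the first three columns of each matched pair of rows identical, and hence makes the two reduced rows multiples of one another. Once this is in place the vanishing of both determinants is immediate.
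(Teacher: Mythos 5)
Your proposal is correct and follows essentially the same route as the paper: both invoke Lemma \ref{lemma} to upgrade $p_1(x_0,y_0)=q_1(x_0,y_0)$ to $p_0(x_0,y_0)=q_0(x_0,y_0)$, then apply the determinantal formulas of Lemma \ref{lemma2} and subtract matching rows so that the third and fourth rows are supported only in the last column, forcing both determinants to vanish. Your explicit reduced entries $z_0(g_x-f_x)$ and $g_x-f_x$ are in fact the accurate result of the row operations (the paper prints $z_0 g_x$ and $g_x$, an inessential discrepancy), and your remark that Lemma \ref{lemma} rather than the hypothesis alone drives the collapse matches the paper's reliance on that lemma.
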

\begin{proof}
Since $p_1(x_0,y_0)=q_1(x_0,y_0)$, by using Lemma \ref{lemma} and Lemma \ref{lemma2}, we get
$$\frac{{\partial {{\bf S}_0}}}{{\partial x}}\left( {{x_0},{y_0}} \right)=\left| {\begin{array}{*{20}{c}}
1&{{p_1}\left( {{x_0},{y_0}} \right)}&{{p_0}\left( {{x_0},{y_0}} \right)}&{{z_0}{f_x}\left( {{x_0},{y_0},{z_0}} \right)}\\
0&1&{{p_1}\left( {{x_0},{y_0}} \right)}&{{f_x}\left( {{x_0},{y_0},{z_0}} \right)}\\
1&{{p_1}\left( {{x_0},{y_0}} \right)}&{{p_0}\left( {{x_0},{y_0}} \right)}&{{z_0}{g_x}\left( {{x_0},{y_0},{z_0}} \right)}\\
0&1&{{p_1}\left( {{x_0},{y_0}} \right)}&{{g_x}\left( {{x_0},{y_0},{z_0}} \right)}
\end{array}} \right|=$$
$$=\left| {\begin{array}{*{20}{c}}
1&{{p_1}\left( {{x_0},{y_0}} \right)}&{{p_0}\left( {{x_0},{y_0}} \right)}&{{z_0}{f_x}\left( {{x_0},{y_0},{z_0}} \right)}\\
0&1&{{p_1}\left( {{x_0},{y_0}} \right)}&{{f_x}\left( {{x_0},{y_0},{z_0}} \right)}\\
0&0&0&{{z_0}{g_x}\left( {{x_0},{y_0},{z_0}} \right)}\\
0&0&0&{{g_x}\left( {{x_0},{y_0},{z_0}} \right)}
\end{array}} \right|=0.
$$
Replacing $x$ by $y$ in the previous argument we prove also that $\frac{{\partial {{\bf S}_0}}}{{\partial y}}
\left( {{x_0},{y_0}} \right) = 0$.
\end{proof}

Theorem \ref{singular1} and Lemma \ref{lemma} imply that  the singular points of the cutcurve in the line $p_1(x,y)=q_1(x,y)$ are also in the conic $p_0(x,y)=q_0(x,y)$.

Next theorem and corollary, together with what we have just proven, allows to conclude that the singular points of the cutcurve come from two different sources:
\begin{itemize}
\item either they are in the line $p_1(x,y)=q_1(x,y)$ (and in the conic $p_0(x,y)=q_0(x,y)$), or
\item they are not in the line $p_1(x,y)=q_1(x,y)$ and they are the projection of  a tangential intersection point between  ${\cal E}_1$ and ${\cal E}_2$.
\end{itemize}
Note that not all tangential intersection points between  ${\cal E}_1$ and ${\cal E}_2$ come from the second option. 

\begin{thm}\label{singular2}
If $(x_0,y_0)$ is a singular point of the cutcurve and $p_1(x_0,y_0)\ne q_1(x_0,y_0)$ then 
\[\frac{{\partial f}}{{\partial x}}\left( {{x_0},{y_0},{z_0}} \right)\frac{{\partial g}}{{\partial z}}\left( {{x_0},{y_0},{z_0}} \right) - \frac{{\partial f}}{{\partial z}}\left( {{x_0},{y_0},{z_0}} \right)\frac{{\partial g}}{{\partial x}}\left( {{x_0},{y_0},{z_0}} \right) = 0\]
\[\frac{{\partial f}}{{\partial y}}\left( {{x_0},{y_0},{z_0}} \right)\frac{{\partial g}}{{\partial z}}\left( {{x_0},{y_0},{z_0}} \right) - \frac{{\partial f}}{{\partial z}}\left( {{x_0},{y_0},{z_0}} \right)\frac{{\partial g}}{{\partial y}}\left( {{x_0},{y_0},{z_0}} \right) = 0\]
where
\[{z_0} =  - \frac{{{q_0}\left( {{x_0},{y_0}} \right) - {p_0}\left( {{x_0},{y_0}} \right)}}{{{q_1}\left( {{x_0},{y_0}} \right) - {p_1}\left( {{x_0},{y_0}} \right)}} .\]
\end{thm}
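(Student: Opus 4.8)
The plan is to read the two partial derivatives of ${\bf S}_0$ straight off Lemma \ref{lemma2} and then to show that each of the resulting $4\times 4$ determinants factors as $\bigl(q_1(x_0,y_0)-p_1(x_0,y_0)\bigr)$ times exactly the $2\times 2$ Jacobian minor appearing in the statement. Once that factorisation is in place, the hypotheses $\frac{\partial {\bf S}_0}{\partial x}(x_0,y_0)=0$, $\frac{\partial {\bf S}_0}{\partial y}(x_0,y_0)=0$ together with $p_1(x_0,y_0)\neq q_1(x_0,y_0)$ force both minors to vanish. Throughout I would abbreviate $P_i=p_i(x_0,y_0)$, $Q_i=q_i(x_0,y_0)$ and write $a=f_x$, $b=g_x$ for the $x$-derivatives evaluated at $(x_0,y_0,z_0)$. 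A preliminary remark is that the $z_0$ produced by Lemma \ref{lemma2} is unique and equals the value in the statement: since $P_1\neq Q_1$ the first subresultant ${\bf S}_1=g-f=(q_1-p_1)z+(q_0-p_0)$ has degree one in $z$, so by (\ref{gcd}) it is the gcd and its single root $z_0=-(Q_0-P_0)/(Q_1-P_1)$ is the only common root. In particular $(Q_1-P_1)z_0=P_0-Q_0$, which is the relation I will use to eliminate $z_0$.

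The computational heart is the simplification of
$$\frac{\partial {\bf S}_0}{\partial x}(x_0,y_0)=\left|\begin{array}{cccc} 1 & P_1 & P_0 & z_0 a\\ 0 & 1 & P_1 & a\\ 1 & Q_1 & Q_0 & z_0 b\\ 0 & 1 & Q_1 & b\end{array}\right|.$$
First I would subtract the first row from the third and the second row from the fourth, introducing the differences $Q_1-P_1$ and $Q_0-P_0$ in the lower block. Expanding along the first column leaves a $3\times 3$ determinant, and one further row operation reduces it to
$$\left|\begin{array}{cc} (Q_0-P_0)-(Q_1-P_1)P_1 & z_0(b-a)-(Q_1-P_1)a\\ Q_1-P_1 & b-a\end{array}\right|.$$
Now I would substitute $(Q_1-P_1)z_0=P_0-Q_0$. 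After this the coefficient of $(b-a)$ collapses to $-(Q_1-P_1)(2z_0+P_1)$; recognising that $2z_0+P_1=f_z(x_0,y_0,z_0)$ and that $Q_1-P_1=g_z(x_0,y_0,z_0)-f_z(x_0,y_0,z_0)$ then lets me pull the whole factor $Q_1-P_1$ out, giving
$$\frac{\partial {\bf S}_0}{\partial x}(x_0,y_0)=(Q_1-P_1)\bigl(f_x g_z-f_z g_x\bigr)(x_0,y_0,z_0).$$

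With this identity the conclusion is immediate: because $p_1(x_0,y_0)\neq q_1(x_0,y_0)$ the factor $Q_1-P_1$ is nonzero, so $\frac{\partial {\bf S}_0}{\partial x}(x_0,y_0)=0$ forces $f_x g_z-f_z g_x=0$ at $(x_0,y_0,z_0)$. Repeating the computation verbatim with each $x$-derivative replaced by the corresponding $y$-derivative yields $\frac{\partial {\bf S}_0}{\partial y}(x_0,y_0)=(Q_1-P_1)(f_y g_z-f_z g_y)$ and hence the second identity. I expect the only delicate point to be the determinant bookkeeping in the middle step: specifically, keeping the elimination of $z_0$ through $(Q_1-P_1)z_0=P_0-Q_0$ synchronised with the recognition of the factors $f_z=2z_0+P_1$ and $g_z-f_z=Q_1-P_1$, since it is precisely this pairing that makes the common factor $Q_1-P_1$ appear and hence makes the hypothesis $p_1\neq q_1$ exactly what is needed.
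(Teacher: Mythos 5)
Your proposal is correct and follows essentially the same route as the paper's own proof: both start from the determinant expression for $\frac{\partial {\bf S}_0}{\partial x}(x_0,y_0)$ given in Lemma \ref{lemma2}, row-reduce it, eliminate $z_0$ via $(q_1-p_1)z_0 = p_0-q_0$, recognise $f_z = 2z_0+p_1$ and $g_z-f_z = q_1-p_1$, and arrive at the factorisation $\frac{\partial {\bf S}_0}{\partial x}(x_0,y_0)=(q_1-p_1)\,(f_xg_z-f_zg_x)$ before invoking singularity and $p_1(x_0,y_0)\neq q_1(x_0,y_0)$. Your only deviations are cosmetic (collapsing to a single $2\times 2$ determinant instead of the paper's pair of $2\times 2$ minors) plus the welcome explicit justification, via the subresultant ${\bf S}_1$ and the gcd criterion, that $z_0$ is the unique common root.
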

\begin{proof}
It is enough to check that:
$$\frac{{\partial {{\bf S}_0}}}{{\partial x}}\left( {{x_0},{y_0}} \right) = 
\left( {{q_1}\left( {{x_0},{y_0}} \right) - {p_1}\left( {{x_0},{y_0}} \right)} \right)\cdot
{\left| {\begin{array}{cc}
f_x(x_0,y_0,z_0)&f_z(x_0,y_0,z_0)\\
g_x(x_0,y_0,z_0)&g_z(x_0,y_0,z_0)
\end{array}} \right|}$$
and 
$$\frac{{\partial {{\bf S}_0}}}{{\partial y}}\left( {{x_0},{y_0}} \right) = 
\left( {{q_1}\left( {{x_0},{y_0}} \right) - {p_1}\left( {{x_0},{y_0}} \right)} \right)\cdot
{\left| {\begin{array}{cc}
f_y(x_0,y_0,z_0)&f_z(x_0,y_0,z_0)\\
g_y(x_0,y_0,z_0)&g_z(x_0,y_0,z_0)
\end{array}} \right|}.$$
We only prove the first equality. The proof of the second one is the same but replacing $x$ by $y$.

Let $(x_0,y_0)$ be such that ${\bf S}_0(x_0,y_0)=0$ and $z_0$ such that:
$$f(x_0,y_0,z_0)=z_0^2+p_1(x_0,y_0)z_0+p_0(x_0,y_0)\qquad g(x_0,y_0,z_0)=z_0^2+q_1(x_0,y_0)z_0+q_0(x_0,y_0)=0$$
By using Lemma \ref{lemma2} we have:
$$\frac{{\partial {{\bf S}_0}}}{{\partial x}}\left( {{x_0},{y_0}} \right)= \left| {\begin{array}{*{20}{c}}
1&{{p_1}\left( {{x_0},{y_0}} \right)}&{{p_0}\left( {{x_0},{y_0}} \right)}&{{z_0}{f_x}\left( {{x_0},{y_0},{z_0}} \right)}\\
0&1&{{p_1}\left( {{x_0},{y_0}} \right)}&{{f_x}\left( {{x_0},{y_0},{z_0}} \right)}\\1&{{q_1}\left( {{x_0},{y_0}} \right)}&{{q_0}\left( {{x_0},{y_0}} \right)}&{{z_0}{g_x}\left( {{x_0},{y_0},{z_0}} \right)}\\
0&1&{{q_1}\left( {{x_0},{y_0}} \right)}&{{g_x}\left( {{x_0},{y_0},{z_0}} \right)}
\end{array}} \right|=\left| {\begin{array}{*{20}{c}}
1&{{p_1}}&{{p_0}}&{{z_0}{f_x}}\\
0&1&{{p_1}}&{{f_x}}\\
1&{{p_1}}&{{q_0}}&{{z_0}{g_x}}\\
0&1&{{p_1}}&{{g_x}}
\end{array}} \right|.$$
Then 
$$\frac{{\partial {{\bf S}_0}}}{{\partial x}}\left( {{x_0},{y_0}} \right)=\left| {\begin{array}{*{20}{c}}
1&{{p_1}}&{{p_0}}&{{z_0}{f_x}}\\
0&1&{{p_1}}&{{f_x}}\\
0&{{q_1} - {p_1}}&{{q_0} - {p_0}}&{{z_0}\left( {{g_x} - {f_x}} \right)}\\
0&0&{{q_1} - {p_1}}&{{g_x} - {f_x}}
\end{array}} \right| = \left| {\begin{array}{*{20}{c}}
1&{{p_1}}&{{f_x}}\\
{{q_1} - {p_1}}&{{q_0} - {p_0}}&{{z_0}\left( {{g_x} - {f_x}} \right)}\\
0&{{q_1} - {p_1}}&{{g_x} - {f_x}}
\end{array}} \right| =$$
$$= \left| {\begin{array}{*{20}{c}}
{{q_0} - {p_0}}&{{z_0}\left( {{g_x} - {f_x}} \right)}\\
{{q_1} - {p_1}}&{{g_x} - {f_x}}
\end{array}} \right| - \left( {{q_1} - {p_1}} \right)\left| {\begin{array}{*{20}{c}}
{{p_1}}&{{f_x}}\\
{{q_1} - {p_1}}&{{g_x} - {f_x}}
\end{array}} \right|.$$
Since ${p_1}\left( {{x_0},{y_0}} \right) \ne {q_1}\left( {{x_0},{y_0}} \right)$ then 
$$- {z_0}\left( {{q_1}(x_0,y_0) - {p_1}(x_0,y_0)} \right) = {q_0}(x_0,y_0) - {p_0}(x_0,y_0)$$
and
\begin{eqnarray*}
\frac{{\partial {{\bf S}_0}}}{{\partial x}}\left( {{x_0},{y_0}} \right)& = &\left| {\begin{array}{*{20}{c}}
{ - {z_0}\left( {{q_1} - {p_1}} \right)}&{{z_0}\left( {{g_x} - {f_x}} \right)}\\
{{q_1} - {p_1}}&{{g_x} - {f_x}}
\end{array}} \right| - \left( {{q_1} - {p_1}} \right)\left| {\begin{array}{*{20}{c}}
{{p_1}}&{{f_x}}\\
{{q_1}}&{{g_x}}
\end{array}} \right| = \\
& = &\left| {\begin{array}{*{20}{c}}
{ - {z_0}}&{{z_0}}\\
1&1
\end{array}} \right|\left( {{q_1} - {p_1}} \right)\left( {{g_x} - {f_x}} \right) - \left( {{q_1} - {p_1}} \right)\left| {\begin{array}{*{20}{c}}
{{p_1}}&{{f_x}}\\
{{q_1}}&{{g_x}}
\end{array}} \right|=\\ 
&=& \left( {{q_1} - {p_1}} \right)\left( { - 2{z_0}\left( {{g_x} - {f_x}} \right) - \left| {\begin{array}{*{20}{c}}
{{p_1}}&{{f_x}}\\
{{q_1}}&{{g_x}}
\end{array}} \right|} \right).
\end{eqnarray*}
Since ${f_z}\left( {{x_0},{y_0},{z_0}} \right) = 2{z_0} + {p_1}\left( {{x_0},{y_0}} \right)$ and ${g_z}\left( {{x_0},{y_0},{z_0}} \right) = 2{z_0} + {q_1}\left( {{x_0},{y_0}} \right)$ we have:
$${p_1} = {f_z} - 2{z_0} \qquad {q_1} = {g_z} - 2{z_0}$$
and
\begin{eqnarray*}
\frac{{\partial {{\bf S}_0}}}{{\partial x}}\left( {{x_0},{y_0}} \right)&=&\left( {{q_1} - {p_1}} \right)\left({ - 2{z_0}\left( {{g_x} - {f_x}} \right) - \left| {\begin{array}{*{20}{c}}
{{f_z} - 2{z_0}}&{{f_x}}\\
{{g_z} - 2{z_0}}&{{g_x}}
\end{array}} \right|} \right) = \\
&=& \left( {{q_1} - {p_1}} \right)\left( { - 2{z_0}{g_x} + 2{z_0}{f_x} - {g_x}\left( {{f_z} - 2{z_0}} \right) + {f_x}\left( {{g_z} - 2{z_0}} \right)} \right) =\\
&=&\left( {{q_1} - {p_1}} \right)\left( {{f_x}{g_z} - {f_z}{g_x}} \right).
\end{eqnarray*}
Finally we have 
$$\frac{{\partial {{\bf S}_0}}}{{\partial x}}\left( {{x_0},{y_0}} \right) = \left( {{q_1}\left( {{x_0},{y_0}} \right) - {p_1}\left( {{x_0},{y_0}} \right)} \right)\cdot\left| {\begin{array}{*{20}{c}}
{{f_x}\left( {{x_0},{y_0},{z_0}} \right)}&{{f_z}\left( {{x_0},{y_0},{z_0}} \right)}\\
{{g_x}\left( {{x_0},{y_0},{z_0}} \right)}&{{g_z}\left( {{x_0},{y_0},{z_0}} \right)}
\end{array}} \right| .$$
Since $(x_0,y_0)$ is a singular point of the curve and ${p_1}\left( {{x_0},{y_0}} \right) \ne {q_1}\left( {{x_0},{y_0}} \right)$ we conclude that 
$$\left| {\begin{array}{*{20}{c}}
{{f_x}\left( {{x_0},{y_0},{z_0}} \right)}&{{f_z}\left( {{x_0},{y_0},{z_0}} \right)}\\
{{g_x}\left( {{x_0},{y_0},{z_0}} \right)}&{{g_z}\left( {{x_0},{y_0},{z_0}} \right)}
\end{array}} \right|=0$$ as desired.
\end{proof}

\begin{cor}\label{cortangent}
If $(x_0,y_0)$ is a singular point of the cutcurve, $p_1(x_0,y_0)\ne q_1(x_0,y_0)$ and $${z_0} =  - \frac{{{q_0}\left( {{x_0},{y_0}} \right) - {p_0}\left( {{x_0},{y_0}} \right)}}{{{q_1}\left( {{x_0},{y_0}} \right) - {p_1}\left( {{x_0},{y_0}} \right)}}$$ then $f(x_0,y_0,z_0)=0$, $g(x_0,y_0,z_0)=0$ and the quadrics defined by $f$ and $g$ have the same tangent plane at $(x_0,y_0,z_0)$.
\end{cor}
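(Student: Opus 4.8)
The plan is to split the statement into two parts: first that $z_0$ is the common root of $f(x_0,y_0,\cdot)$ and $g(x_0,y_0,\cdot)$, and then that the two gradients are parallel at the lifted point $(x_0,y_0,z_0)$.

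For the first part I would use that, since $(x_0,y_0)$ lies on the cutcurve, ${\bf S}_0(x_0,y_0)=0$, so by Lemma \ref{lemma2} (equivalently, by the gcd characterisation (\ref{gcd})) there is some $\tilde z_0$ with $f(x_0,y_0,\tilde z_0)=0$ and $g(x_0,y_0,\tilde z_0)=0$. Any common root of the two monic quadratics must also be a root of their difference $g-f=(q_1-p_1)z+(q_0-p_0)$; since $p_1(x_0,y_0)\ne q_1(x_0,y_0)$, this difference is a genuine degree-one polynomial whose unique root is exactly the $z_0$ of the statement. Hence $\tilde z_0=z_0$, and so $f(x_0,y_0,z_0)=g(x_0,y_0,z_0)=0$.

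For the tangent-plane claim I would recall that two surfaces through a common point share their tangent plane there precisely when their gradients are parallel at that point. Because $(x_0,y_0)$ is a singular point of the cutcurve with $p_1(x_0,y_0)\ne q_1(x_0,y_0)$, Theorem \ref{singular2} already supplies two of the three maximal minors of the $2\times3$ Jacobian matrix whose rows are $\nabla f$ and $\nabla g$, namely $f_xg_z-f_zg_x=0$ and $f_yg_z-f_zg_y=0$ (all evaluated at $(x_0,y_0,z_0)$). Both vanishing minors involve the last column $(f_z,g_z)^{T}$, so the decisive observation is that this column is nonzero: from $f_z(x_0,y_0,z_0)=2z_0+p_1(x_0,y_0)$ and $g_z(x_0,y_0,z_0)=2z_0+q_1(x_0,y_0)$ we get $g_z-f_z=q_1(x_0,y_0)-p_1(x_0,y_0)\ne0$, so $f_z$ and $g_z$ cannot vanish simultaneously. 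The first two columns of the Jacobian are therefore each proportional to the nonzero third column, forcing the matrix to have rank at most one; equivalently $\nabla f$ and $\nabla g$ are parallel (the third minor $f_xg_y-f_yg_x$ then vanishes automatically), which is the claimed coincidence of tangent planes.

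The step I expect to need the most care is the degenerate configuration in which one of the two gradients vanishes: $f_z=0$ together with the minor relations forces $f_x=f_y=0$, i.e. $(x_0,y_0,z_0)$ is a singular point of the quadric ${\cal E}_1$, where ``tangent plane'' is only meaningful in the generic smooth sense. Away from this case the proportionality factor relating $\nabla f$ and $\nabla g$ is nonzero, so the two tangent planes genuinely agree; I would state the corollary with this smoothness understood, exactly as the computation of Theorem \ref{singular2} implicitly does.
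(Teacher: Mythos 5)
Your proof is correct and follows the same essential route as the paper: both rest entirely on Theorem \ref{singular2}. The paper's own proof is a one-liner (it writes down the two tangent planes and says the previous theorem makes them equal), and your additions fill in exactly the points it leaves implicit. First, the paper never verifies the claim $f(x_0,y_0,z_0)=g(x_0,y_0,z_0)=0$ inside the corollary's proof; your argument --- a common root exists since ${\bf S}_0(x_0,y_0)=0$, and it must be the root of the degree-one polynomial $g-f={\bf S}_1$, which is the stated $z_0$ because $p_1(x_0,y_0)\ne q_1(x_0,y_0)$ --- is the right way to close that gap, and it is essentially the computation buried in the proof of Theorem \ref{singular2} run in reverse. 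Second, the paper silently passes from the two vanishing minors to equality of planes; your observation that $g_z-f_z=q_1(x_0,y_0)-p_1(x_0,y_0)\ne 0$ forces $(f_z,g_z)\ne(0,0)$, hence rank at most one for the Jacobian and parallel gradients, is the missing linear-algebra step (two vanishing $2\times 2$ minors alone do not give proportional rows without a nonzero shared column). Your final caveat is also legitimate: the hypotheses do not exclude $\nabla f=0$ at the lifted point (e.g.\ a cone vertex), where $f_z=0$ and the minor relations force $f_x=f_y=0$, so the corollary's conclusion must indeed be read with smoothness of both quadrics at $(x_0,y_0,z_0)$ understood, just as the paper implicitly does.
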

\begin{proof}
The tangent planes to $f=0$ and $g=0$ at $(x_0,y_0,z_0)$ are given by:
\[\frac{{\partial f}}{{\partial x}}\left( {{x_0},{y_0},{z_0}} \right)\left( {x - {x_0}} \right) + \frac{{\partial f}}{{\partial y}}\left( {{x_0},{y_0},{z_0}} \right)\left( {y - {y_0}} \right) + \frac{{\partial f}}{{\partial z}}\left( {{x_0},{y_0},{z_0}} \right)\left( {z - {z_0}} \right) = 0\]
\[\frac{{\partial g}}{{\partial x}}\left( {{x_0},{y_0},{z_0}} \right)\left( {x - {x_0}} \right) + \frac{{\partial g}}{{\partial y}}\left( {{x_0},{y_0},{z_0}} \right)\left( {y - {y_0}} \right) + \frac{{\partial g}}{{\partial z}}\left( {{x_0},{y_0},{z_0}} \right)\left( {z - {z_0}} \right) = 0\]
Previous theorem implies that the quadrics $f=0$ and $g=0$ have the same tangent plane at $(x_0,y_0,z_0)$. 
\end{proof}

Next we show how to compute easily the singular points of the cutcurve when they are in the line $p_1(x,y)=q_1(x,y)$. Next lemma will connect ${\bf S}_0(x,y)$ with the line $p_1(x,y)-q_1(x,y)=0$ and with $\Delta _{{\cal E}_1}(x,y)$ and $\Delta _{{\cal E}_2}(x,y)$ (the discriminants of ${\cal E}_1$ and ${\cal E}_2$, respectively).

\begin{lem}\label{cutcurveformula}
$${\bf S}_0(x,y)=\frac{1}{16}
\Big[
\left(p_1-q_1\right)^4+
\left(\Delta _{{\cal E}_1}-\Delta _{{\cal E}_2}\right)^2-
2\left(p_1-q_1\right)^2\left(\Delta _{{\cal E}_1}+\Delta _{{\cal E}_2}\right)
\Big].$$
\end{lem}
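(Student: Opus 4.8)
\section*{Proof proposal}

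The plan is to treat the claimed identity as a formal polynomial identity and verify it by direct substitution, starting from the closed form for the resultant already recorded in Section~\ref{sec2}. Both sides are polynomials in $p_0,p_1,q_0,q_1$, so it suffices to check the equality regarding these four quantities as independent indeterminates (the dependence on $x,y$ then follows by specialisation). Concretely, I would begin from the expansion
$${\bf S}_0 = \left(p_0-q_0\right)^2 - \left(p_1-q_1\right)\left(p_0 q_1 - p_1 q_0\right)$$
obtained earlier from $\mathrm{\bf Resultant}(f,g;z)$, together with the definitions $\Delta_{{\cal E}_1}=p_1^2-4p_0$ and $\Delta_{{\cal E}_2}=q_1^2-4q_0$. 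Solving these for $p_0$ and $q_0$ gives $p_0=\tfrac14\left(p_1^2-\Delta_{{\cal E}_1}\right)$ and $q_0=\tfrac14\left(q_1^2-\Delta_{{\cal E}_2}\right)$, and I would substitute these into the expression for ${\bf S}_0$ so as to eliminate $p_0,q_0$ in favour of the discriminants.

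To keep the bookkeeping manageable I would introduce the auxiliary quantities $s=p_1+q_1$, $u=p_1-q_1$, $w=\Delta_{{\cal E}_1}+\Delta_{{\cal E}_2}$ and $v=\Delta_{{\cal E}_1}-\Delta_{{\cal E}_2}$. A short computation then yields $p_0-q_0=\tfrac14\left(us-v\right)$, so that $\left(p_0-q_0\right)^2=\tfrac{1}{16}\left(u^2 s^2-2usv+v^2\right)$; and an analogous substitution rewrites the second term $-u\left(p_0 q_1 - p_1 q_0\right)$ as $\tfrac{1}{16}\left(-u^2 s^2+u^4+2usv-2u^2 w\right)$. Adding the two pieces, the terms $u^2 s^2$ and $usv$ cancel in pairs, leaving exactly $\tfrac{1}{16}\left(u^4+v^2-2u^2 w\right)$, which is the right-hand side of the statement once $u,v,w$ are written back in terms of $p_1-q_1$, $\Delta_{{\cal E}_1}-\Delta_{{\cal E}_2}$ and $\Delta_{{\cal E}_1}+\Delta_{{\cal E}_2}$.

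There is no conceptual obstacle here; the work is entirely the symbolic expansion, and the only point requiring care is tracking the overall factor $\tfrac{1}{16}$ and confirming that the mixed terms in $s$ genuinely cancel. The cancellation of $s=p_1+q_1$ is in fact the whole content of the lemma: once ${\bf S}_0$ is expressed through the discriminants, its dependence on $p_1,q_1$ collapses to a dependence on the difference $p_1-q_1$ alone, which is precisely what makes the formula useful in Section~\ref{lifting} for relating ${\bf S}_0$ to the line $p_1-q_1=0$ and to $\Delta_{{\cal E}_1},\Delta_{{\cal E}_2}$. As an independent check one could expand the right-hand side directly and compare it with the two-determinant form of the resultant, or verify the identity with a computer algebra system; either route confirms the same polynomial equality.
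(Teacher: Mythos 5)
Your proposal is correct and takes essentially the same approach as the paper: both proofs verify the identity by direct substitution using $\Delta_{{\cal E}_1}=p_1^2-4p_0$, $\Delta_{{\cal E}_2}=q_1^2-4q_0$ and the closed form ${\bf S}_0=(p_0-q_0)^2-(p_1-q_1)(p_0q_1-q_0p_1)$. The paper simply runs the expansion in the opposite direction (rewriting the right-hand side in terms of $p_0,p_1,q_0,q_1$ and matching it against the resultant), while your auxiliary variables $s,u,v,w$ are just a tidier bookkeeping of the same polynomial computation.
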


\begin{proof}
Since $\Delta _{{\cal E}_1}=p_1^2-4p_0$ and $\Delta _{{\cal E}_2}=q_1^2-4q_0$, we have
$$\frac{1}{16}
\Big[
\left(p_1-q_1\right)^4+
\left(\Delta _{{\cal E}_1}-\Delta _{{\cal E}_2}\right)^2-
2\left(p_1-q_1\right)^2\left(\Delta _{{\cal E}_1}+\Delta _{{\cal E}_2}\right))
\Big]=$$
$$={p_0}^2 - 2{p_0}{q_0} + {q_0}^2+{p_1}^2{q_0} + {q_1}^2{p_0} - {p_1}{q_0}{q_1} - {p_0}{p_1}{q_1}=
{\left( {{p_0} - {q_0}} \right)^2} - \left( {{p_1} - {q_1}} \right)\left( {{p_0}{q_1} - {q_0}{p_1}} \right)={\bf S}_0(x,y)$$ as desired.
\end{proof}

As a consequence of the equality in the previous lemma, we conclude that we can always compute the intersection points between the cutcurve and the line $p_1(x,y)-q_1(x,y)=0$ by just solving a degree two equation.

\begin{cor}\label{singpoints}
The intersection points between the cutcurve and the line $p_1(x,y)-q_1(x,y)=0$ are the same than the intersection points between the curve
$$\Delta _{{\cal E}_1}(x,y)-\Delta _{{\cal E}_2}(x,y)=0$$
and the line $p_1(x,y)-q_1(x,y)=0$. 
\end{cor}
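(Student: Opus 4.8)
The plan is to obtain the statement as an immediate specialisation of the algebraic identity in Lemma~\ref{cutcurveformula}, which rewrites ${\bf S}_0$ in terms of the three quantities $p_1-q_1$, $\Delta_{{\cal E}_1}-\Delta_{{\cal E}_2}$ and $\Delta_{{\cal E}_1}+\Delta_{{\cal E}_2}$. The key observation is that this expression collapses dramatically once we restrict to the line $p_1(x,y)-q_1(x,y)=0$: both the quartic term $(p_1-q_1)^4$ and the mixed term $2(p_1-q_1)^2(\Delta_{{\cal E}_1}+\Delta_{{\cal E}_2})$ vanish identically there, so that for every point of the line one is left with
$${\bf S}_0(x,y)=\frac{1}{16}\left(\Delta_{{\cal E}_1}(x,y)-\Delta_{{\cal E}_2}(x,y)\right)^2.$$

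From here I would simply read off the coincidence of the two families of points. Since the right-hand side is a non-negative multiple of a perfect square, it vanishes at a point of the line exactly when $\Delta_{{\cal E}_1}-\Delta_{{\cal E}_2}$ does; hence a point on $p_1-q_1=0$ lies on ${\bf S}_0=0$ if and only if it lies on the conic $\Delta_{{\cal E}_1}-\Delta_{{\cal E}_2}=0$. This is precisely the identification of intersection points asserted by the corollary, and because $\Delta_{{\cal E}_1}-\Delta_{{\cal E}_2}$ has total degree at most two, these points are computed by intersecting a conic with a line, i.e. by solving a single quadratic equation, which is the computational gain announced just before the statement.

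The only step requiring a word of care is the treatment of the sign conditions $\Delta_{{\cal E}_1}\ge 0$, $\Delta_{{\cal E}_2}\ge 0$ carried by the definition of the cutcurve. On the line these two inequalities coincide, because $\Delta_{{\cal E}_1}=\Delta_{{\cal E}_2}$ at each common solution there (equivalently, $p_0=q_0$ by Lemma~\ref{lemma}), so they reduce to the single requirement $\Delta_{{\cal E}_1}\ge 0$. I would therefore phrase the conclusion as an equality of the zero loci of the two defining equations on the line, and note that passing to the genuine cutcurve amounts only to discarding those common solutions of $p_1-q_1=0$ and $\Delta_{{\cal E}_1}-\Delta_{{\cal E}_2}=0$ at which this remaining sign condition fails.
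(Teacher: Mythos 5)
Your proof is correct and follows essentially the same route as the paper, which states the corollary as an immediate consequence of Lemma~\ref{cutcurveformula}: setting $p_1-q_1=0$ in that identity leaves ${\bf S}_0=\frac{1}{16}\left(\Delta_{{\cal E}_1}-\Delta_{{\cal E}_2}\right)^2$, so the zero sets on the line coincide. Your additional remark about the sign conditions $\Delta_{{\cal E}_1}\ge 0$, $\Delta_{{\cal E}_2}\ge 0$ (which collapse to a single inequality on the line since $\Delta_{{\cal E}_1}=\Delta_{{\cal E}_2}$ at such points) is a sound precision that the paper leaves implicit.
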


In some cases the line $p_1(x,y)-q_1(x,y)=0$ (or part of it) is contained completely in the cutcurve: this can be checked directly by performing the corresponding substitution in ${\bf S}_0(x,y)$ or by using the previous corollary. In the particular case when $p_1(x,y)\equiv q_1(x,y)$ we have that $${\bf S}_0(x,y)=(p_0(x,y)- q_0(x,y))^2$$ and, as expected, all points in the cutcurve are singular. But this implies that the cutcurve is a conic ($p_0(x,y)-q_0(x,y)=0$) and all further computations (including the lifting) are greatly simplified.

Next proposition shows that the ``vertices" of the region ${\cal A}_{{\cal E}_1,{\cal E}_2}$ belonging to ${\bf S}_0(x,y)=0$ are always singular points of this curve.

\begin{propst}\label{propsingular}
If $(\alpha,\beta)$ is a point such that $${\bf S}_0(\alpha,\beta) = 0, \Delta _{{\cal E}_1}(\alpha,\beta)=0, \Delta _{{\cal E}_2}(\alpha,\beta)=0$$ then $p_1(\alpha,\beta)=q_1(\alpha,\beta)$ and $(\alpha,\beta)$ is a singular point of the cutcurve.
\end{propst}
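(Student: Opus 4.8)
The plan is to derive the equality $p_1(\alpha,\beta)=q_1(\alpha,\beta)$ directly from the compact expression for ${\bf S}_0$ furnished by Lemma \ref{cutcurveformula}, and then to obtain the singularity of $(\alpha,\beta)$ as an immediate consequence of Theorem \ref{singular1}. The whole argument is a short assembly of these two previously established facts.

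First I would substitute the two hypotheses $\Delta_{{\cal E}_1}(\alpha,\beta)=0$ and $\Delta_{{\cal E}_2}(\alpha,\beta)=0$ into the identity of Lemma \ref{cutcurveformula}. Inside the bracket, the summand $(\Delta_{{\cal E}_1}-\Delta_{{\cal E}_2})^2$ and the summand $-2(p_1-q_1)^2(\Delta_{{\cal E}_1}+\Delta_{{\cal E}_2})$ both vanish once the two discriminants are set to zero, so the formula collapses at $(\alpha,\beta)$ to
$${\bf S}_0(\alpha,\beta)=\frac{1}{16}\bigl(p_1(\alpha,\beta)-q_1(\alpha,\beta)\bigr)^{4}.$$
Combining this with the hypothesis ${\bf S}_0(\alpha,\beta)=0$ forces $\bigl(p_1(\alpha,\beta)-q_1(\alpha,\beta)\bigr)^{4}=0$, and hence $p_1(\alpha,\beta)=q_1(\alpha,\beta)$. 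This establishes the first assertion of the proposition.

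For the second assertion I would simply observe that $(\alpha,\beta)$ now meets exactly the hypotheses of Theorem \ref{singular1}, namely ${\bf S}_0(\alpha,\beta)=0$ together with the equality $p_1(\alpha,\beta)=q_1(\alpha,\beta)$ just proved. That theorem then yields $\partial {\bf S}_0/\partial x(\alpha,\beta)=0$ and $\partial {\bf S}_0/\partial y(\alpha,\beta)=0$; since ${\bf S}_0$ vanishes at $(\alpha,\beta)$ along with both of its first-order partial derivatives, the point is by definition a singular point of the cutcurve, which completes the proof. There is no genuine obstacle here, as all the effort has been front-loaded into Lemma \ref{cutcurveformula} and Theorem \ref{singular1}; the only items deserving a line of care are the algebraic bookkeeping confirming that every term of the bracket other than $(p_1-q_1)^4$ really does vanish when both discriminants are zero, and the convention that a singular point of the plane curve ${\bf S}_0(x,y)=0$ is one at which the polynomial and its two first partials vanish simultaneously.
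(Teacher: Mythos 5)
Your proposal is correct and follows essentially the same route as the paper: both substitute $\Delta_{{\cal E}_1}=\Delta_{{\cal E}_2}=0$ into the identity of Lemma \ref{cutcurveformula} to collapse ${\bf S}_0(\alpha,\beta)$ to $\frac{1}{16}\left(p_1(\alpha,\beta)-q_1(\alpha,\beta)\right)^4$, forcing $p_1(\alpha,\beta)=q_1(\alpha,\beta)$, and then the singularity follows from Theorem \ref{singular1}. Your version merely makes explicit the appeal to Theorem \ref{singular1} that the paper leaves implicit.
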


\begin{proof}
From Lemma \ref{cutcurveformula} we have
$${\bf S}_0(\alpha,\beta) = 0, \Delta _{{\cal E}_1}(\alpha,\beta)=0, \Delta _{{\cal E}_2}(\alpha,\beta)=0\Longleftrightarrow \left(\frac{(p_1-q_1)^2}{4}\right)^2=0\Longleftrightarrow p_1=q_1$$ as desired.
\end{proof}

Finally, we show how to determine those singular points of the cutcurve not belonging to the line $p_1(x,y)=q_1(x,y)$. These points, according to Corollary \ref{cortangent}, come from tangential intersection points of the two considered quadrics and they are very easily lifted but more difficult to determine than those in the line $p_1(x,y)=q_1(x,y)$. To determine these points we have to solve the system of equations
$${\bf S}_0(x,y)=0,\quad\frac{\partial {\bf S}_0}{\partial x}(x,y)=0\quad\frac{\partial {\bf S}_0}{\partial y}(x,y)=0, \quad p_1(x,y)\ne q_1(x,y)$$
inside ${\cal A}_{{\cal E}_1,{\cal E}_2}$.
In order to solve this system of equations we need the polynomial containing the $x$-coordinates of the singular points of ${\bf S}_0(x,y)$ in the line $p_1(x,y)-q_1(x,y)=0$.

\begin{defin}\label{tau}
When $\deg(p_1(x,y)-q_1(x,y),y)=1$, the squarefree part of the polynomial obtained after replacing $y$ in 
$\Delta _{{\cal E}_1}(x,y)-\Delta _{{\cal E}_2}(x,y)$ by the value obtained by solving 
$p_1(x,y)-q_1(x,y)=0$ in terms of $y$ will be denoted by $\tau_{{\cal E}_1,{\cal E}_2}(x)$. Otherwise $\tau_{{\cal E}_1,{\cal E}_2}(x)=p_1(x,y)-q_1(x,y)$.
\end{defin}

Next theorem shows how to use subresultants to determine the tangential intersection points of the two considered quadrics whose projection is outside the line $p_1(x_0,y_0)-q_1(x_0,y_0)=0$.

\begin{thm}\label{tangencutcurve}
Let $(x_0,y_0)$ be a singular point of the cutcurve such that $p_1(x_0,y_0)\ne q_1(x_0,y_0)$. If
\begin{itemize}
\item $U_0(x)$ is the squarefree part of ${\bf Sres}_0\left({\bf S}_0,\frac{\partial {\bf S}_0}{\partial x};y\right)$.
\item $U_1(x,y)={\bf Sres}_1\left({\bf S}_0,\frac{\partial {\bf S}_0}{\partial x};y\right)=U_{11}(x)y+U_{10}(x)$.
\item $V_0(x)$ is the squarefree part of ${\bf Sres}_0\left({\bf S}_0,\frac{\partial {\bf S}_0}{\partial y};y\right)$.
\item $V_1(x,y)={\bf Sres}_1\left({\bf S}_0,\frac{\partial {\bf S}_0}{\partial y};y\right)=V_{11}(x)y+V_{10}(x)$.
\item $W(x)$ is the squarefree part of $U_{10}(x)V_{11}(x)-V_{10}(x)U_{11}(x)$.
\end{itemize}
then $x_0$ is a real root of the polynomial
$$\Omega_{{\cal E}_1,{\cal E}_2}(x)=\frac{\gcd(W(x),U_0(x),V_0(x))}{\gcd(W(x),\tau_{{\cal E}_1,{\cal E}_2}(x))}$$
and $$y_0=-U_{10}(x_0)/U_{11}(x_0)\quad \hbox{and/or}\quad y_0=-V_{10}(x_0)/V_{11}(x_0).$$
\end{thm}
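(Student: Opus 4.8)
The plan is to convert the three scalar conditions that characterise a singular point of the cutcurve, namely ${\bf S}_0(x_0,y_0)=0$, $\frac{\partial {\bf S}_0}{\partial x}(x_0,y_0)=0$ and $\frac{\partial {\bf S}_0}{\partial y}(x_0,y_0)=0$, into conditions on the single variable $x$ by eliminating $y$ through resultants, and then to recover $y_0$ from the first subresultants via the gcd characterisation (\ref{gcd}). First I would regard ${\bf S}_0$, $\frac{\partial {\bf S}_0}{\partial x}$ and $\frac{\partial {\bf S}_0}{\partial y}$ as polynomials in $y$ with coefficients in $\R[x]$. Since $(x_0,y_0)$ is a singular point, the specialisations ${\bf S}_0(x_0,y)$ and $\frac{\partial {\bf S}_0}{\partial x}(x_0,y)$ share the root $y_0$, so their resultant vanishes at $x_0$ and hence $U_0(x_0)=0$; the identical argument applied to ${\bf S}_0$ and $\frac{\partial {\bf S}_0}{\partial y}$ gives $V_0(x_0)=0$.

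Next I would extract $y_0$ using (\ref{gcd}). When the greatest common divisor of ${\bf S}_0(x_0,y)$ and $\frac{\partial {\bf S}_0}{\partial x}(x_0,y)$ has degree exactly one, (\ref{gcd}) guarantees that ${\bf Sres}_1$ coincides with that gcd up to a nonzero scalar, so $U_1(x_0,y)=U_{11}(x_0)y+U_{10}(x_0)$ is a nonzero multiple of $y-y_0$ and therefore $y_0=-U_{10}(x_0)/U_{11}(x_0)$; symmetrically $y_0=-V_{10}(x_0)/V_{11}(x_0)$. Equating these two expressions for $y_0$ and clearing denominators yields $U_{10}(x_0)V_{11}(x_0)-V_{10}(x_0)U_{11}(x_0)=0$, so $W(x_0)=0$. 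Consequently $x_0$ is a common root of $U_0$, $V_0$ and $W$, that is, a root of $\gcd(W,U_0,V_0)$.

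Finally I would account for the division by $\gcd(W,\tau_{{\cal E}_1,{\cal E}_2})$. By Theorem \ref{singular1} and Corollary \ref{singpoints} every singular point lying on the line $p_1=q_1$ has its abscissa among the roots of $\tau_{{\cal E}_1,{\cal E}_2}$, and these are precisely the singular points to be discarded here; removing the factor $\gcd(W,\tau_{{\cal E}_1,{\cal E}_2})$ therefore deletes exactly those abscissae from the numerator $\gcd(W,U_0,V_0)$. Since by hypothesis $p_1(x_0,y_0)\neq q_1(x_0,y_0)$, the abscissa $x_0$ is not contributed by the line, survives the quotient, and so $x_0$ is a real root of $\Omega_{{\cal E}_1,{\cal E}_2}$.

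I expect the main obstacle to be the degree-one gcd hypothesis invoked above: if $y_0$ is a common root of higher multiplicity, so that ${\bf Sres}_1$ itself specialises to zero at $x_0$ and $U_{11}(x_0)=U_{10}(x_0)=0$, then the $U$-formula for $y_0$ degenerates into $0/0$, which is exactly why the statement asserts recovery of $y_0$ from the $U$'s \emph{and/or} the $V$'s. Ruling out the simultaneous failure of both recoveries, and verifying that the division by $\gcd(W,\tau_{{\cal E}_1,{\cal E}_2})$ never cancels an off-line $x_0$ that happens to share its abscissa with an on-line singular point, are the delicate points that must be controlled carefully; the preceding lemmas on the structure of ${\bf S}_0$ and its partial derivatives are what make both checks tractable.
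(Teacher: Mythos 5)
Your route is the paper's route: deduce $U_0(x_0)=0$ and $V_0(x_0)=0$ from the common root $y_0$, get $W(x_0)=0$ from the two linear first subresultants, remove the on-line abscissae through $\tau_{{\cal E}_1,{\cal E}_2}$, and read $y_0$ off $U_1$ or $V_1$ via the gcd characterisation. One small inefficiency: you derive $W(x_0)=0$ by equating $-U_{10}(x_0)/U_{11}(x_0)=-V_{10}(x_0)/V_{11}(x_0)$, which presupposes both denominators nonzero; the paper gets it unconditionally from $U_1(x_0,y_0)=0$ and $V_1(x_0,y_0)=0$, since $U_{10}(x_0)=-U_{11}(x_0)y_0$ and $V_{10}(x_0)=-V_{11}(x_0)y_0$ make $U_{10}V_{11}-V_{10}U_{11}$ vanish at $x_0$ with no nondegeneracy hypothesis (you do note that the degenerate case kills the determinant trivially, so this is repairable on the spot).

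The genuine gap is that you name, but do not prove, precisely the two points on which the theorem's conclusion rests, and which the paper does settle. First, the ``and/or'' recovery of $y_0$ requires that $U_{11}(x_0)$ and $V_{11}(x_0)$ do not vanish simultaneously; the paper closes this with a short degree argument: if both vanished, then $(y-y_0)^2$ would be a common factor of ${\bf S}_0(x_0,y)$, $\frac{\partial {\bf S}_0}{\partial x}(x_0,y)$ and $\frac{\partial {\bf S}_0}{\partial y}(x_0,y)$, which is impossible because the degree in $y$ of ${\bf S}_0(x_0,y)$ is bounded by $4$. Without some such argument, the final assertion about $y_0$ is simply unestablished in your write-up. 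Second, surviving the division by $\gcd(W,\tau_{{\cal E}_1,{\cal E}_2})$ requires the positive statement $\tau_{{\cal E}_1,{\cal E}_2}(x_0)\neq 0$, not merely that $x_0$ ``is not contributed by the line'': since $W$, $U_0$, $V_0$ are squarefree parts, $\gcd(W,U_0,V_0)$ is squarefree, so the abscissa coincidence you yourself flag (an off-line singular point sharing its $x$-coordinate with an on-line cutcurve point) would delete $x_0$ from $\Omega_{{\cal E}_1,{\cal E}_2}$ outright. The paper resolves this by invoking Corollary \ref{singpoints} and Definition \ref{tau} — the roots of $\tau_{{\cal E}_1,{\cal E}_2}$ are exactly the abscissae of the cutcurve--line intersection points — to conclude $\tau_{{\cal E}_1,{\cal E}_2}(x_0)\neq 0$ from $p_1(x_0,y_0)\neq q_1(x_0,y_0)$; this is the step you must actually carry out rather than leave as a caveat.
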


\begin{proof} 
Since $(x_0,y_0)$ is a singular point of the cutcurve, we have the following equalities:
$$U_0(x_0)=0, U_1(x_0,y_0)=0, V_0(x_0)=0, V_1(x_0,y_0)=0.$$
By using $U_1(x_0,y_0)=0$ and $V_1(x_0,y_0)=0$  we have 
$$U_{10}(x_0)V_{11}(x_0)-V_{10}(x_0)U_{11}(x_0)=0$$ and that $x_0$ is a real root of 
$\gcd(U_{10}(x)V_{11}(x)-V_{10}(x)U_{11}(x),U_0(x),V_0(x))$
and a real root of $\gcd(W(x),U_0(x),V_0(x))$. Since $p_1(x_0,y_0)\ne q_1(x_0,y_0)$, according to Corollary \ref{singpoints} and Definition \ref{tau}, we have $\tau_{{\cal E}_1,{\cal E}_2}(x_0)\ne 0$ and this allows to conclude that $\Omega_{{\cal E}_1,{\cal E}_2}(x_0)=0$ as desired.
Note that, under these conditions, $U_{11}(x_0)\neq 0$ or $V_{11}(x_0)\neq 0$. If this is not the case then $(y-y_0)^2$ is a common factor of ${\bf S}_0(x_0,y)$, $\frac{\partial {\bf S}_0}{\partial x}(x_0,y)$ and $\frac{\partial {\bf S}_0}{\partial y}(x_0,y)$ but this is not possible since the degree in $y$ of ${\bf S}_0(x_0,y)$ is bounded by $4$.
\end{proof}

\subsection{Determining the intersection points of the cutcurve with the silhouette curves}
Next propositions provide the way to determine in a simpler way the points in the cutcurve which belong to each silhouette curve.

\begin{propst}\label{propregular1}
If ${\bf S}_0(\alpha,\beta)=0$ then
\[{\Delta _{{{\cal E}_1}}}(\alpha,\beta) = 0 \mathop  \Leftrightarrow \limits^{(I)} {\Delta _{{{\cal E}_2}}}(\alpha,\beta)= {\left( {{p_1}(\alpha,\beta) - {q_1}(\alpha,\beta)} \right)^2}\mathop  \Leftrightarrow \limits^{(II)}  2\left( {{q_0}(\alpha,\beta)+ {p_0}(\alpha,\beta)} \right) = {p_1}(\alpha,\beta){q_1}(\alpha,\beta)\]
\[{\Delta _{{{\cal E}_2}}}(\alpha,\beta) = 0 \mathop  \Leftrightarrow \limits^{(I)} {\Delta _{{{\cal E}_1}}}(\alpha,\beta) = {\left( {{p_1}(\alpha,\beta) - {q_1}(\alpha,\beta)} \right)^2}  \mathop  \Leftrightarrow \limits^{(II)} 2\left( {{q_0}(\alpha,\beta) + {p_0}(\alpha,\beta)} \right) = {p_1}(\alpha,\beta){q_1}(\alpha,\beta)\]
\end{propst}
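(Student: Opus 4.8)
The plan is to read the entire statement off the single identity in Lemma~\ref{cutcurveformula}. Fixing the point, abbreviating $\Delta_1=\Delta_{{\cal E}_1}(\alpha,\beta)$, $\Delta_2=\Delta_{{\cal E}_2}(\alpha,\beta)$ and $D=p_1(\alpha,\beta)-q_1(\alpha,\beta)$, I would first recast that lemma as a difference of squares,
$$16\,{\bf S}_0(\alpha,\beta)=\bigl(D^{2}-\Delta_1-\Delta_2\bigr)^{2}-4\,\Delta_1\Delta_2,$$
which is the shape in which the hypothesis ${\bf S}_0(\alpha,\beta)=0$ is easiest to use. Beside it I would record one purely formal identity coming from the definitions $\Delta_1=p_1^{2}-4p_0$ and $\Delta_2=q_1^{2}-4q_0$, namely $\Delta_1+\Delta_2-D^{2}=2p_1q_1-4(p_0+q_0)$. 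This shows, with no reference to ${\bf S}_0$ at all, that the rightmost condition $2(q_0+p_0)=p_1q_1$ is exactly $\Delta_1+\Delta_2=D^{2}$, so the three conditions of each chain become $\Delta_1=0$, $\Delta_2=D^{2}$ and $\Delta_1+\Delta_2=D^{2}$.

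Every implication emanating from the left end of a chain is then a one-liner. Substituting $\Delta_1=0$ into the displayed identity annihilates $4\Delta_1\Delta_2$ and leaves $(D^{2}-\Delta_2)^{2}=0$, whence $\Delta_2=D^{2}$ and therefore $\Delta_1+\Delta_2=D^{2}$, i.e. $2(q_0+p_0)=p_1q_1$; so $\Delta_1=0$ forces the other two conditions directly. By the symmetry of the displayed identity under $\Delta_1\leftrightarrow\Delta_2$ (equivalently $p\leftrightarrow q$) the same computation yields the forward implications of the second chain. What is left — and what carries the whole content of the proposition — are the converse implications, all of which amount to coming back to $\Delta_1=0$ (resp. $\Delta_2=0$).

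The step I expect to be the genuine obstacle is exactly this converse, e.g. deducing $\Delta_1=0$ from $\Delta_2=D^{2}$. Substituting $\Delta_2=D^{2}$ into ${\bf S}_0(\alpha,\beta)=0$ does not give it outright: the displayed identity only factors as $\Delta_1\bigl(\Delta_1-4D^{2}\bigr)=0$, so a spurious branch $\Delta_1=4D^{2}$ survives and must be excluded (and the converse $\Delta_1+\Delta_2=D^{2}\Rightarrow\Delta_2=D^{2}$ likewise leaves only $\Delta_1\Delta_2=0$, with the extra possibility $\Delta_2=0$). To understand and eliminate these branches I would pass to the common root: when $p_1\neq q_1$ and ${\bf S}_0(\alpha,\beta)=0$, the fibres $f(\alpha,\beta,z)$ and $g(\alpha,\beta,z)$ share the single value $z_0=(q_0-p_0)/(p_1-q_1)$, and since the roots of $f(\alpha,\beta,z)$ sum to $-p_1$ a short computation gives $\Delta_1=(2z_0+p_1)^{2}$ and $\Delta_2=(2z_0+q_1)^{2}$. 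Then $\Delta_2=D^{2}$ reads $2z_0+q_1=\pm(p_1-q_1)$, the minus sign giving $z_0=-p_1/2$ and hence $\Delta_1=0$, the plus sign giving precisely the unwanted $\Delta_1=4D^{2}$. Pinning down why, in the configuration the proposition is meant to cover, only the minus sign occurs — equivalently, supplying the hypothesis that rules out the alternative branch — is the crux of the argument and where I would concentrate the effort. The degenerate case $p_1(\alpha,\beta)=q_1(\alpha,\beta)$ is by contrast immediate: there $D=0$, Lemma~\ref{lemma} gives $p_0(\alpha,\beta)=q_0(\alpha,\beta)$, so $\Delta_1=\Delta_2$ and all three conditions collapse to $\Delta_1=\Delta_2=0$.
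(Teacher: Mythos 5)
Your forward implications reproduce the paper's proof exactly: the paper's entire argument for Proposition \ref{propregular1} consists of substituting $\Delta_{{\cal E}_1}=0$ into Lemma \ref{cutcurveformula} to obtain $\left(\Delta_{{\cal E}_2}-(p_1-q_1)^2\right)^2=0$, and then unwinding $\Delta_{{\cal E}_2}=q_1^2-4q_0$ (together with $p_1^2=4p_0$) to reach $2(p_0+q_0)=p_1q_1$; your difference-of-squares rewriting $16\,{\bf S}_0=\left((p_1-q_1)^2-\Delta_{{\cal E}_1}-\Delta_{{\cal E}_2}\right)^2-4\,\Delta_{{\cal E}_1}\Delta_{{\cal E}_2}$ and the identity $\Delta_{{\cal E}_1}+\Delta_{{\cal E}_2}-(p_1-q_1)^2=2p_1q_1-4(p_0+q_0)$ are a tidier packaging of the same computation. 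The converse step you flag as the crux is not something you failed to find: the paper's proof does not contain it either. Its one displayed equivalence only establishes the direction in which $\Delta_{{\cal E}_1}=0$ is already assumed, and the reverse implications are silently asserted.

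Moreover, the spurious branch you isolated is genuinely realizable, so the unrestricted biconditionals are false as stated. Take a point where the fibres are $f(\alpha,\beta,z)=(z-\tfrac12)(z+\tfrac32)$ and $g(\alpha,\beta,z)=(z-\tfrac12)(z+\tfrac12)$, i.e. $p_1=1$, $p_0=-\tfrac34$, $q_1=0$, $q_0=-\tfrac14$; this occurs at $(\alpha,\beta)=(0,0)$ for the honest quadrics $f=z^2+z+x^2+y^2-\tfrac34$ and $g=z^2+xz-\tfrac14$. The common real root $z_0=\tfrac12$ gives ${\bf S}_0(\alpha,\beta)=0$, yet $\Delta_{{\cal E}_2}=1=(p_1-q_1)^2$ while $\Delta_{{\cal E}_1}=4=4(p_1-q_1)^2\neq 0$ (exactly your plus-sign branch, via $\Delta_{{\cal E}_1}=(2z_0+p_1)^2$), and $2(p_0+q_0)=-2\neq 0=p_1q_1$. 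So equivalence $(I)$ fails right-to-left and $(II)$ fails left-to-right at one and the same point, which moreover lies inside ${\cal A}_{{\cal E}_1,{\cal E}_2}$, so no admissibility or reality hypothesis rescues the statement. What is true --- and all the paper actually uses afterwards --- is the equivalence of the \emph{systems} $\{{\bf S}_0=0,\ \Delta_{{\cal E}_i}=0\}$ and $\{2(p_0+q_0)=p_1q_1,\ \Delta_{{\cal E}_i}=0\}$, and your computations already prove this: the forward direction is your one-liner, and conversely $\Delta_{{\cal E}_1}=0$ together with $\Delta_{{\cal E}_1}+\Delta_{{\cal E}_2}=(p_1-q_1)^2$ forces $\Delta_{{\cal E}_2}=(p_1-q_1)^2$, whereupon Lemma \ref{cutcurveformula} gives ${\bf S}_0=0$. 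So stop searching for the hypothesis that kills the plus-sign branch --- none exists under the stated assumptions; instead restate the proposition as this system equivalence (or keep only the implications emanating from $\Delta_{{\cal E}_i}=0$), and your write-up is complete, and indeed more careful than the paper's own proof.
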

\begin{proof}
Using Lemma \ref{cutcurveformula}, we have \[\Delta _{{{\cal E}_2}}^2 - 2{\left( {{p_1} - {q_1}} \right)^2}{\Delta _{{{\cal E}_2}}} + {\left( {{p_1} - {q_1}} \right)^4} = 0\quad \Longleftrightarrow\quad {\Delta _{{{\cal E}_2}}} = {\left( {{p_1} - {q_1}} \right)^2} \]
and we get the first equivalence. Using this one and ${\Delta _{{{\cal E}_2}}} = {q_1}^2 - 4{q_0}$, we conclude $p_1q_1=2(p_0+q_0)$. The second one is similar.
\end{proof}

As a consequence we have that solving each system 
$${\bf S}_0(x,y)=0,\quad  {\Delta _{{{\cal E}_i}}}(x,y)=0$$ 
is the same than solving the simpler system  
$$2(p_0(x,y)+q_0(x,y))=p_1(x,y)q_1(x,y), \quad  {\Delta _{{{\cal E}_i}}}(x,y)=0 .$$

An easy consequence of the previous proposition is the fact that the cutcurve and the silhouette curves have no common points outside the line $p_1(x,y) = q_1(x,y)$.
    
\begin{cor}
The system $${\bf S}_0(x,y)=0,\; p_1(x,y) \ne q_1(x,y),\; \Delta _{{\cal E}_1}(x,y)=0,\;  \Delta _{{\cal E}_2}(x,y)=0$$ has no real solutions.
\end{cor}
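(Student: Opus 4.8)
The plan is to argue by contradiction and feed the hypotheses directly into Proposition \ref{propregular1}. Suppose, for the sake of contradiction, that the system admits a real solution $(\alpha,\beta)$; thus ${\bf S}_0(\alpha,\beta)=0$, $\Delta_{{\cal E}_1}(\alpha,\beta)=0$, $\Delta_{{\cal E}_2}(\alpha,\beta)=0$, and $p_1(\alpha,\beta)\ne q_1(\alpha,\beta)$. Since ${\bf S}_0(\alpha,\beta)=0$, the hypothesis of Proposition \ref{propregular1} is met, so I am free to invoke its equivalences at the point $(\alpha,\beta)$.

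The key step is to apply equivalence $(I)$ of Proposition \ref{propregular1}. Because $\Delta_{{\cal E}_1}(\alpha,\beta)=0$, that equivalence forces
$$\Delta_{{\cal E}_2}(\alpha,\beta)=\bigl(p_1(\alpha,\beta)-q_1(\alpha,\beta)\bigr)^2.$$
Now I bring in the remaining hypothesis $\Delta_{{\cal E}_2}(\alpha,\beta)=0$, which turns the displayed identity into $\bigl(p_1(\alpha,\beta)-q_1(\alpha,\beta)\bigr)^2=0$, and hence $p_1(\alpha,\beta)=q_1(\alpha,\beta)$.

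This final equality contradicts the standing assumption $p_1(\alpha,\beta)\ne q_1(\alpha,\beta)$, so no such real solution can exist. I expect no genuine obstacle here: the statement is an immediate corollary once Proposition \ref{propregular1} is available. The only point deserving a moment's care is to read the condition $p_1(x,y)\ne q_1(x,y)$ as a pointwise constraint at the prospective solution $(\alpha,\beta)$ (rather than as identical polynomials), so that the squared-difference expression furnished by equivalence $(I)$ is genuinely nonzero and the contradiction is sharp.
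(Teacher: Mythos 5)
Your proof is correct and takes essentially the same route as the paper, which presents this corollary as an immediate consequence of Proposition \ref{propregular1}: exactly as you argue, equivalence $(I)$ applied at a putative solution with $\Delta_{{\cal E}_1}(\alpha,\beta)=0$ and $\Delta_{{\cal E}_2}(\alpha,\beta)=0$ yields $\left(p_1(\alpha,\beta)-q_1(\alpha,\beta)\right)^2=0$, contradicting $p_1(\alpha,\beta)\ne q_1(\alpha,\beta)$. Your closing remark about reading $p_1(x,y)\ne q_1(x,y)$ as a pointwise constraint at the solution is also the intended interpretation.
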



\subsection{Lifting the points of the cutcurve of ${\cal E}_1$ and ${\cal E}_2$}
Next it is shown how to perform the lifting of the points of the cutcurve.
\begin{thm}
If $\left(\alpha,\beta\right)$ is a point in the cutcurve such that $q_1(\alpha,\beta)\ne p_1(\alpha,\beta)$ then the z-coordinate of the point in the intersection curve is given by:
$$z=\frac{p_0(\alpha,\beta)-q_0(\alpha,\beta)}{q_1(\alpha,\beta)-p_1(\alpha,\beta)}.$$ 
If $(\alpha,\beta)$ is a point of the cutcurve such that $q_1(\alpha,\beta)=p_1(\alpha,\beta)$ then the lifting of this singular point can be made by using $g(\alpha,\beta,z)=0$ or $f(\alpha,\beta,z)=0$.
\end{thm}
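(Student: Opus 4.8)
The plan is to treat separately the two cases in the statement, exactly as the theorem splits them. The unifying observation is that, by Theorem~\ref{projectioncurve}, a point of the cutcurve is the projection of a genuine real intersection point, so its $z$--coordinate must annihilate \emph{both} $f$ and $g$, and therefore also their difference $g-f$, which is precisely the first subresultant $\mathbf{S}_1(x,y;z)=(q_1-p_1)z+(q_0-p_0)$ recorded in Section~\ref{sec2}.

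First I would handle the generic case $q_1(\alpha,\beta)\ne p_1(\alpha,\beta)$. Since $(\alpha,\beta)$ lies in the cutcurve, Theorem~\ref{projectioncurve} furnishes a real $z_0$ with $f(\alpha,\beta,z_0)=g(\alpha,\beta,z_0)=0$. Subtracting the two quadratic relations cancels the $z^2$ terms and leaves the linear equation $\mathbf{S}_1(\alpha,\beta;z_0)=(q_1-p_1)z_0+(q_0-p_0)=0$; since $q_1-p_1\ne 0$ at $(\alpha,\beta)$, this determines $z_0$ uniquely as $z_0=(p_0-q_0)/(q_1-p_1)$, the asserted formula. Uniqueness is also structural: when $q_1\ne p_1$ the univariate polynomials $f(\alpha,\beta,z)$ and $g(\alpha,\beta,z)$ cannot be identical, hence share at most one root, namely the one produced by this linear relation.

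Next I would dispatch the degenerate case $q_1(\alpha,\beta)=p_1(\alpha,\beta)$, in which the denominator above vanishes and a separate argument is required. Here Lemma~\ref{lemma} applies verbatim: a cutcurve point lying on the line $p_1=q_1$ automatically satisfies $p_0(\alpha,\beta)=q_0(\alpha,\beta)$ as well. Consequently $f(\alpha,\beta,z)$ and $g(\alpha,\beta,z)$ coincide as polynomials in $z$, so their common roots are simply the (one or two) roots of either quadratic, and the lifting is obtained by solving $f(\alpha,\beta,z)=0$, or equivalently $g(\alpha,\beta,z)=0$.

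I do not expect any genuine obstacle here; the two points that require attention are recognising the difference $g-f$ as the first subresultant $\mathbf{S}_1$ (so that the linear relation, and hence the formula, is immediate) and invoking Lemma~\ref{lemma} in the degenerate branch to justify that either defining equation may be used. It is worth remarking that in the degenerate case the fibre over $(\alpha,\beta)$ may contain two distinct points, which is consistent with the fact that such $(\alpha,\beta)$ are exactly the singular points of the cutcurve analysed in Section~\ref{lifting1}.
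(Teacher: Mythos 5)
Your proof is correct and takes essentially the same route as the paper's: at a cutcurve point Theorem~\ref{projectioncurve} supplies a real common root $z_0$ of $f$ and $g$, which must annihilate $g-f=\mathbf{S}_1(x,y;z)=(q_1-p_1)z+(q_0-p_0)$, and solving this linear equation when $q_1(\alpha,\beta)\ne p_1(\alpha,\beta)$ gives the stated formula. You in fact go a step beyond the paper's two-line proof by explicitly justifying the degenerate branch $p_1(\alpha,\beta)=q_1(\alpha,\beta)$ via Lemma~\ref{lemma} (so that $f(\alpha,\beta,z)\equiv g(\alpha,\beta,z)$ and either equation lifts the point), a part the paper states without argument; only your closing aside is slightly loose, since cutcurve points on the line $p_1=q_1$ are singular but not every singular point lies on that line (cf.\ Theorem~\ref{singular2}).
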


\begin{proof} 
As we have seen
$${\bf S}_1(x,y;z)={\bf Sres}_1(f,g;z)=(q_1(x,y)-p_1(x,y))z+(q_0(x,y)-p_0(x,y)).$$
If $(\alpha,\beta)\in \Pi \left( {{\cal E}_1 \cap {\cal E}_1} \right)$ then
$${\bf S}_1(\alpha,\beta;z)=0\Leftrightarrow z=\frac{p_0(\alpha,\beta)-q_0(\alpha,\beta)}{q_1(\alpha,\beta)-p_1(\alpha,\beta)}$$ as desired.
\end{proof}

\section{Experimentation}
This section will present the experimentation performed together with some examples showing how to compute the intersection curve of two quadrics by using the results presented in the previous sections. In all examples two polynomials $f$ and $g$ define two quadrics ${\cal E}_1$ and ${\cal E}_2$  respectively. In these examples $${\cal A}_{{\cal E}_1,{\cal E}_2}=
\{(x,y)\in\R^2\colon \Delta _{{\cal E}_1}(x,y)\ge0,\; \Delta _{{\cal E}_2}(x,y)\ge0\}$$ defines the region where the cutcurve, defined by ${\bf S}_0(x,y)=0$, lives. Typically the lifting of the cutcurve will be made by using ${\bf S}_1(x,y;z)$. When one of the singular  points of the cutcurve can not be lifted by using ${\bf S}_1(x,y;z)$, we will  use $g(x,y,z)$ or $f(x,y,z)$ for that purpose (for those singular points outside the line $p_1(x,y)=q_1(x,y)$ we can use ${\bf S}_1(x,y;z)$ for performing the lifting too).

The way of proceeding will be always the following one:
\begin{enumerate}
\item Compute the implicit equation for the cutcurve ${\bf S}_0(x,y)$.
\item Compute the description of the region ${\cal A}_{{\cal E}_1,{\cal E}_2}$ where the cutcurve lives. These two polynomial inequalities will be used to determine which points on the projection we are going to consider.
\item Compute the singular points of the cutcurve ${\bf S}_0(x,y)$ in the line $p_1(x,y)=q_1(x,y)$: just solving a degree two univariate equation by Corollary \ref{singpoints}. Their lifting is made by using $f$ or $g$. This step requires to solve three univariate degree two equations
\item Compute the singular points of the cutcurve ${\bf S}_0(x,y)$ outside the line $p_1(x,y)=q_1(x,y)$: points coming from the projection of a tangential intersection whose lifting is made by using ${\bf S}_1(x,y;z)$ (computations guided by Theorem \ref{tangencutcurve}).
\item Compute the regular points of the cutcurve which are in the silhouette curves by using the Proposition \ref{propregular1} and their lifting by using ${\bf S}_1(x,y;z)$. These points will lead the discretisation of the cutcurve or the computation of the intervals where the parameterisation determined can be evaluated (since it might involve radicals) since they contain the points where the cutcurve ``starts" and ``stops" (if any). This step requiere to intersect two couples of conics.
\item Compute the branches of the cutcurve (always inside ${\cal A}_{{\cal E}_1,{\cal E}_2}$) by closed formulae involving radicals or discretising them (and their lifting by using ${\bf S}_1(x,y;z)$). Either solving degree two equations of computing numerically the solutions of the, at most degree four, univariate equation ${\bf S}_0(\alpha,y)$ for several values of $\alpha$.
\end{enumerate}

In the next four examples, the lifting of the cutcurve will be made after discretising the regular branches of the cutcurve (always inside ${\cal A}_{{\cal E}_1,{\cal E}_2}$).  All examples include computations made with Maple. In the (right) next figures the cutcurve is drawn in black, silhouette curves are drawn in blue (and in the last example in red and blue) and the line $p_1(x,y)=q_1(x,y)$ will be always dotted. The admissible region is always the darker region.

\begin{examp}\label{third}{\rm 
Let $f$ and $g$ be the polynomials
$$f(x,y,z)=z^2+(-6x-y-1)z-9x^2-3xy+4y^2+9x-9y-2$$ $$g(x,y,z)=z^2-2z+x^2-3y^2+9x-2y+6$$
defining two ellipsoids ${\cal E}_1$ and ${\cal E}_2$, whose intersection curve is to be computed. In this case we have:
$${\bf S}_0(x,y)=136x^4+72x^3y-238x^2y^2-78xy^3+46y^4+432x^3+230x^2y-15xy^2-108y^3$$
$$+249x^2+204xy-28y^2+33x+100y+54$$
$${\cal A}_{{\cal E}_1,{\cal E}_2}=\left\{ (x,y) \in \mathbb{R}^2: 
 \begin{array}{l}
72x^2+24xy-15y^2-24x+38y+9\ge0,\\ -4x^2+12y^2-36x+8y-20\ge0
\end{array} \right\}.$$
From Corollary \ref{singpoints}, the singular points of the cutcurve in the line $p_1(x,y)=q_1(x,y)$ are determined by solving:
$$76x^2+24xy-27y^2+12x+30y+29=0 \wedge   6x + y + 1 =   2  .$$
In this way we get the singular points
$$A=\left( {\frac{9}{{104}} + \frac{{\sqrt {10345} }}{{520}},\frac{{25}}{{52}} - \frac{{3\sqrt {10355} }}{{260}}} \right)\qquad
B=\left( {\frac{9}{{104}} - \frac{{\sqrt {10345} }}{{520}},\frac{{25}}{{52}} + \frac{{3\sqrt {10355} }}{{260}}} \right) .$$ The first one, $A$, is an isolated point of the cutcurve but outside ${\cal A}_{{\cal E}_1,{\cal E}_2}$ and will not be considered. The second one, $B$, will be lifted by using $f(x,y,z)=0$ or $g(x,y,z)=0$ providing two different points in the intersection curve.

Point $C=(-0.5989698028,-0.6502822952)$ is common to ${\bf S}_0(x,y)=0$ and $\Delta_{{\cal E}_1}(x,y)=0$ and, by using Proposition \ref{propregular1}, was determined by solving:
$$-16x^2-6xy+2y^2+24x-24y+6=0 \wedge -4x^2+12y^2-36x+8y-20=0.$$

Points $D=(-2.336955328,-6.163216205)$ and $E=(21.765280490,-32.199082657)$ are common to  ${\bf S}_0(x,y)=0$ and $\Delta_{{\cal E}_1}(x,y)=0$ and, from Proposition \ref{propregular1}, were determined by solving:
$$-16x^2-6xy+2y^2+24x-24y+6=0 \wedge 72x^2+24xy-15y^2-24x+38y+9=0$$

The lifting of $\Pi({\cal E}_1\cap{\cal E}_2)$, outside the singular points of the cutcurve, will be made by using
${\bf S}_1(x,y;z)$: $$z = \frac{10x^2+3xy-7y^2+7y+8}{-6x-y+1}.$$

\begin{figure}
\centering
\fbox{\includegraphics[scale=0.51]{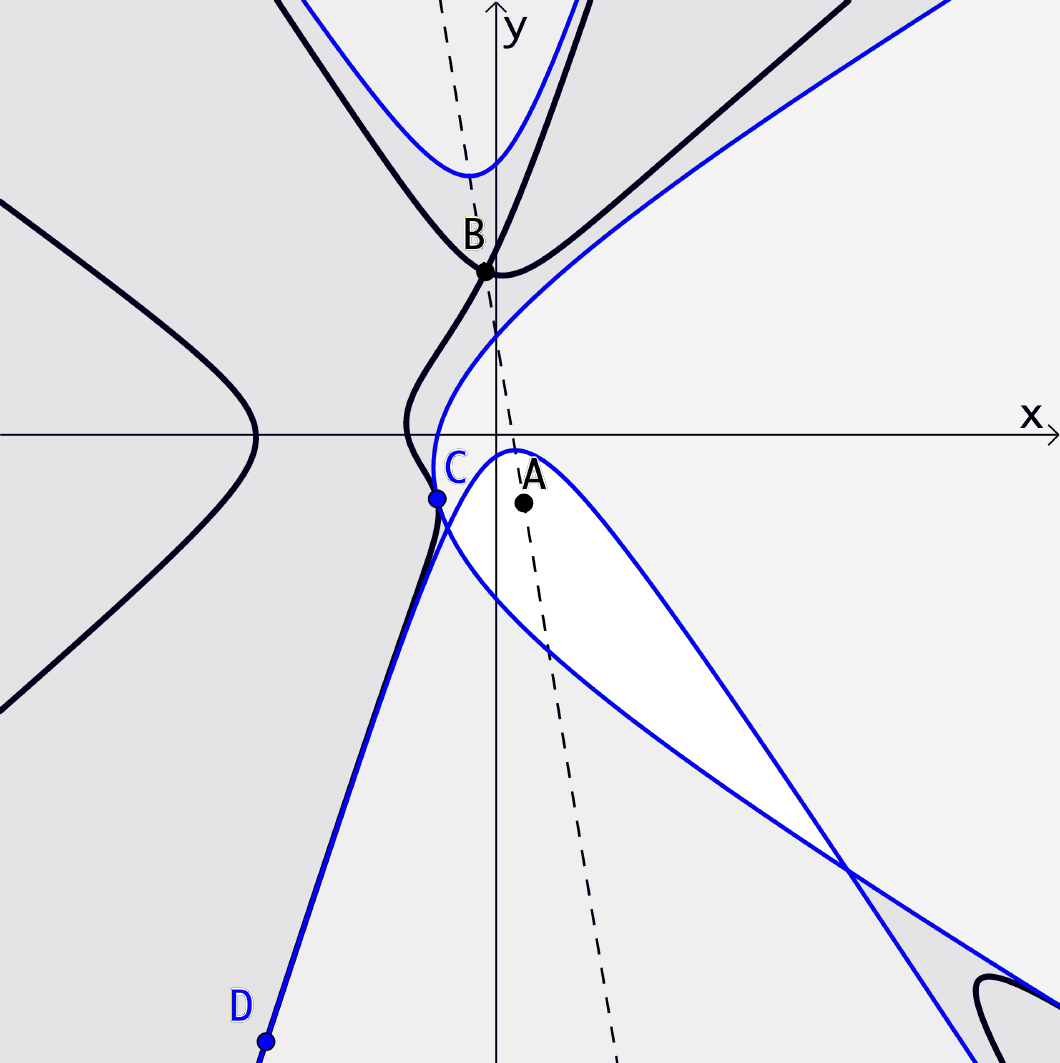}}\hskip 2cm
\fbox{\includegraphics[scale=0.55]{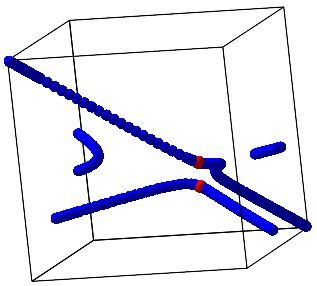}}
\end{figure}}
\end{examp}

\begin{examp}\label{fourth}{\rm 
Let $f$ and $g$ be the polynomials
$$f(x,y,z)=z^2+xz+y \qquad g(x,y,z)=z^2+yz+x$$
defining two hyperbolic paraboloids, ${\cal E}_1$ and ${\cal E}_2$, whose intersection curve is to be computed. In this case we have
$${\bf S}_0(x,y)=(x-y)^2(x+y+1)$$
and 
$$
{\cal A}_{{\cal E}_1,{\cal E}_2}=\{(x,y)\in\R^2\colon x^2-4y\ge 0,\; y^2-4x\ge 0 \}.$$

All the points in the line $p_1(x,y)=q_1(x,y)$ belong to the cutcurve and they are singular: this can be checked  by solving (according to Corollary \ref{singpoints}):
$$x^2-y^2+4x-4y=0 \wedge x-y=0.$$
In this case, there is one special singular point, $D=(-1/2,-1/2)$, that has been already classified since it belongs to the line $p_1(x,y)=q_1(x,y)$.

Points $O=(0,0)$, $B=(1,-2)$ and $C=(4,4)$ are common to ${\bf S}_0(x,y)=0$ and $\Delta_{{\cal E}_1}(x,y)=0$ and, from Proposition \ref{propregular1}, were determined by solving
$$-xy+2x+2y=0 \wedge y^2-4x=0.$$
On the other hand, points $A=(-2,1)$, $O=(0,0)$ and $C=(4,4)$ are common to ${\bf S}(x,y)=0$ and $\Delta_{{\cal E}_2}(x,y)=0$ and, from Proposition \ref{propregular1}, were determined by solving
$$-xy+2x+2y=0 \wedge x^2-4y=0.$$
Note that points $O$ and $C$ are common to ${\bf S}_0=0$, $\Delta_{{\cal E}_1}=0$, $\Delta_{{\cal E}_2}=0$ and, as seen in Propostion \ref{propsingular}, belong to the line $x-y=0$.

The lifting of $\Pi({\cal E}_1\cap{\cal E}_2)$, outside the singular points of the cutcurve, will be made by using
${\bf S}_1(x,y;z)$: $$z=\frac{x-y}{x-y} = 1.$$

Singular points of the cutcurve will be lifted by using $g(x,y,z)$:
$$g(x,y,z)=0\Leftrightarrow z =  - \frac{y}{2} + \frac{{\sqrt {{y^2} - 4x} }}{2} \vee z =  - \frac{y}{2} - \frac{{\sqrt {{y^2} - 4x} }}{2}.$$

To characterise the intersection curve of  ${\cal E}_1$ and ${\cal E}_2$ we must determine the cutcurve ${\bf S}_0(x,y)=0$ and its lifting. We will use the following functions: 
\begin{itemize}
    \item For $x \in \mathbb{R}$, we define:
    $$h_1(x)=x \qquad h_2(x)=-x-1$$
    \item Let $e_1$ and $e_2$ be the functions defined by:
    $${e_1}\left( {x,y} \right) =  - \frac{y} {2} + \frac{{\sqrt {{y^2} - 4x} }} {2} \qquad {e_2}\left( {x,y} \right) =  - \frac{y}{2} - \frac{{\sqrt {{y^2} - 4x} }}{2}$$
\end{itemize}
The parameterisation of the intersection curve is given by the following three components:
\begin{itemize}
    \item For $x \in \left] { - \infty ,0} \right] \cup \left[ {4, + \infty } \right[$: $(x,h_1(x),e_1(x,h_1(x))$.
 \item For $x \in \left] { - \infty ,0} \right] \cup \left[ {4, + \infty } \right[$:
    $(x,h_1(x),e_2(x,h_1(x))$.
    \item For $x \in \mathbb{R}$: $(x,h_2(x),1)$.
\end{itemize}
The intervals are determined by analysing the radical expressions and solving the corresponding (univariate) polynomial inequalities.

By using the QI online computation server (available at \url{https://gamble.loria.fr/qi/server/}), we get a parameterisation of the intersection curve without involving radicals. Instead here we show how to get a parameterisation of the intersection curve by solving several degree two equations.

\begin{minipage}{.5\linewidth}
 \centering
 \fbox{\includegraphics[width=0.7\linewidth]{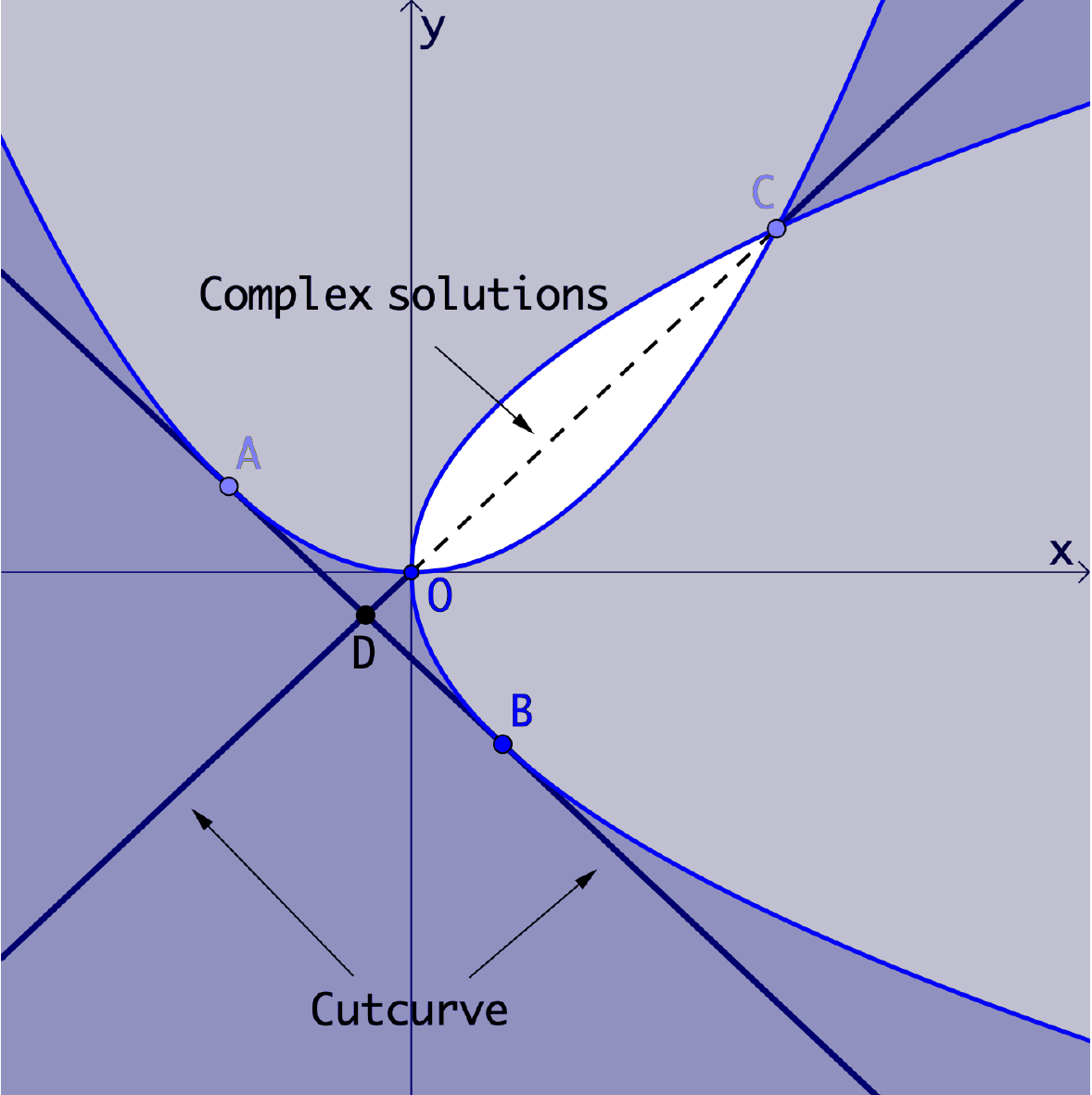}}
\end{minipage}
\begin{minipage}{.5\linewidth}
 \centering
\fbox{\includegraphics[width=0.75\linewidth]{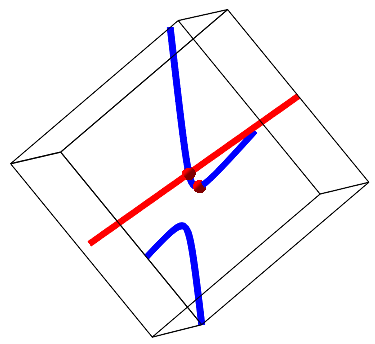}}
\end{minipage}}
\end{examp}

\begin{examp}\label{fifth}{\rm 
Let $f$ and $g$ be the polynomials
$$f(x,y,z)=z^2+(y-2x-1)z-x^2-y^2-xy+x-y+1$$ $$g(x,y,z)=z^2+(x-y)z+x^2+y^2-xy-2x+y-5$$
defining one hyperboloid of one sheet and one ellipsoid, ${\cal E}_1$ and ${\cal E}_2$, whose intersection curve is to be computed. In this case we have
$${\bf S}_0(x,y)=7x^4-11x^3y+23x^2y^2-6xy^3+4y^4-17x^3+12x^2y-8xy^2+6y^3-50x^2+26xy$$
$$-29y^2+10x-9y+31$$
$$
{\cal A}_{{\cal E}_1,{\cal E}_2}=\{(x,y)\in\R^2\colon 8x^2+5y^2+2y-3\ge 0,\; -3x^2+2xy-3y^2+8x-4y+20\ge 0 \}.$$

From Corollary \ref{singpoints}, singular points of the cutcurve in the line $p_1(x,y)=q_1(x,y)$ were determined by solving:
\[11x^2-2xy-8x+8y^2+6y-23 \wedge -3x+2y-1= 0\]
They are $$A=\left( { - \frac{3}{{13}} - \frac{{3\sqrt {14} }}{{13}},\frac{2}{{13}} - \frac{{9\sqrt {14} }}{{26}}} \right)\qquad B=\left( { - \frac{3}{{13}} + \frac{{3\sqrt {14} }}{{13}},\frac{2}{{13}} + \frac{{9\sqrt {14} }}{{26}}} \right)$$ and can be lifted by using $g(x,y,z)=0$ or $f(x,y,z)=0$ producing four (singular) points in the intersection curve.

Common points to ${\bf S}_0(x,y)=0$ and $\Delta_{{\cal E}_2}(x,y)=0$ can be determined, according to Proposition \ref{propregular1}, by solving
$$2x^2-7xy+y^2-x-y-8=0 \wedge -3x^2+2xy-3y^2+8x-4y+20=0 .$$
These points are $C=(-1.468654,0.233082)$, $D=(-0.575622,1.494633)$, $E=(-0.285566,-3.292475)$ and $F=(4.341889,0.829820)$. The lifting of $\Pi({\cal E}_1\cap{\cal E}_2)$, outside the singular curves of the cutcurve, will be made by using
${\bf S}_1(x,y;z)$: $$z = -\frac{2x^2+2y^2-3x+2y-6}{3x-2y+1}.$$

\begin{minipage}{.5\linewidth}
 \centering
\fbox{\includegraphics[width=0.7\linewidth]{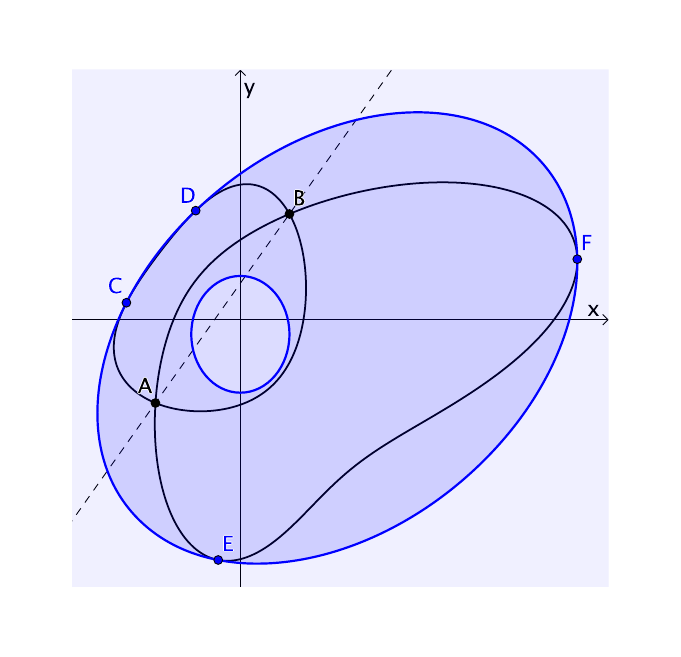}}
\end{minipage}
\begin{minipage}{.5\linewidth}
 \centering
\fbox{\includegraphics[width=0.7\linewidth]{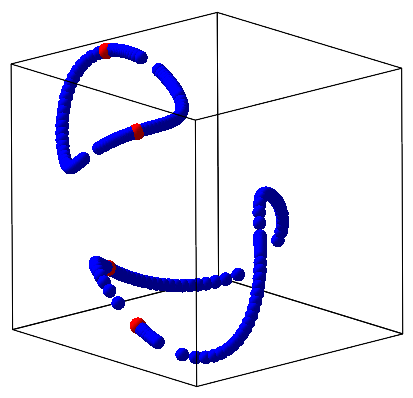}}
\end{minipage}}
\end{examp}

Last example shows a situation where the introduced technics are specially useful to determine the intersection curve of the two considered quadrics.

\begin{examp}\label{sixth}{\rm 
 Let $f$ and $g$ be the polynomials
$$f(x,y,z)={z^2} + \left( { - \frac{2}{3}x + \frac{2}{3}y} \right)z + \frac{{{x^2}}}{3} + \frac{{{y^2}}}{3} - \frac{1}{3}$$ $$g(x,y,z)={z^2} + \left( {\frac{{ - 2x + 24y - 2}}{{17}}} \right)z + \frac{{{x^2}}}{{17}} + \frac{{2x}}{{17}} - \frac{3}{{17}} + \frac{{12{y^2}}}{{17}}$$
defining two ellipsoids, ${\cal E}_1$ and ${\cal E}_2$, whose intersection curve is to be computed. In this case we have:
$${\bf S}_0(x,y)=\frac{{196}}{{2601}}{x^4} + \frac{{616}}{{2601}}{x^3}y + \frac{{920}}{{2601}}{x^2}{y^2} + \frac{{836}}{{2601}}x{y^3} + \frac{{361}}{{2601}}{y^4} - \frac{{112}}{{2601}}{x^3} - \frac{{56}}{{867}}{x^2}y - \frac{{112}}{{2601}}x{y^2}$$
$$- \frac{{76}}{{2601}}{y^3} - \frac{{104}}{{867}}{x^2} - \frac{{632}}{{2601}}xy - \frac{{368}}{{2601}}{y^2} + \frac{{176}}{{2601}}x + \frac{{184}}{{2601}}y + \frac{{52}}{{2601}}$$
$$
{\cal A}_{{\cal E}_1,{\cal E}_2}=\left\{(x,y)\in\R^2\colon
-2x^2-2xy-2y^2+3 \ge 0,
-4x^2-6xy-15y^2-8x-6y+13 \ge 0
\right\}$$

Singular points of the cutcurve in the line $p_1(x,y)=q_1(x,y)$ can be determined, as seen in Corollary \ref{singpoints}, by solving
$$-434x^2-362xy-38y^2+288x+216y+399=0\wedge -14x-19y+3=0$$
They are $$A=\left( {\frac{3}{{14}} - \frac{{11\sqrt {95} }}{{70}},\frac{{11\sqrt {95} }}{{95}}} \right)\qquad C=\left( {\frac{3}{{14}} + \frac{{11\sqrt {95} }}{{70}},-\frac{{11\sqrt {95} }}{{95}}} \right)$$ and are outside ${\cal A}_{{\cal E}_1,{\cal E}_2}$: therefore they will not be lifted. Moreover, the cutcurve has a third singular (and isolated) point $B=(1,0)$ which is inside ${\cal A}_{{\cal E}_1,{\cal E}_2}$ but not in the line $p_1(x,y)=q_1(x,y)$. This means that $B$ is the projection of a tangential intersection point of ${\cal E}_1$ and ${\cal E}_2$ and it has been computed by using Theorem \ref{tangencutcurve}.

Points $$D=(-1.310086292,1.116297338)\qquad E=(-1.032926046,-0.320076179)$$ are common to ${\bf S}_0(x,y)=0$ and $\Delta_{{\cal E}_1}(x,y)=0$ and, from Proposition \ref{propregular1}, were determined by solving
$$18x^2+26xy+29y^2+4x+2y-26 = 0 \wedge -2x^2-2xy-2y^2+3 = 0  .$$

On the other hand, the points $$F=(-1.310059433,1.116308957) \qquad G=(0.4229961827, -1.105788551)$$ are common to ${\bf S}_0(x,y)=0$ and $\Delta_{{\cal E}_2}(x,y)=0$ and, from Proposition \ref{propregular1}, were determined by solving
$$18x^2+26xy+29y^2+4x+2y-26 = 0 \wedge -4x^2-6xy-15y^2-8x-6y+13= 0  .$$

Figure \ref{AFD_2} (left) shows the location of all these points with respect to the cutcurve and the silhouette curves.  In order to determine what is the relative position of the points $A$, $F$ and $D$ with respect to this three curves we need to use the results introduced in the previous section. Figure \ref{AFD_2} (center and right) shows in detail what is happening in that area.

\begin{figure}[hbt]
\centering
\fbox{\includegraphics[scale=0.15]{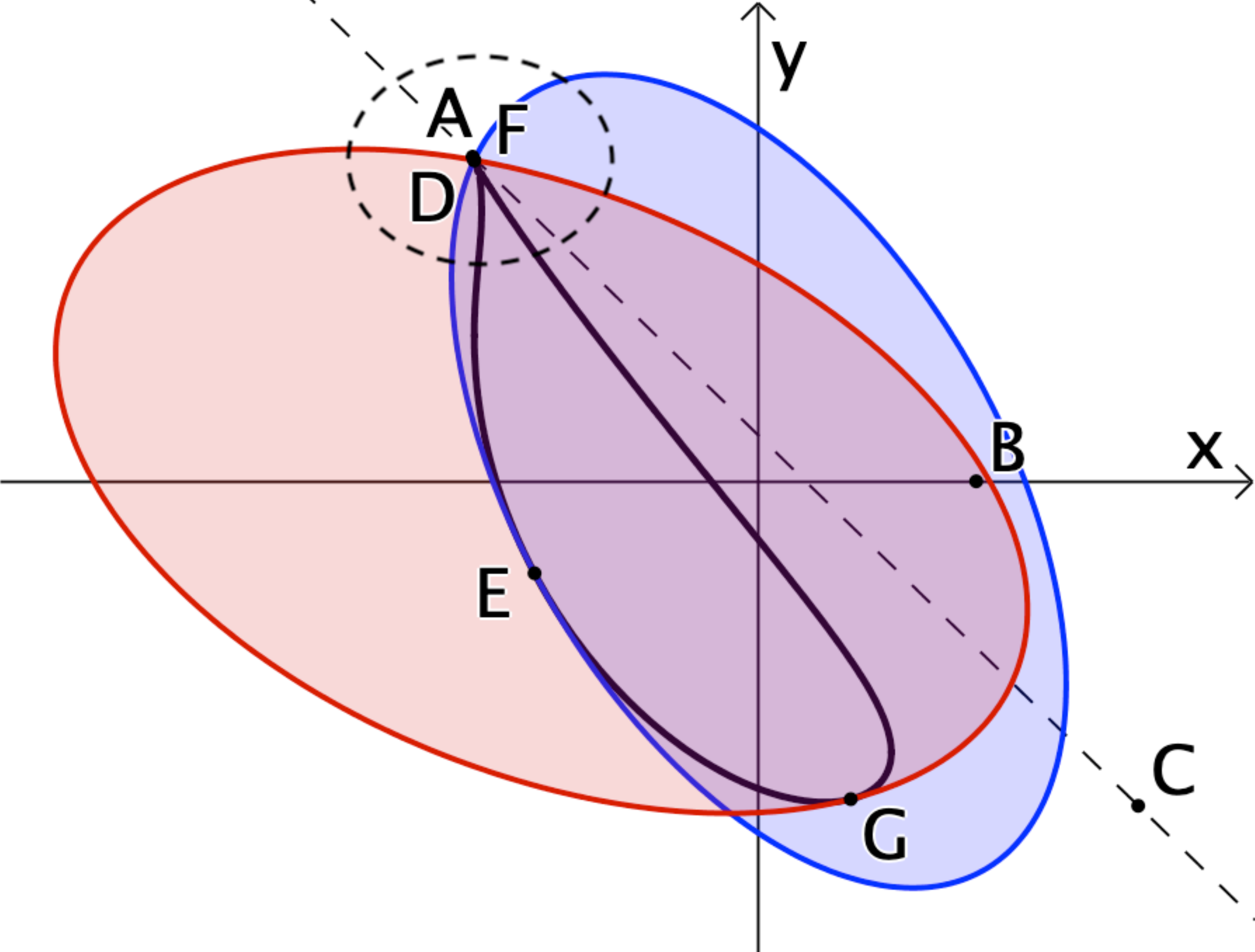}}\hskip .75cm 
\fbox{\includegraphics[scale=.8]{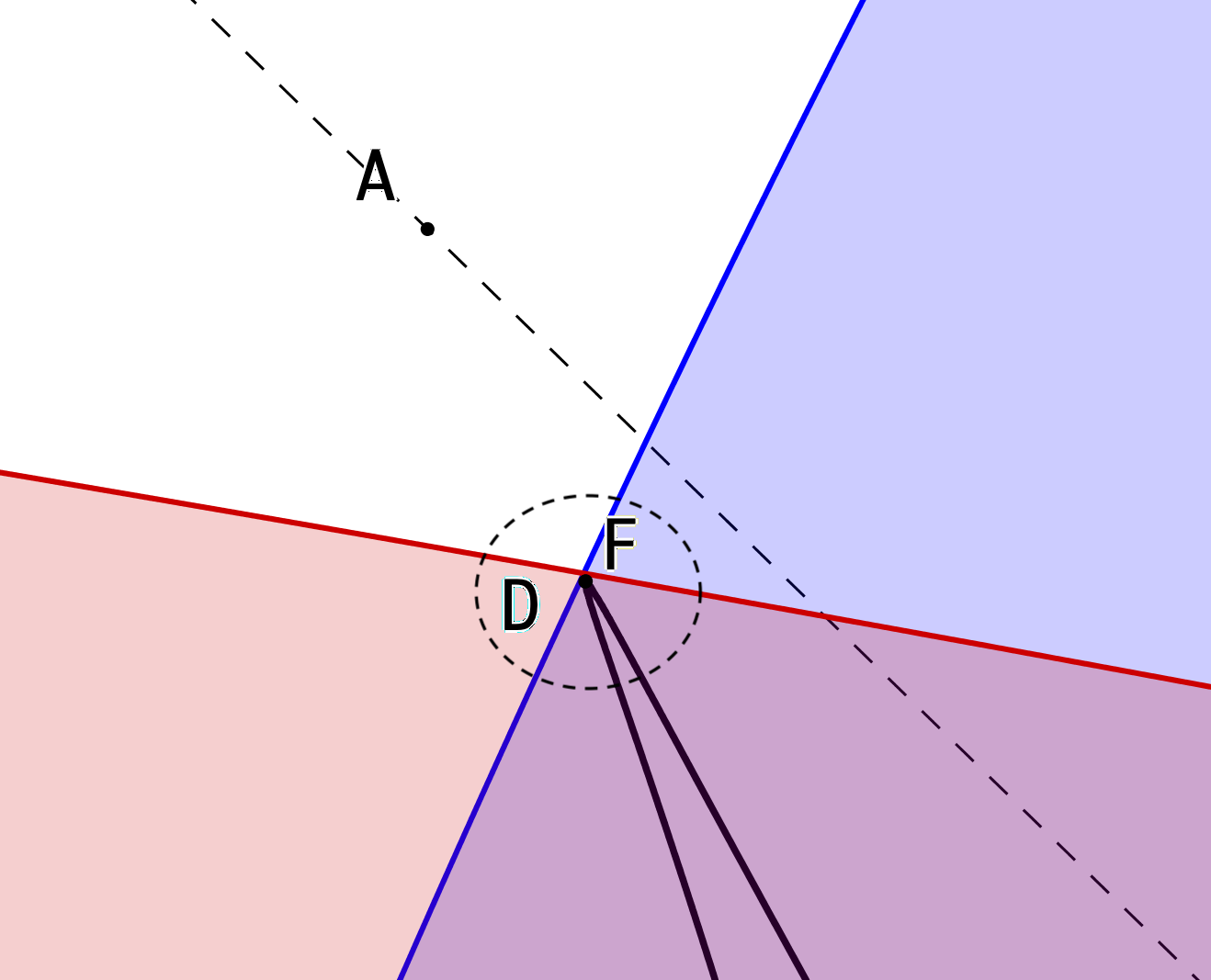}}\hskip .75cm 
\fbox{\includegraphics[scale=0.15]{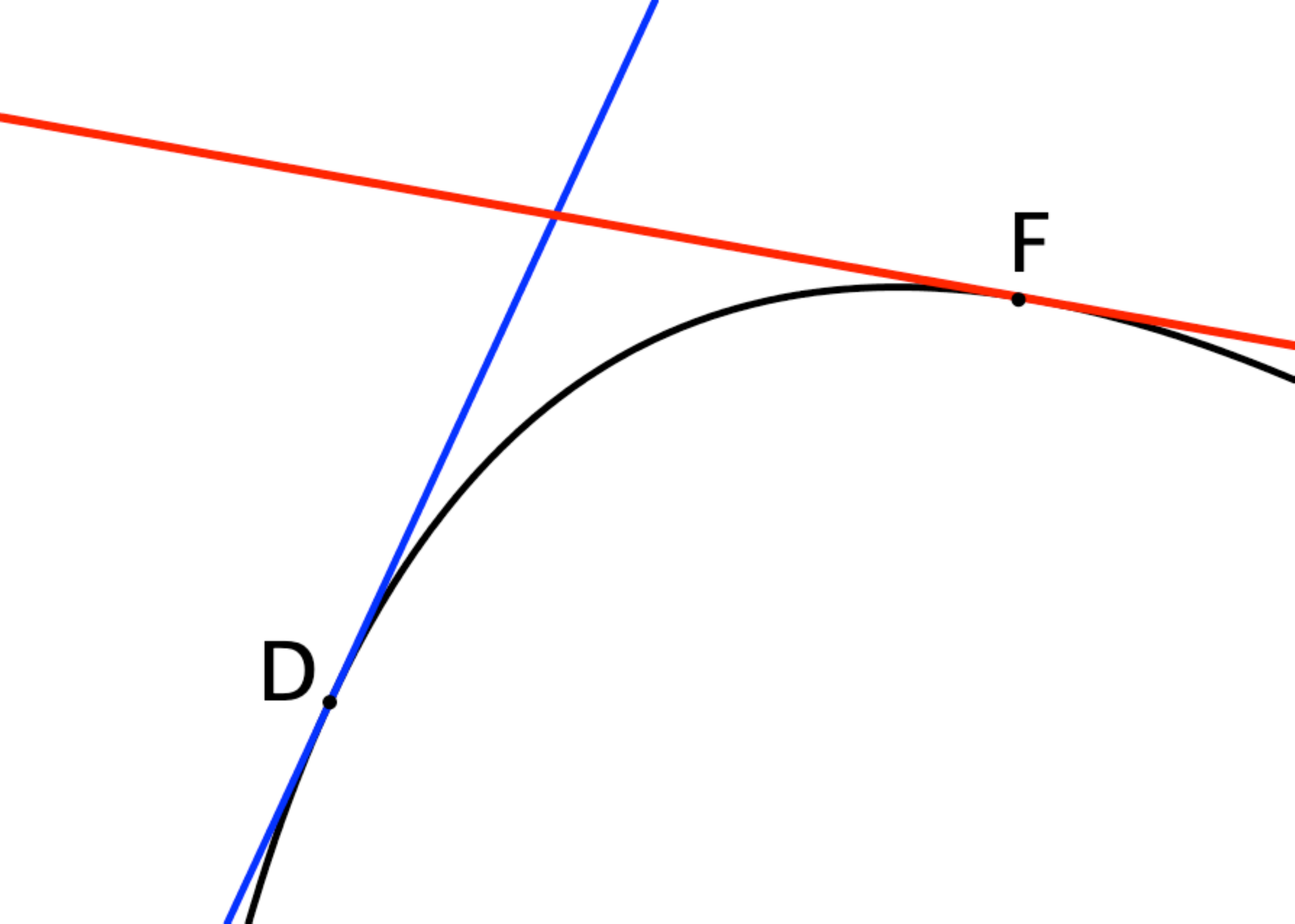}}
\caption{\label{AFD_2}Cutcurve and silhouette curves showing where points $A$, $F$ and $D$ are (left). Location of points $A$, $F$ and $D$  with respect to the cutcurve and the silhouette curves (center and right).}
\end{figure}

Point $A$ is outside the region ${\cal A}_{{\cal E}_1,{\cal E}_2}$ and does not play any role when computing the intersection curve between the two considered quadrics. Points $F$ and $D$ does belong to the intersection between the cutcurve and one of the silhouette curves. Proposition \ref{propsingular} helps to conclude that the intersection between the cutcurve and the two silhouette curves is empty and not $A$ as Figure \ref{AFD_2} (left) could suggest.

The lifting of the regular points of the cutcurve and of the point $B$ will be made by using
${\bf S}_1(x,y;z)$: $$z = \frac{14x^2-19y^2-6x-8}{28x+38y-6}.$$
Note that $B$ is the projection of the point $(1,0,0)$ where the ellipsoids are tangent and this is the reason why it can be lifted by using ${\bf S}_1(x,y;z)$ (as seen in Theorem \ref{singular2}). The intersection curve of the two considered quadrics can be found in Figure \ref{AFD_3}.

\begin{figure}[H]
\centering
\fbox{\includegraphics[scale=0.5]{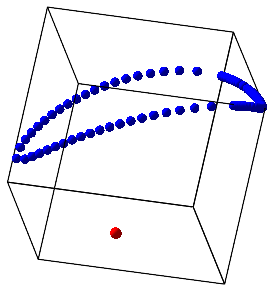}}
\caption{\label{AFD_3}Two ellipsoids with a curve and an isolated point (tangent to both ellipsoids) in common.}
 \end{figure}}
\end{examp}

The algorithm presented here has been fully implemented in Maple. Table \ref{table} shows its behaviour when applied to a database of $50$ examples most of them taken from the QI online computation server available at \url{https://gamble.loria.fr/qi/server/}. In the annex the equations of each pair of quadrics used can be found.

Figure \ref{maple} shows the shape of the output of this implementation when applied to a particular case. In this concrete case the intersection curve is not discretised and it is presented by a parameterisation involving radicals but the output shows the intervals where $x$ can be evaluated. For this concrete example the cutcurve has two singular points inside the admissible region and there are no tangential intersections whose projection is outside the line $p_1(x,y)=q_1(x,y)$.

\begin{figure}[hbt]
\centering
\fbox{\includegraphics[scale=0.55]{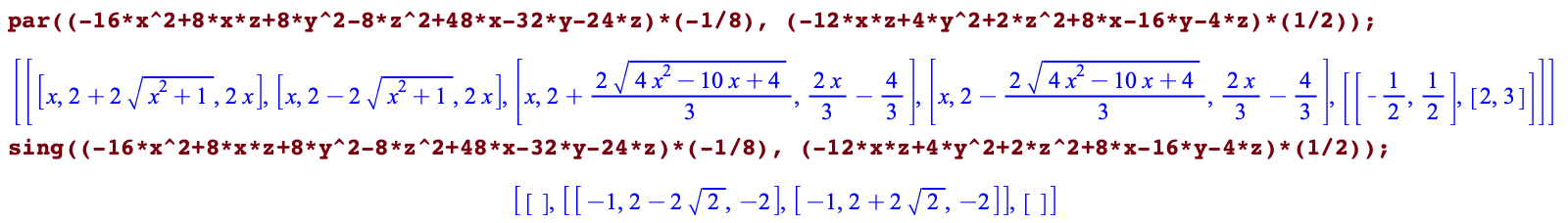}}
\caption{\label{maple}Using Maple to compute the intersection curve of two quadrics.}
 \end{figure}

The meaning of the columns in Table \ref{table} is the following: 
\begin{itemize} 
\item Third column reflects if the intersection curve has been discretised or not. 
\item Forth column shows the number of points computed when the intersection curve has been discretised. 
\item Fifth column shows if there are tangential intersection points whose projection is outside the line $p_1(x,y)=q_1(x,y)$
\end{itemize}

\begin{table}[H]\scriptsize
\centering
\begin{tabular}{|c|c|c|c|c|}
\hline
\textbf{Example} & \textbf{Time}  & \textbf{Discretisation} & \textbf{Number of Points} & \textbf{Tangential intersections}\\ \hline
Example 1        & 0.395          & yes      &     40     &       no     \\ \hline
Example 2        & 1.295          & yes      &      242    &      no         \\ \hline
Example 3        & 1.380          & yes      &     244     &     no          \\ \hline
Example 4        & 0.313          & no     &          &        no       \\ \hline
Example 5        & 1.586          & yes     &    476      &     no           \\ \hline
Example 6        & 1.502          & yes     &    472    &        no        \\ \hline
Example 7        & 1.338          & yes     &     362     &        no        \\ \hline
Example 8        & 1.290          & yes     &      202    &      yes          \\ \hline
Example 9       & 0.822          & yes     &   80   &         no       \\ \hline
Example 10       & 3.791          & yes     &   840       &     no           \\ \hline
Example 11       & 0.193          & no     &          &          no      \\ \hline
Example 12       & 3.387          & yes    &      762    &        no         \\ \hline
Example 13       & 0.236          & no     &          &           no       \\ \hline
Example 14       & 0.353          & no     &          &           no     \\ \hline
Example 15       & 0.357          & no     &          &       no         \\ \hline
Example 16       & 0.292          & no      &          &      no          \\ \hline
Example 17       & 0.330          & no     &          &       no          \\ \hline
Example 18       & 0.294          & yes      &    40      &     no          \\ \hline
Example 19       & 0.224          & no      &          &        no        \\ \hline
Example 20       & 0.372          & no     &          &         no        \\ \hline
Example 21       & 3.114          & yes     &   1114       &     no           \\ \hline
Example 22       & 0.215          & no      &          &        no        \\ \hline
Example 23       & 0.247          & no      &          &        no       \\ \hline
Example 24       & 0.616          & no      &          &         no       \\ \hline
Example 25       & 0.265          & no      &          &        no        \\ \hline
Example 26       & 0.334          & no     &          &         no       \\ \hline
Example 27       & 3.541          & yes    &     558     &      no          \\ \hline
Example 28       & 0.296          & no     &          &          no       \\ \hline
Example 29       & 1.664          & yes    &    472      &       no          \\ \hline
Example 30       & 0.181          & no   &          &            no      \\ \hline
Example 31       & 0.143          & no    &          &           no       \\ \hline
Example 32       & 0.200          & no    &          &            no      \\ \hline
Example 33       & 0.290          & no   &          &           no        \\ \hline
Example 34       & 0.234          & no    &          &       no           \\ \hline
Example 35       & 0.273          & no    &          &       no           \\ \hline
Example 36       & 0.329          & no    &          &       no           \\ \hline
Example 37       & 0.351          & no   &          &        no           \\ \hline
Example 38       & 0.237          & no   &          &         no          \\ \hline
Example 39       & 0.242          & no   &          &         no          \\ \hline
Example 40       & 0.234          & no   &          &         yes          \\ \hline
Example 41       & 0.272          & no  &          &         yes         \\ \hline
Example 42       & 0.200          & no  &          &        yes           \\ \hline
Example 43       & 0.332          & no   &          &        no           \\ \hline
Example 44       & 0.899          & yes  &     160     &      no             \\ \hline
Example 45       & 0.310          & no  &          &        no            \\ \hline
Example 46       & 0.304          & no &         &         yes          \\ \hline
Example 47       & 1.165          & no   &          &        yes           \\ \hline
Example 48       & 5.835          & yes  &    1574  &        yes           \\ \hline
Example 49       & 6.542          & yes  &   1894       &     yes              \\ \hline
Example 50       & 1.195         &  yes &    120     &      yes             \\ \hline
\end{tabular}
\caption{\label{table} Results from applying the Maple implementation to the quadrics in the Annex.}
\end{table}

Its practical behaviour is quite good and admits several improvements specially the step where the intersection curve is discretised. At this moment this is made by solving as many times as needed the equation $S_0(\alpha,y)=0$ for $\alpha$ belonging to the different intervals provided by the singular points and critical points of the curve $S_0(x,y)=0$. It must be noted here that the equation $S_0(\alpha,y)=0$ is never solved when $\alpha$ is the projection of a singular point of $S_0(x,y)=0$ since these points are previously computed by using the results in subsection \ref{lifting1}.

\section{Conclusions}
We have introduced a new approach to deal with the computation of the intersection curve of two quadrics. The main ingredients of this approach are a detailed analysis of the cutcurve, its singular points and of its relation with the silhouette curves together with the using of an uniform way to perform the lifting of the cutcurve to the intersection curve of the two considered quadrics.

Concerning the analysis of the cutcurve we classify its singular points in two different types depending on how they will be lifted. Those belonging to the line $p_1(x,y)=q_1(x,y)$ are easy to compute and difficult to lift (but just solving a degree two equation) and those not in that line which are more complicated to be determined but easier to lift.

This approach is not intended to classify the intersection curve between the two considered quadrics. Its main goal is to produce in a very direct way a description of the intersection curve which is topologically correct. This is the reason why we allow in the lifting of the cutcurve, when possible, the use of radicals or we rely on the discretisation of the branches of the cutcurve (uniquely determined by the points computed in that curve).

The algorithm has been fully implemented in Maple showing a very good practical behaviour.

\bibliographystyle{elsarticle-num}


\begin{thebibliography}{88}
\bibitem{Basu2006}
Basu, S., Pollack, R., Roy, M.-F (2006).
Algorithms in Real Algebraic Geometry. Springer--Verlag, http://doi.org/10.1007/3-540-33099-2.

\bibitem{Berberich:2005:ECE:1064092.1064110}
Berberich, E., Hemmer, M., Kettner, L., Schömer, E., \& Wolpert, N. (2005). An exact, complete and efficient implementation for computing planar maps of quadric intersection curves. In Proceedings of the Twenty-first32
Annual Symposium on Computational Geometry, SCG ’05, ACM, New York, NY, USA, pp. 99–-106. http://doi.org/10.1145/1064092.1064110.


\bibitem{DUPONT2008168}
Dupont, L., Lazard, D., Lazard, S., \& Petitjean, S. (2008). Near-optimal parameterization of the intersection of quadrics: I. The generic algorithm. Journal of Symbolic Computation 43 (3), 168-191. http://doi.org/10.1016/j.jsc.2007.10.006

\bibitem{Dupont2008c}
Dupont, L., Lazard, D., Lazard, S., \& Petitjean, S. (2008). Near-optimal parameterization of the intersection of quadrics: II. A classification of pencils. Journal of Symbolic Computation 43 (3), 192--215. http://doi.org/10.1016/j.jsc.2007.10.012

\bibitem{Dupont2008c3}
Dupont, L., Lazard, D., Lazard, S., \& Petitjean, S. (2008). Near-optimal parameterization of the intersection of quadrics: III. Parameterizing singular intersections. Journal of Symbolic Computation 43 (3), 216--232. http://doi.org/10.1016/j.jsc.2007.10.007

\bibitem{Farouki1989}
Farouki, R. T., Neff, C., \& O’Conner, M. A. (1989). Automatic parsing of degenerate quadric-surface intersections. ACM Transactions on Graphics 8 (3), 174--203. http://doi.org/10.1145/77055.77058

\bibitem{Geismann:2001:CCA:378583.378689}
Geismann, N., Hemmer, M., Sch{\"o}mer, E. (2001). Computing a 3-dimensional cell in an arrangement of quadrics:  Exactly and actually!, in:  Proceedings of the Seventeenth Annual Symposium on ComputationalGeometry, SCG ’01, ACM, New York, NY, USA, pp. 264–-273.  doi:10.1145/378583.378689.

\bibitem{Goldman:1991:CAR:112515.112545}
Goldman, R. N., \& Miller, J. R. (1991). Combining algebraic rigor with geometric robustness for the detection and calculation of conic sections in the intersection of two natural quadric surfaces, in: Proceedings of the First ACM Symposium on Solid Modeling Foundations and CAD/CAM Application, SMA '91, ACM, New York, NY, USA, pp. 221--231. http://doi.acm.org/10.1145/112515.112545

\bibitem{Gonzalez-Vega2009}
 Gonzalez-Vega, L.,  R\'ua, I. (2009), Solving the implicitization, inversion and reparametrization problems for rational curves through subresultants,  Computer  Aided  Geometric  Design  26  (9) 941–-961. https://doi.org/10.1016/j.cagd.2009.07.003

\bibitem{JOHNSTONE1992179}
Johnstone, J. K., \& Shene, C. K. (1992). Computing the intersection of a plane and a natural quadric. Computers and Graphics 16 (2), 179--186. http://doi.org/10.1016/0097-8493(92)90045-W

\bibitem{LAZARD200674}
Lazard, S., Peñaranda, L. M.,\& Petitjean, S. (2006). Intersecting quadrics: An efficient and exact implementation. In Computational Geometry 35 (1), 74--99. http://doi.org/10.1016/j.comgeo.2005.10.004

\bibitem{LEVIN197973}
Levin, J. Z. (1979). Mathematical models for determining the intersections of quadric surfaces. Computer Graphics and Image Processing 11 (1), 73--87. http://doi.org/10.1016/0146-664X(79)90077-7

\bibitem{Levin:1976:PAD:360349.360355}
Levin, J. Z. (1976). A parametric algorithm for drawing pictures of solid objects composed of quadric surfaces. Communications of the ACM 19 (10), 555-563. http://doi.org/10.1145/360349.360355

\bibitem{Li2006a}
Li, Y. Bin. (2006). A new approach for constructing subresultants. Applied Mathematics and Computation 183 (1), 471--476. http://doi.org/10.1016/j.amc.2006.05.120

\bibitem{Miller1987}
Miller, J. R. (1987). Geometric approaches to nonplanar quadric surface intersection curves. ACM Transactions on Graphics 6 (4), 274--307. http://doi.org/10.1145/35039.35041

\bibitem{MILLER199555}
Miller, J. R., \& Goldman, R. N. (1995). Geometric Algorithms for Detecting and Calculating All Conic Sections in the Intersection of Any 2 Natural Quadric Surfaces. Graphical Models and Image Processing 57 (1), 55-66. http://doi.org/10.1006/gmip.1995.1006

\bibitem{Mourrain2005}
Mourrain, B., T\'ecourt, J. P., \& Teillaud, M. (2005). On the computation of an arrangement of quadrics in 3D. In Computational Geometry: Theory and Applications 30 (2), 145--164. http://doi.org/10.1016/j.comgeo.2004.05.003

\bibitem{Schomer2006}
Schomer, E., \& Wolpert, N. (2006). An exact and efficient approach for computing a cell in an arrangement of quadrics. Computational Geometry: Theory and Applications 33 (1--2), 65--97. http://doi.org/10.1016/j.comgeo.2004.02.007

\bibitem{Sendra1999}
Sendra, J. R., Winkler, F. (1999). Algorithms for rational real algebraic curves. Fundamenta Informaticae 39 (1--2), 211--228.

\bibitem{Shene:1994:LDI:195826.197316}
Shene, C.-K., \& Johnstone, J. K. (1994). On the lower degree intersections of two natural quadrics. ACM Transactions on Graphics 13 (4), 400--424. http://doi.org/10.1145/195826.197316

\bibitem{WANG2003401}
Wang, W., Goldman, R., \& Tu, C. (2003). Enhancing Levin's method for computing quadric--surface intersections. Computer Aided Geometric Design 20 (7), 401--422. http://doi.org/10.1016/S0167-8396(03)00081-5

\bibitem{Wang2002}
Wang, W., Joe, B., \& Goldman, R. (2002). Computing quadric surface intersections based on an analysis of plane cubic curves. Graphical Models 64 (6), 335--367. http://doi.org/10.1016/S1077-3169(02)00018-7

\bibitem{Wilf1993}
Wilf, I., \& Manor, Y. (1993). Quadric-surface intersection curves: shape and structure. Computer-Aided Design 25 (10), 633--643. http://doi.org/10.1016/0010-4485(93)90018-J

\end{thebibliography}

\section{Annex}

{\scriptsize
\begin{longtable}[c]{| c | c |}
 \hline
\textbf{Example} & \textbf{Quadrics} \\
 \hline
 \endfirsthead
 \hline
 \textbf{Example} & \textbf{Quadrics}\\
 \hline
 \endhead
\multirow{2}{*}{Example 1} & ${z}^{2}+ \left( 2\,x-y+4 \right) z+12\,{x}^{2}-24\,xy+11\,{y}^{2}+20\, x-18\,y+12$\\ 
                           &${z}^{2}+ \left( -{\frac {26\,x}{9}}+5/9\,y+{\frac{20}{9}} \right) z+4/
3\,{x}^{2}-1/9\,{y}^{2}-{\frac {20\,x}{9}}+2/9\,y+4/3$\\ \hline
\multirow{2}{*}{Example 2} & ${z}^{2}+ \left( -x+3 \right) z+{x}^{2}-3/2\,xy+1/2\,{y}^{2}-x-y/2+2$\\ 
                           &${z}^{2}-8\,{x}^{2}+18\,xy-4\,xz-7\,{y}^{2}-12\,x+10\,y-4$\\ \hline
\multirow{2}{*}{Example 3} & ${z}^{2}+ \left( -8/5\,x+{\frac{12}{5}} \right) z-{\frac {12\,{x}^{2}}{
5}}+{\frac {28\,xy}{5}}-{\frac {13\,{y}^{2}}{5}}-{\frac {32\,x}{5}}+{
\frac {24\,y}{5}}-4/5
$\\ 
                           &${z}^{2}+ \left( -{\frac {16\,x}{25}}+{\frac{84}{25}} \right) z-{\frac 
{28\,{x}^{2}}{25}}+{\frac {12\,xy}{5}}-{\frac {27\,{y}^{2}}{25}}-{
\frac {112\,x}{25}}+{\frac {48\,y}{25}}+{\frac{52}{25}}
$\\ \hline
\multirow{2}{*}{Example 4} & ${z}^{2}+ \left( -{\frac {88\,x}{43}}+{\frac{84}{43}} \right) z+{\frac 
{60\,{x}^{2}}{43}}-{\frac {28\,xy}{43}}+{\frac {13\,{y}^{2}}{43}}-{
\frac {64\,x}{43}}-{\frac {24\,y}{43}}+{\frac{52}{43}}$\\ 
                           & ${z}^{2}+ \left( -{\frac {80\,x}{23}}+{\frac{12}{23}} \right) z+{\frac 
{76\,{x}^{2}}{23}}-{\frac {60\,xy}{23}}+{\frac {27\,{y}^{2}}{23}}+{
\frac {16\,x}{23}}-{\frac {48\,y}{23}}-{\frac{4}{23}}$\\ \hline
\multirow{2}{*}{Example 5} & ${z}^{2}+ \left( -2\,x+2 \right) z+3/2\,xy-1/2\,{y}^{2}-3\,x+y/2+1$\\ 
                           & ${z}^{2}+ \left( -{\frac {16\,x}{7}}+{\frac{12}{7}} \right) z+{\frac {
10\,xy}{7}}-3/7\,{y}^{2}-{\frac {20\,x}{7}}+2/7\,y+4/7
$\\ \hline
\multirow{2}{*}{Example 6} & ${z}^{2}+ \left( -2\,x+2 \right) z+3/2\,xy-1/2\,{y}^{2}-3\,x+y/2+1$\\ 
                           & ${z}^{2}+ \left( -{\frac {32\,x}{15}}+{\frac{28}{15}} \right) z+{\frac 
{8\,{x}^{2}}{15}}+2/3\,xy-1/5\,{y}^{2}-{\frac {12\,x}{5}}+2/15\,y+4/5$\\ \hline
\multirow{2}{*}{Example 7} & ${z}^{2}+ \left( -2\,x+2 \right) z+3/2\,xy-1/2\,{y}^{2}-3\,x+y/2+1$\\ 
                           & ${z}^{2}+ \left( -{\frac {16\,x}{9}}+{\frac{20}{9}} \right) z+{\frac {
10\,xy}{9}}-1/3\,{y}^{2}-{\frac {20\,x}{9}}+2/9\,y+4/3$\\ \hline
\multirow{2}{*}{Example 8} & ${z}^{2}+ \left( -3\,x+2\,y+1 \right) z+4\,{x}^{2}-4\,xy+1/2\,{y}^{2}-2
\,x+2\,y$\\ 
                           & ${z}^{2}+ \left( -{\frac {30\,x}{13}}+{\frac {4\,y}{13}}+{\frac{22}{13}
} \right) z+{\frac {16\,{x}^{2}}{13}}-{\frac {4\,xy}{13}}-{\frac {28\,
x}{13}}+{\frac {4\,y}{13}}+{\frac{8}{13}}
$\\ \hline
\multirow{2}{*}{Example 9} & ${z}^{2}+ \left( -8/3\,x+4/3 \right) z+8/3\,{x}^{2}-2\,xy+2/3\,{y}^{2}-
4/3\,x-2/3\,y
$\\ 
                           & ${z}^{2}+ \left( -{\frac {12\,x}{7}}+{\frac{16}{7}} \right) z+{\frac {
16\,{x}^{2}}{7}}-2\,xy+4/7\,{y}^{2}-{\frac {12\,x}{7}}-2/7\,y+{\frac{8
}{7}}
$\\ \hline
\multirow{2}{*}{Example 10} & ${z}^{2}+ \left( -x+3 \right) z-2\,{x}^{2}+3\,xy-{y}^{2}-4\,x+y+2$\\ 
                           & ${z}^{2}+ \left( -2\,x+2 \right) z+{\frac {4\,{x}^{2}}{13}}+{\frac {10
\,xy}{13}}-3/13\,{y}^{2}-{\frac {32\,x}{13}}+2/13\,y+{\frac{12}{13}}$\\ \hline
\multirow{2}{*}{Example 11} & ${z}^{2}+ \left( -2\,x+2 \right) z+3/2\,xy-1/2\,{y}^{2}-3\,x+y/2+1$\\ 
                           & ${z}^{2}+4\,z+8\,{x}^{2}-10\,xy+3\,{y}^{2}+4\,x-2\,y+4$\\ \hline
\multirow{2}{*}{Example 12} & ${z}^{2}+ \left( -4\,x+2\,y+2 \right) z+2\,{x}^{2}-xy-6\,x+3\,y$\\ 
                           & ${z}^{2}+ \left( -2\,x-y+6 \right) z-2\,{x}^{2}+3\,xy-10\,x+y+4$\\ \hline
\multirow{2}{*}{Example 13} & ${z}^{2}+ \left( -2\,x+2 \right) z+3\,{x}^{2}-3\,xy+5/4\,{y}^{2}-2\,y+2$\\ 
                           & ${z}^{2}+ \left( -{\frac {12\,x}{5}}+8/5 \right) z+2\,{x}^{2}-6/5\,xy+1
/2\,{y}^{2}-8/5\,x-4/5\,y+4/5
$\\ \hline
\multirow{2}{*}{Example 14} & ${z}^{2}+ \left( -2\,x+2 \right) z+{x}^{2}-xy+3/4\,{y}^{2}-2\,y+2$\\ 
                           & ${z}^{2}+ \left( -4/3\,x+8/3 \right) z+2/3\,{x}^{2}+2/3\,xy-1/2\,{y}^{2
}-8/3\,x+4/3\,y+4/3$\\ \hline
\multirow{2}{*}{Example 15} & ${z}^{2}+ \left( -2\,x+2 \right) z+1/2\,{y}^{2}-2\,y+2$\\ 
                           & ${z}^{2}+ \left( -4/3\,x+8/3 \right) z+4/3\,{x}^{2}-1/3\,{y}^{2}-8/3\,x
+4/3\,y+4/3$\\ \hline
\multirow{2}{*}{Example 16} & ${z}^{2}+ \left( -8/3\,x+4/3 \right) z-4/3\,{x}^{2}+4\,xy-5/3\,{y}^{2}-
16/3\,x+8/3\,y-4/3$\\ 
                           & ${z}^{2}+ \left( -{\frac {16\,x}{11}}+{\frac{28}{11}} \right) z-4/11\,{
x}^{2}+{\frac {12\,xy}{11}}-{\frac {5\,{y}^{2}}{11}}-{\frac {32\,x}{11
}}+{\frac {8\,y}{11}}+{\frac{12}{11}}$\\ \hline
\multirow{2}{*}{Example 17} & ${z}^{2}+ \left( -{\frac {32\,x}{15}}+{\frac{28}{15}} \right) z+{\frac 
{8\,{x}^{2}}{15}}+4/5\,xy-1/3\,{y}^{2}-8/3\,x+{\frac {8\,y}{15}}+{
\frac{8}{15}}$\\ 
                           & ${z}^{2}+ \left( -{\frac {16\,x}{13}}-{\frac {32\,y}{39}}+{\frac{36}{13
}} \right) z+{\frac {8\,{x}^{2}}{39}}+{\frac {28\,xy}{39}}+1/13\,{y}^{
2}-{\frac {24\,x}{13}}-{\frac {40\,y}{39}}+{\frac{24}{13}}$\\ \hline
\multirow{2}{*}{Example 18} & ${z}^{2}+ \left( -8/3\,x+4/3 \right) z-4/3\,{x}^{2}+4\,xy-5/3\,{y}^{2}-
16/3\,x+8/3\,y-4/3$\\ 
                           & ${z}^{2}+ \left( -{\frac {24\,x}{7}}+{\frac {32\,y}{21}}+4/7 \right) z+
{\frac {52\,{x}^{2}}{21}}-4/3\,xy-1/7\,{y}^{2}-{\frac {16\,x}{7}}+{
\frac {40\,y}{21}}-4/7$\\ \hline
\multirow{2}{*}{Example 19} & ${z}^{2}+ \left( -8/3\,x+4/3 \right) z+4/3\,xy-1/3\,{y}^{2}-8/3\,x$\\ 
                           & ${z}^{2}+ \left( -8/5\,x+{\frac{12}{5}} \right) z+{\frac {12\,xy}{5}}-7
/5\,{y}^{2}-{\frac {24\,x}{5}}+{\frac {16\,y}{5}}$\\ \hline
\multirow{2}{*}{Example 20} & ${z}^{2}+ \left( -{\frac {24\,x}{11}}+{\frac{20}{11}} \right) z+{\frac 
{8\,{x}^{2}}{11}}+4/11\,xy-1/11\,{y}^{2}-{\frac {24\,x}{11}}+{\frac{8}
{11}}$\\ 
                           & ${z}^{2}+ \left( -8/3\,x+4/3 \right) z+8/3\,{x}^{2}-4\,xy+7/3\,{y}^{2}+
8/3\,x-16/3\,y+8/3$\\ \hline
\multirow{2}{*}{Example 21} & ${z}^{2}+ \left( -{\frac {24\,x}{11}}+{\frac{20}{11}} \right) z+{\frac 
{8\,{x}^{2}}{11}}+4/11\,xy-1/11\,{y}^{2}-{\frac {24\,x}{11}}+{\frac{8}
{11}}$\\ 
                           & ${z}^{2}+ \left( -{\frac {504\,x}{253}}+{\frac{508}{253}} \right) z+{
\frac {248\,{x}^{2}}{253}}+{\frac {12\,xy}{253}}-{\frac {7\,{y}^{2}}{
253}}-{\frac {520\,x}{253}}+{\frac {16\,y}{253}}+{\frac{248}{253}}$\\ \hline
\multirow{2}{*}{Example 22} & ${z}^{2}-4\,xz+2\,xy-1/2\,{y}^{2}-4\,x-2$\\ 
                           & ${z}^{2}-4\,xz-2\,xy+5/2\,{y}^{2}+4\,x-8\,y+2$\\ \hline
\multirow{2}{*}{Example 23} & ${z}^{2}+ \left( -{\frac {24\,x}{11}}+{\frac{20}{11}} \right) z+{\frac 
{8\,{x}^{2}}{11}}+4/11\,xy-1/11\,{y}^{2}-{\frac {24\,x}{11}}+{\frac{8}
{11}}$\\ 
                           & ${z}^{2}+ \left( -{\frac {24\,x}{11}}+{\frac{20}{11}} \right) z+{\frac 
{8\,{x}^{2}}{11}}-4/11\,xy+{\frac {5\,{y}^{2}}{11}}-{\frac {8\,x}{11}}
-{\frac {16\,y}{11}}+{\frac{16}{11}}$\\ \hline
\multirow{2}{*}{Example 24} & ${z}^{2}+ \left( -{\frac {24\,x}{11}}+{\frac{20}{11}} \right) z+{\frac 
{8\,{x}^{2}}{11}}+4/11\,xy-1/11\,{y}^{2}-{\frac {24\,x}{11}}+{\frac{8}
{11}}$\\ 
                           & ${z}^{2}+ \left( -{\frac {24\,x}{11}}+{\frac{20}{11}} \right) z-{\frac 
{248\,{x}^{2}}{11}}+{\frac {252\,xy}{11}}-{\frac {59\,{y}^{2}}{11}}-{
\frac {8\,x}{11}}-{\frac {16\,y}{11}}+{\frac{16}{11}}
$\\ \hline
\multirow{2}{*}{Example 25} & ${z}^{2}+ \left( -x+3 \right) z+2\,{x}^{2}-{y}^{2}-6\,x+4\,y$\\ 
                           & ${z}^{2}+ \left( -6\,x-2 \right) z+2\,{y}^{2}+4\,x-8\,y$\\\hline
\multirow{2}{*}{Example 26} & ${z}^{2}+ \left( -{\frac {88\,x}{43}}+{\frac{84}{43}} \right) z+{\frac 
{52\,{x}^{2}}{43}}-{\frac {4\,xy}{43}}-1/43\,{y}^{2}-{\frac {96\,x}{43
}}+{\frac {8\,y}{43}}+{\frac{36}{43}}$\\ 
                           & ${z}^{2}+ \left( -{\frac {32\,x}{17}}+{\frac{36}{17}} \right) z+{\frac 
{52\,{x}^{2}}{51}}-{\frac {4\,xy}{51}}-{\frac {{y}^{2}}{51}}-{\frac {
112\,x}{51}}+{\frac {8\,y}{51}}+{\frac{52}{51}}$\\\hline
\multirow{2}{*}{Example 27} & ${z}^{2}+ \left( -{\frac {88\,x}{43}}+{\frac{84}{43}} \right) z+{\frac 
{52\,{x}^{2}}{43}}-{\frac {4\,xy}{43}}-1/43\,{y}^{2}-{\frac {96\,x}{43
}}+{\frac {8\,y}{43}}+{\frac{36}{43}}$\\ 
                           & ${z}^{2}+ \left( -{\frac {24\,x}{5}}-4/5 \right) z+{\frac {28\,{x}^{2}
}{5}}-4/5\,xy-1/5\,{y}^{2}-{\frac {16\,x}{5}}+8/5\,y-{\frac{12}{5}}$\\ \hline
\multirow{2}{*}{Example 28} & ${z}^{2}+4\,z-4\,{x}^{2}+6\,xy-2\,{y}^{2}-4\,x+2\,y+4$\\ 
                           & ${z}^{2}+ \left( -8/5\,x+{\frac{12}{5}} \right) z-4/5\,{x}^{2}+{\frac {
14\,xy}{5}}-6/5\,{y}^{2}-4\,x+2\,y+4/5$\\ \hline
\multirow{2}{*}{Example 29} & ${z}^{2}+4\,z-4\,{x}^{2}+6\,xy-2\,{y}^{2}-4\,x+2\,y+4$\\ 
                           & ${z}^{2}+ \left( -{\frac {16\,x}{9}}+{\frac{20}{9}} \right) z+{\frac {
14\,xy}{9}}-2/3\,{y}^{2}-{\frac {28\,x}{9}}+{\frac {10\,y}{9}}+{\frac{
8}{9}}$\\ \hline
\multirow{2}{*}{Example 30} & ${z}^{2}+4\,z+8\,{x}^{2}-12\,xy+5\,{y}^{2}+8\,x-8\,y+8$\\ 
                           & ${z}^{2}+ \left( -{\frac {16\,x}{7}}+{\frac{12}{7}} \right) z+{\frac {
12\,xy}{7}}-5/7\,{y}^{2}-{\frac {24\,x}{7}}+{\frac {8\,y}{7}}$\\ \hline
\multirow{2}{*}{Example 31} & ${z}^{2}+ \left( -8/3\,x+4/3 \right) z+4/3\,xy-1/3\,{y}^{2}-8/3\,x$\\ 
                           & ${z}^{2}+ \left( -8/3\,x+4/3 \right) z+8/3\,{x}^{2}-4\,xy+7/3\,{y}^{2}+
8/3\,x-16/3\,y+8/3$\\ \hline
\multirow{2}{*}{Example 32} & ${z}^{2}+ \left( -{\frac {24\,x}{11}}+{\frac{20}{11}} \right) z+{\frac 
{8\,{x}^{2}}{11}}+4/11\,xy-1/11\,{y}^{2}-{\frac {24\,x}{11}}+{\frac{8}
{11}}$\\ 
                           & ${z}^{2}+ \left( -{\frac {24\,x}{11}}+{\frac{20}{11}} \right) z+{\frac 
{16\,{x}^{2}}{11}}-{\frac {12\,xy}{11}}+{\frac {7\,{y}^{2}}{11}}-{
\frac {8\,x}{11}}-{\frac {16\,y}{11}}+{\frac{16}{11}}
$\\ \hline
\multirow{2}{*}{Example 33} & ${z}^{2}+ \left( -8/3\,x+4/3 \right) z+4/3\,xy-1/3\,{y}^{2}-8/3\,x$\\ 
                           & ${z}^{2}+ \left( -8\,x+16/3\,y-4 \right) z+16/3\,{x}^{2}-4\,xy-1/3\,{y}
^{2}-8/3\,x+16/3\,y-16/3$\\ \hline
\multirow{2}{*}{Example 34} & ${z}^{2}+ \left( -{\frac {24\,x}{11}}+{\frac{20}{11}} \right) z+{\frac 
{8\,{x}^{2}}{11}}+4/11\,xy-1/11\,{y}^{2}-{\frac {24\,x}{11}}+{\frac{8}
{11}}$\\ 
                           & ${z}^{2}+ \left( -{\frac {40\,x}{11}}+{\frac {16\,y}{11}}+4/11 \right) 
z+{\frac {24\,{x}^{2}}{11}}-{\frac {12\,xy}{11}}-1/11\,{y}^{2}-{\frac 
{24\,x}{11}}+{\frac {16\,y}{11}}-{\frac{8}{11}}
$\\ \hline
\multirow{2}{*}{Example 35} & ${z}^{2}+ \left( -8/3\,x+4/3 \right) z+4/3\,xy-1/3\,{y}^{2}-8/3\,x$\\ 
                           & ${z}^{2}+ \left( -8\,x+16/3\,y-4 \right) z+16/3\,{x}^{2}-4\,xy-1/3\,{y}
^{2}-8/3\,x+16/3\,y-16/3$\\ \hline
\multirow{2}{*}{Example 36} & ${z}^{2}+ \left( -{\frac {24\,x}{11}}+{\frac{20}{11}} \right) z+4/11\,{
x}^{2}+{\frac {8\,xy}{11}}-2/11\,{y}^{2}-{\frac {24\,x}{11}}+{\frac{8}
{11}}$\\ 
                           & ${z}^{2}+ \left( -{\frac {24\,x}{11}}+{\frac{20}{11}} \right) z-{\frac 
{12\,{x}^{2}}{11}}+{\frac {32\,xy}{11}}-{\frac {10\,{y}^{2}}{11}}-{
\frac {40\,x}{11}}+{\frac {8\,y}{11}}+{\frac{8}{11}}$\\ \hline
\multirow{2}{*}{Example 37} & ${z}^{2}+ \left( -5/2\,x+y/2+3/2 \right) z+2\,{x}^{2}-xy+1/4\,{y}^{2}-3
\,x+y$\\ 
                           & ${z}^{2}+ \left( -7/2\,x+y+3/2 \right) z+2\,{x}^{2}-xy+1/4\,{y}^{2}-5\,
x+2\,y$\\ \hline
\multirow{2}{*}{Example 38} &${z}^{2}+ \left( -3/2\,x-y/2+5/2 \right) z+xy-1/4\,{y}^{2}-x-y+2$\\ 
                           & ${z}^{2}+ \left( -x/2-y+5/2 \right) z+xy-1/4\,{y}^{2}+x-2\,y+2$\\ \hline
\multirow{2}{*}{Example 39} &${z}^{2}+ \left( -7/4\,x-y/4+9/4 \right) z+1/2\,{x}^{2}+1/2\,xy-1/8\,{y
}^{2}-3/2\,x-y/2+3/2$\\ 
                           & ${z}^{2}+ \left( -5/4\,x-y/2+9/4 \right) z+1/2\,{x}^{2}+1/2\,xy-1/8\,{y
}^{2}-x/2-y+3/2$\\ \hline
\multirow{2}{*}{Example 40} &${z}^{2}+ \left( -8/5\,x+{\frac{12}{5}} \right) z-4/5\,{x}^{2}+{\frac {
12\,xy}{5}}-4/5\,{y}^{2}-{\frac {16\,x}{5}}+4/5\,y+8/5$\\ 
                           & ${z}^{2}+ \left( -{\frac {16\,x}{13}}+{\frac{36}{13}} \right) z-{\frac 
{20\,{x}^{2}}{13}}+{\frac {36\,xy}{13}}-{\frac {14\,{y}^{2}}{13}}-{
\frac {48\,x}{13}}+{\frac {20\,y}{13}}+{\frac{16}{13}}$\\ \hline
\multirow{2}{*}{Example 41} &${z}^{2}+ \left( -{\frac {32\,x}{17}}+{\frac{36}{17}} \right) z+{\frac 
{32\,xy}{17}}-{\frac {13\,{y}^{2}}{17}}-{\frac {64\,x}{17}}+{\frac {20
\,y}{17}}+{\frac{16}{17}}
$\\ 
                           & ${z}^{2}+ \left( -8/3\,x+4/3 \right) z+{\frac {32\,{x}^{2}}{9}}-{\frac 
{8\,xy}{9}}+1/9\,{y}^{2}-{\frac {32\,x}{9}}+4/9\,y+{\frac{8}{9}}$\\ \hline
\multirow{2}{*}{Example 42} &${z}^{2}+ \left( -{\frac {24\,x}{13}}+{\frac{28}{13}} \right) z+{\frac 
{24\,xy}{13}}-{\frac {10\,{y}^{2}}{13}}-{\frac {48\,x}{13}}+{\frac {16
\,y}{13}}+{\frac{12}{13}}$\\ 
                           & ${z}^{2}+ \left( -{\frac {16\,x}{5}}+4/5 \right) z+{\frac {24\,{x}^{2}
}{5}}-8/5\,xy+2/5\,{y}^{2}-{\frac {16\,x}{5}}+4/5$\\ \hline
\multirow{2}{*}{Example 43} &${z}^{2}+ \left( -{\frac {128\,x}{63}}+{\frac{124}{63}} \right) z+{
\frac {176\,{x}^{2}}{63}}-{\frac {8\,xy}{21}}-{\frac {13\,{y}^{2}}{63}
}-{\frac {304\,x}{63}}+{\frac {76\,y}{63}}+{\frac{8}{7}}$\\ 
                           & ${z}^{2}+ \left( -{\frac {136\,x}{71}}+{\frac{148}{71}} \right) z+{
\frac {368\,{x}^{2}}{71}}-{\frac {208\,xy}{71}}+{\frac {29\,{y}^{2}}{
71}}-{\frac {336\,x}{71}}+{\frac {92\,y}{71}}+{\frac{80}{71}}$\\ \hline
\multirow{2}{*}{Example 44} &${z}^{2}+ \left( -{\frac {16\,x}{7}}+{\frac{12}{7}} \right) z+{\frac {
16\,xy}{7}}-5/7\,{y}^{2}-{\frac {32\,x}{7}}+4/7\,y+{\frac{8}{7}}$\\ 
                           & ${z}^{2}+ \left( -8/5\,x+{\frac{12}{5}} \right) z+{\frac {16\,{x}^{2}}{
15}}+{\frac {8\,xy}{15}}-{\frac {7\,{y}^{2}}{15}}-{\frac {64\,x}{15}}+
4/3\,y+{\frac{16}{15}}$\\ \hline
\multirow{2}{*}{Example 45} &${z}^{2}+ \left( -{\frac {24\,x}{11}}+{\frac{20}{11}} \right) z+{\frac 
{24\,xy}{11}}-{\frac {48\,x}{11}}-{\frac {8\,{y}^{2}}{11}}+{\frac {8\,
y}{11}}+{\frac{12}{11}}$\\ 
                           & ${z}^{2}+ \left( -{\frac {32\,x}{19}}+{\frac{44}{19}} \right) z+{\frac 
{24\,{x}^{2}}{19}}+{\frac {8\,xy}{19}}-{\frac {80\,x}{19}}-{\frac {8\,
{y}^{2}}{19}}+{\frac {24\,y}{19}}+{\frac{20}{19}}$\\ \hline
\multirow{2}{*}{Example 46} &${z}^{2}+ \left( -{\frac {128\,x}{65}}+{\frac{132}{65}} \right) z-{
\frac {64\,{x}^{2}}{65}}+{\frac {112\,xy}{65}}-{\frac {29\,{y}^{2}}{65
}}-{\frac {96\,x}{65}}+{\frac {4\,y}{65}}+{\frac{48}{65}}$\\ 
                           & ${z}^{2}+ \left( -{\frac {40\,x}{19}}+{\frac{36}{19}} \right) z-{\frac 
{128\,{x}^{2}}{57}}+{\frac {56\,xy}{19}}-{\frac {13\,{y}^{2}}{19}}-{
\frac {64\,x}{57}}-{\frac {4\,y}{19}}+{\frac{40}{57}}$\\ \hline
\multirow{2}{*}{Example 47} &${z}^{2}+ \left( -{\frac {32\,x}{17}}+{\frac{36}{17}} \right) z-{\frac 
{16\,{x}^{2}}{17}}+{\frac {24\,xy}{17}}-{\frac {7\,{y}^{2}}{17}}-{
\frac {16\,x}{17}}+{\frac {4\,y}{17}}+{\frac{8}{17}}$\\ 
                           & ${z}^{2}+ \left( -8/3\,x+4/3 \right) z-16/3\,{x}^{2}+16/3\,xy-{y}^{2}+{
\frac {16\,x}{9}}-4/3\,y$\\ \hline
\multirow{2}{*}{Example 48} &${z}^{2}+ \left( -{\frac {32\,x}{17}}+{\frac{36}{17}} \right) z-{\frac 
{16\,{x}^{2}}{17}}+{\frac {24\,xy}{17}}-{\frac {7\,{y}^{2}}{17}}-{
\frac {16\,x}{17}}+{\frac {4\,y}{17}}+{\frac{8}{17}}$\\ 
                           & ${z}^{2}+ \left( -8/3\,x+4/3 \right) z-16/3\,{x}^{2}+16/3\,xy-{y}^{2}+{
\frac {16\,x}{9}}-4/3\,y$\\ \hline
\multirow{2}{*}{Example 49} &${z}^{2}+ \left( -{\frac {128\,x}{65}}+{\frac{132}{65}} \right) z-{
\frac {56\,{x}^{2}}{13}}+{\frac {288\,xy}{65}}-{\frac {64\,{y}^{2}}{65
}}-{\frac {16\,x}{65}}-{\frac {32\,y}{65}}+{\frac{44}{65}}$\\ 
                           & ${z}^{2}+ \left( -{\frac {40\,x}{17}}+{\frac{28}{17}} \right) z-{\frac 
{16\,{x}^{2}}{17}}+{\frac {32\,xy}{17}}-{\frac {7\,{y}^{2}}{17}}-{
\frac {16\,x}{17}}-{\frac {4\,y}{17}}+{\frac{8}{17}}$\\ \hline
\multirow{2}{*}{Example 50} &${z}^{2}+ \left( -2/3\,x+2/3\,y \right) z+1/3\,{x}^{2}+1/3\,{y}^{2}-1/3$\\ 
                           & ${z}^{2}+ \left( -2/17\,x+{\frac {24\,y}{17}}-2/17 \right) z+1/17\,{x}^
{2}+2/17\,x-{\frac{3}{17}}+{\frac {12\,{y}^{2}}{17}}$\\ \hline
\end{longtable}}

\end{document}